\documentclass{article}
\usepackage[T1]{fontenc}

\usepackage{microtype}
\usepackage{graphicx}
\usepackage{subcaption}
\usepackage{float}
\usepackage{booktabs} 

\usepackage{hyperref}

\usepackage[noend]{algorithmic}

\makeatletter
\newcommand{\linelabel}[1]{%
  \protected@write\@auxout{}{%
    \string\newlabel{#1}{{\arabic{ALC@line}}{\thepage}}%
  }%
}
\makeatother

\usepackage[accepted]{icml2026}

\usepackage{amsmath}
\usepackage{amssymb}
\usepackage{mathtools}
\usepackage{amsthm}
\newtheorem{definition}{Definition}

\usepackage[capitalize,noabbrev]{cleveref}

\theoremstyle{plain}
\newtheorem{theorem}{Theorem}[section]

\newtheorem{lemma}[theorem]{Lemma}
\newtheorem{corollary}[theorem]{Corollary}
\theoremstyle{definition}

\theoremstyle{remark}

\usepackage[disable,textsize=tiny]{todonotes}

\usepackage{pgfplots}
\usepgfplotslibrary{fillbetween}
\usepackage{tikz}
\usetikzlibrary{shapes.geometric}

\newcommand{\expscale}{0.7}
\newcommand{\mymarksize}{1.8}

\newcommand{\mydir}{plot}

\definecolor{oiorange}{RGB}{230,159,0}
\definecolor{oiskyblue}{RGB}{86,180,233}
\definecolor{oigreen}{RGB}{0,158,115}
\definecolor{oiblue}{RGB}{0,114,178}
\definecolor{oivermilion}{RGB}{213,94,0}
\definecolor{oirpurple}{RGB}{204,121,167}
\definecolor{oired}{RGB}{183,28,28}

\tikzset{lru/.style={sharp plot, black, thin, opacity=0.7, dashed}}
\tikzset{marker/.style={sharp plot, black, thin, opacity=0.7, densely dotted}}
\tikzset{onlinemin/.style={sharp plot, black!60, thin, opacity=0.7, dashdotted}}

\tikzset{belady/.style={sharp plot, semithick, solid, color=oivermilion, mark=*, mark size=\mymarksize, mark options={fill=oivermilion}}}
\tikzset{combdet/.style={sharp plot, semithick, solid, color=oirpurple, mark=square*, mark size=\mymarksize, mark options={fill=oirpurple}}}
\tikzset{predmarker/.style={sharp plot, semithick, dashed, color=oiskyblue, mark=triangle*, mark size=\mymarksize, mark options={fill=oiskyblue}}}
\tikzset{lmarker/.style={sharp plot, semithick, dashed, color=oigreen, mark=diamond*, mark size=\mymarksize, mark options={fill=oigreen}}}
\tikzset{fr/.style={sharp plot, semithick, solid, color=oiorange, mark=pentagon*, mark size=\mymarksize, mark options={fill=oiorange}}}

\tikzset{onoptom/.style={sharp plot, semithick, solid, color=oired, mark=diamond*, mark size=\mymarksize*1.3, mark options={fill=oired}}}
\tikzset{rpbomzero/.style={sharp plot, semithick, solid, color=oiblue, mark=diamond*, mark size=\mymarksize, mark options={fill=oiblue}}}
\tikzset{rpbomone/.style={sharp plot, semithick, solid, color=oiblue, mark=square*, mark size=\mymarksize, mark options={fill=oiblue}}}
\tikzset{rpbomtwo/.style={sharp plot, semithick, solid, color=oiblue, mark=triangle*, mark size=\mymarksize*1.3, mark options={fill=oiblue}}}
\tikzset{rpbomfour/.style={sharp plot, semithick, solid, color=oiblue, mark=otimes*, mark size=\mymarksize*1.2, mark options={fill=oiblue}}}

\newcommand{\oraclelogreuselegendexguard}[6] {
    \begin{tikzpicture}[scale=\expscale]
        \begin{axis}[
            xlabel style={font=\normalsize},
            ylabel style={font=\normalsize, at={(0.05,0.5)}},
            xlabel={Log-normal noise parameter $\sigma$.},
            ylabel={Cost Ratio},
            xmin=0,
            xmax=31,
            ymin=1,
            ymax=#5,
            xtick style={draw=none},
            ytick style={draw=none},
            legend style={
                at={(1.02,0.5)},
                anchor=west,
                legend columns=1,
                column sep=1ex,
                draw=none,
                font=\small
            },
            legend cell align=left,
            clip=true
        ]

        \addplot[domain=0:30, lru] {#2};
        \addlegendentry{\textsc{LRU}}

        \addplot[domain=0:30, marker] {#3};
        \addlegendentry{\textsc{Marker}}

        \addplot[domain=0:30, onlinemin] {#4};
        \addlegendentry{\textsc{OnlineMin}}

        \edef\ftpcsvpath{#6/#1/logdis/ftp.csv}
        \addplot[ belady]
        table [col sep=comma, x=x, y=y] {\ftpcsvpath};
        \addlegendentry{\textsc{BlindOracle}}

        \edef\combdetcsvpath{#6/#1/logdis/bo_lru.csv}
        \addplot[ combdet]
        table [col sep=comma, x=x, y=y] {\combdetcsvpath};
        \addlegendentry{\textsc{BlindOracle\&LRU}}

        \edef\predmarkercsvpath{#6/#1/logdis/predmarker.csv}
        \addplot[ predmarker]
        table [col sep=comma, x=x, y=y] {\predmarkercsvpath};
        \addlegendentry{\textsc{PredMarker}}

        \edef\lmarkercsvpath{#6/#1/logdis/lmarker.csv}
        \addplot[ lmarker]
        table [col sep=comma, x=x, y=y] {\lmarkercsvpath};
        \addlegendentry{\textsc{LMarker}}

        \edef\frcsvpath{#6/#1/logdis/fr.csv}
        \addplot[ fr]
        table [col sep=comma, x=x, y=y] {\frcsvpath};
        \addlegendentry{\textsc{F\&R}}

        \edef\onoptomcsvpath{#6/#1/logdis/onopt_om.csv}
        \addplot[ onoptom]
        table [col sep=comma, x=x, y=y] {\onoptomcsvpath};
        \addlegendentry{\textsc{OnOPT-OM}}

        \edef\rpbomonecsvpath{#6/#1/logdis/rpb_om_1.csv}
        \addplot[ rpbomone]
        table [col sep=comma, x=x, y=y] {\rpbomonecsvpath};
        \addlegendentry{\textsc{RPB-OM}($\tau\!=\!1$)}

        \edef\rpbomtwocsvpath{#6/#1/logdis/rpb_om_2.csv}
        \addplot[ rpbomtwo]
        table [col sep=comma, x=x, y=y] {\rpbomtwocsvpath};
        \addlegendentry{\textsc{RPB-OM}($\tau\!=\!2$)}

        \edef\rpbomfourcsvpath{#6/#1/logdis/rpb_om_4.csv}
        \addplot[ rpbomfour]
        table [col sep=comma, x=x, y=y] {\rpbomfourcsvpath};
        \addlegendentry{\textsc{RPB-OM}($\tau\!=\!4$)}

        \end{axis}
    \end{tikzpicture}
}

\icmltitlerunning{Towards Optimal Robustness in Learning-Augmented Paging}

\begin{document}

\twocolumn[
  \icmltitle{Towards Optimal Robustness in Learning-Augmented Paging}




  \begin{icmlauthorlist}
    \icmlauthor{Peng Chen}{zju}
    \icmlauthor{Hailiang Zhao}{zju}
    \icmlauthor{Xueyan Tang}{ntu}
    \icmlauthor{Yixuan Wang}{nuaa}
    \icmlauthor{Shuiguang Deng}{zju}
  \end{icmlauthorlist}

  \icmlaffiliation{zju}{Zhejiang University, Hangzhou, China}
  \icmlaffiliation{nuaa}{Nanjing University of Aeronautics and Astronautics, Nanjing, China}
  \icmlaffiliation{ntu}{Nanyang Technological University, Singapore}

  \icmlcorrespondingauthor{Hailiang Zhao}{hliangzhao@zju.edu.cn}
  \icmlcorrespondingauthor{Shuiguang Deng}{dengsg@zju.edu.cn}

  \icmlkeywords{Online Algorithms, Paging, Learning-Augmented Algorithms, Robustness}

  \vskip 0.3in
]



\printAffiliationsAndNotice{}  

\begin{abstract}
    Learning-augmented paging has been extensively studied in recent years. A key advantage over naive ML-based approaches is \emph{bounded robustness}, which guarantees worst-case performance even when predictions are inaccurate, making these algorithms valuable for real-world systems. Prior work achieves robustness bounds of $2H_k + O(1)$ in the randomized setting, leaving a gap to the optimal competitive ratio $H_k$.

    In this paper, we study how to close this gap. We begin by reviewing online optimality and proving a new property of the latest $H_k$-competitive algorithm, which facilitates our analysis in the learning-augmented setting. Then, we review existing learning-augmented paging algorithms and introduce a unifying primitive, the \emph{relative prediction budget}, which captures the essence of establishing robustness and reveals that prior algorithms either overuse or underutilize predictions. Guided by the above analysis, we develop a new framework that achieves the best-possible robustness up to an additive constant for learning-augmented paging: $H_k + O(1)$. Experiments further demonstrate strong practical performance.
\end{abstract}

\section{Introduction}

The idea of ML for systems has been prevalent in both academic and industrial communities in the past few years, with many learned strategies emerging, such as ML-based paging, scheduling, and resource allocation. However, in practice, the hardest part of applying ML in systems may not be achieving high average accuracy, but coping with the inevitable failures that surface after deployment, especially in production environments. This suggests designing systems that treat predictions as helpful but fallible signals. Motivated by these considerations, an interdisciplinary research direction known as algorithms with predictions (ALPS) or learning-augmented algorithms has flourished following the seminal works of \cite{pmlr-v80-lykouris18a} and \cite{kraska2018case}. This line of research spans data structures \cite{kraska2018case, mitzenmacher2018model}, paging and caching \cite{pmlr-v80-lykouris18a, wei2020better, rohatgi2020near, sadek2024algorithms}, the Bahncard problem \cite{drygala2023online, zhao2024learning}, ski rental \cite{purohit2018improving, antoniadis2021learning}, graph algorithms \cite{antoniadis2024online, depavia2024learning, he2026learningaugmentedsmoothintegerprograms}, and online knapsack \cite{im2021online, zeynali2021data, lechowicz2024time, pmlr-v267-daneshvaramoli25a}, among many others.

In this paper, we focus on approaching the optimal robustness bound for learning-augmented paging, a prototypical problem of this area.

\paragraph{Paging.}
Online paging is a canonical online decision problem, arising whenever a system must serve a stream of page requests using a small and fast cache. Formally, we are given a cache of size $k$ and must process page requests online, i.e., without any knowledge of future requests. Requests are processed online: a \textit{hit} costs nothing, while a \textit{miss} loads the page and possibly evicts a resident page, and the total cost is the number of misses.

A classic offline-optimal algorithm \cite{belady1966study} shows that \textit{Belady’s rule}, evicting the page whose next request occurs latest in the future, minimizes the number of misses. The online setting is fundamentally harder, since the algorithm has no access to future requests. In particular, any deterministic paging algorithm has a competitive ratio at least $k$~\cite{sleator1985amortized}, and LRU is $k$-competitive. With randomization, \textsc{Marker} \cite{fiat1991competitive} achieves a competitive ratio of $2H_k - 1$, while achieving the optimal \(H_k\)-competitive ratio requires a more careful design.\footnote{$H_k = \sum_{i=1}^k \frac{1}{i}$ and $\ln(k+1) \le H_k \le \ln k + 1$.}
\textsc{Partition}~\cite{mcgeoch1991strongly} was the first algorithm to achieve $H_k$-competitiveness, but it was later characterized as counterintuitive and difficult to interpret~\cite{achlioptas2000competitive}. 
Subsequent work proposed \textsc{Equitable}~\cite{achlioptas2000competitive}, \textsc{K\_Equitable}~\cite{bein2011knowledge}, and \textsc{OnlineMin}~\cite{brodal2015onlinemin}, which match the optimal $H_k$-competitive ratio by tracking the online optimum via a key technique called the \textit{work function}.

\paragraph{Learning-augmented paging.}
Motivated by the success of machine learning on real-world workloads, \textit{learning-augmented algorithms} enrich classical online algorithms with predictions produced by an ML model, while still seeking formal guarantees when predictions are unreliable. The goal is to achieve \emph{near-optimal} performance when predictions are accurate, while retaining a meaningful worst-case bound close to a classical heuristic regardless of prediction quality. A standard vocabulary has emerged around three desiderata: \textit{consistency} (performance under perfect predictions), \textit{robustness} (performance under arbitrarily bad predictions), and \textit{smoothness} (how gracefully performance degrades with prediction error).

In paging, a widely studied type of prediction estimates future reuse, i.e., the next-arrival time of each page, enabling eviction rules that mimic Belady's algorithm when predictions are accurate. The sum of $\ell_1$ distances between predictions and the ground truth is commonly used to measure prediction error, which in turn enables smoothness analyses \cite{pmlr-v80-lykouris18a, rohatgi2020near}.



\subsection{Related Work}

The integration of machine learning models to enhance caching performance has been explored in both theory and practice. While prior works have built learning-based caching systems \cite{vietri2018driving}, this paper focuses on the theoretical advances in paging algorithms augmented with machine learning.

The first learning-augmented paging algorithm, \textsc{BlindOracle}~\citep{pmlr-v80-lykouris18a}, evicts the page predicted to be used farthest in the future. It is $1$-consistent under perfect predictions, but can be arbitrarily bad when predictions are inaccurate. This brittleness has motivated a broad class of follow-up works. \textsc{PredictiveMarker} \citep{pmlr-v80-lykouris18a} incorporates predictions into the \textsc{Marker} algorithm and achieves $4H_k$-robustness, while \textsc{LMarker}~\citep{rohatgi2020near} attains $(2H_k+4)$-robustness. \textsc{Trust\&Doubt} \cite{antoniadis2023online} uses \textsc{Marker}-style ideas more carefully to achieve consistency close to (or equal to) $1$, whereas many other \textsc{Marker}-based algorithms sacrifice consistency and remain at least $2$-competitive even with perfect predictions. However, \textsc{Marker}'s marking technique inherently limits the best achievable robustness to $2H_k + O(1)$.

Other learning-augmented paging algorithms run a prediction-driven policy and fall back to a robust baseline when errors are detected. \textsc{BlindOracle\&LRU}~\citep{wei2020better} yields $2k$-robustness and $2$-consistency. \textsc{F\&R} and \textsc{F\&R (FitF)}~\citep{sadek2024algorithms} first achieve both $1$-consistency and $\mathcal{O}(\log k)$-robustness, but with a non-optimal leading constant factor.

Overall, the gap between existing methods and the online optimal competitive ratio bound $H_k$ motivates us to explore how to approach these optimal guarantees. Along the way, we identify the key essence underlying methods for bounded robustness, which in turn reveals a methodology that existing algorithms typically neglect or have not thoroughly studied.


\subsection{Our Contributions}

In this work, we ask two questions: \emph{How can we approach optimal robustness, and at what cost? What is the essence of establishing bounded robustness around the competitive ratio of a classical baseline?}

We answer both the above questions and make the following contributions:
\begin{enumerate}
    \item We summarize the insights behind online optimality and prove a new property of \textsc{OnlineMin}, the latest $H_k$-competitive algorithm. These results broadly guide learning-augmented designs to leverage online optimality to achieve optimal robustness and optimal consistency simultaneously.
    \item We then review how existing learning-augmented paging algorithms establish bounded robustness, and introduce \emph{relative prediction budget}, a primitive that captures their underlying methodologies and reveals that there remains room to better utilize predictions.
    \item Building on the above, we propose a new learning-augmented algorithmic framework, \textsc{RPB-OnOPT}. Using this framework, \textsc{RPB-OM} achieves \(1\)-consistency and optimal robustness up to an additive constant, i.e., \(H_k + O(1)\). The experiments demonstrate the benefits of our improved robustness guarantees and the stronger designs guided by the relative prediction budget.
\end{enumerate}

\section{Preliminaries}

\paragraph{Problem setup.}

The paging problem consists of a universe of pages $\mathcal{P}$ and a cache that can store at most $k$ pages. An input is a request sequence
$\sigma = \langle r_1, ..., r_n \rangle$, revealed online. Each request $r_i$ asks for a page $p_i\in\mathcal{P}$. If the requested $p_i$ is in the cache, $r_i$ incurs a \textit{cache hit}; otherwise, $r_i$ incurs a \textit{cache miss} and the algorithm must load $p_i$ into the cache, evicting a page if the cache is full. The goal is to minimize the total number of cache misses over $\sigma$.

\paragraph{Metrics.}
An \textit{online} algorithm makes eviction decisions without knowledge of future requests, whereas an \textit{offline} algorithm knows $\sigma$ in advance. We refer to a specific state of the cache as a cache configuration. For any algorithm $\textsc{A}$ starting from cache configuration $C$, let $cost(\textsc{A}, C,\sigma)$ denote its cost on $\sigma$, defined as the number of cache misses (and for randomized algorithms, we consider $\mathbb{E}[cost(\textsc{A}, C, \sigma)]$). Note that any algorithm incurs the same cost as \textsc{OPT} when serving $k$ distinct pages starting from an initially empty cache. Conceptually, let this be the warm-up process, after which the cache becomes full for the first time. We omit $C$ when it is the initial empty cache.

Online algorithms are compared to the offline optimal algorithm $\textsc{OPT}$ via the \textit{competitive ratio}. Algorithm $\textsc{A}$ is \textit{$\alpha$-competitive} if there exists a constant $c$ (independent of $\sigma$) such that, for all sequences $\sigma$,
\begin{equation}
    cost(\textsc{A}, \sigma) \le \alpha \cdot cost(\textsc{OPT},\sigma) + c.
\end{equation}
In this paper, when the sequence is clear from context, we write $cost(A)$ and $cost(\textsc{OPT})$, omitting $\sigma$ for brevity.

A learning-augmented caching algorithm receives predictions about future requests, which may be inaccurate. Its guarantees are typically summarized by three components:
\begin{enumerate}
    \item \textit{consistency} $\gamma$: the competitive ratio under perfect predictions.
    \item \textit{robustness} $\delta$: the competitive ratio under adversarial (arbitrarily bad) predictions.
    \item \textit{smoothness} $f(\eta)$: a function that characterizes how the competitive ratio depends on the prediction error~$\eta$.
\end{enumerate}

\section{Online Optimality in Paging}

To approach the optimal robustness bound, we first comprehensively investigate the structural design underlying online-optimal algorithms, and derive observations that are critical for our new algorithms.

\subsection{Work Function}

Work functions track the current optimal solution and play a central role in online algorithm analysis~\cite{lund1994linear,lund1994line,chrobak1997page,irani1998randomized,koutsoupias2000beyond}. To distinguish this from the offline optimum, we use “current optimal” for the optimum on the prefix observed so far.

In paging, a work function $\omega$ assigns to each cache configuration the minimum cost needed to serve a request sequence from the beginning and end in that cache configuration. Here, a cache configuration denotes the set of pages stored in the cache. Formally, for $\sigma=\langle r_1,\ldots,r_n\rangle$, let $\omega_\sigma$ be the work function after serving $\sigma$, where $\omega_{\sigma}(X)$ denotes the minimum cost of serving $\sigma$ and ending in cache configuration $X$. Upon the arrival of a new request $r_{n+1}$ that requests page $p$, the work function admits the following dynamic-programming update:
\begin{equation}\label{eq_work_function_dp}
\omega_{\sigma'}(X)
=
\min\left\{
\omega_\sigma(X),\;
1+\min_{x\neq p}
\omega_\sigma\bigl(X\setminus\{p\}\cup\{x\}\bigr)
\right\}
\end{equation}
for any $X$ with $p\in X$, where $\sigma^\prime = \langle r_1, ..., r_n, r_{n+1} \rangle$.

When the underlying request sequence $\sigma$ is clear from context, we write $\omega$ instead of $\omega_{\sigma}$. We use $\omega^r$ and $\omega^\delta$ to denote the work functions obtained from $\omega$ after serving a request $r$ and a request sequence $\delta$, respectively. We denote by $\min(\omega) \coloneqq \min_{X}\omega(X)$ the minimum value of the work function over all configurations, which equals the optimal cost of serving the sequence regardless of the final configuration.
\begin{definition} \label{def:valid_configuration}
    A configuration $X$ is called \textit{valid} iff $\omega(X)-\min(\omega)=0$.
\end{definition}

\textbf{Characteristics.}  \citet{koutsoupias2000beyond} show that $\omega$ can be represented by a sequence of layers, which can be updated online. Specifically, layers are represented by $k + 1$ disjoint page sets $(L_0, ..., L_k)$. After the warm-up process, each layer $L_i, i > 0$ consists of exactly one page. The following update rules describe how the layers evolve after serving a new request to page $p$:
\begin{equation} \label{eq_update_rule}
    \left\{ 
        \begin{array}{ll}
            (L_0 \backslash \{p\}, L_1, ..., L_{k-1}\cup L_{k}, \{p\}) & \text{if $p \in L_0$}, \\
            (L_0, ..., L_{i-1} \cup L_i \backslash \{p\}, ..., L_k, \{p\}) & \text{if $p \in L_i, i > 0$.}
        \end{array}
    \right.
\end{equation}
The \textit{support} of $\omega$ is defined as $S(\omega) = L_1 \cup ... \cup L_k$. The layer $L_0$ contains all pages outside the support.

The layer notation provides informative summaries of the optimal solutions for the requests observed so far. In particular, any such solution ends in a valid configuration, and its support satisfies the following lemma from~\citet{koutsoupias2000beyond}:
\begin{lemma}\label{lemma_valid_condition}
    A configuration $C$ is valid iff $|C \cap (L_{k-j+1} \cup \cdots \cup L_k)| \geq j$ for all $1 \leq j \leq k$.
\end{lemma}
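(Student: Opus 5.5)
We will prove the lemma by induction on the number of requests after the warm-up. Alongside the stated characterization we carry a second invariant that controls all values of $\omega$, not just the minimizers. For a configuration $C$, define the deficiency
\[
d(C)=\max_{1\le j\le k}\bigl(j-|C\cap(L_{k-j+1}\cup\cdots\cup L_k)|\bigr)^+ .
\]
The strengthened claim is $\omega(C)-\min(\omega)=d(C)$. The lemma then follows immediately, since $d(C)=0$ is exactly the stated condition.

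\emph{Base case.} Right after warm-up, the requested $k$ distinct pages occupy $L_1,\dots,L_k$, one page each in order of request. Every other page is in $L_0$. The work function equals $\min(\omega)$ plus the number of support pages missing from $C$: a missing support page must have been evicted, which costs one extra load. We then check that this count equals $d(C)$ when each layer is a singleton. Here the maximum in $d$ is attained at $j=k$, because $|C\cap(\text{top } j)|\ge j-(\text{missing})$.

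\emph{Inductive step.} Take a request to $p$ and use the update \eqref{eq_work_function_dp}. Configurations not containing $p$ will be handled through the standard fact that $\omega^{r}(X)=\min_{x\in X}\omega^{r}(X\setminus\{x\}\cup\{p\})+1$. So it suffices to treat $X\ni p$. In that case
\[
\omega^r(X)=\min\Bigl\{\omega(X),\,1+\min_{x\ne p}\omega(X\setminus\{p\}\cup\{x\})\Bigr\},
\]
and we plug in the inductive formula $\omega=\min(\omega)+d$. We split into the two cases of \eqref{eq_update_rule}.

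\emph{Case $p\in L_0$.} The new minimum rises by one. Indeed, every old valid configuration lacks $p$ (it satisfies the constraints using support pages only), so no old minimizer contains $p$. For $X\ni p$, swapping $p$ for the best $x$ exactly repairs one unit of deficiency under the old layers. We will verify that the new layering, with $L_{k-1}\cup L_k$ merged and $\{p\}$ on top, yields new deficiency $d'(X)=1+d(X\setminus\{p\}\cup\{x^*\})-1$ after normalization.

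\emph{Case $p\in L_i$ with $i>0$.} The minimum is unchanged: some valid configuration contains $p$, since one can fill the suffix constraints with $p$ included. Merging $L_{i-1}$ with $L_i\setminus\{p\}$ and putting $p$ on top reflects the fact that the constraint at the level of $p$ is relaxed by one.

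The main obstacle is this bookkeeping. We must show that the minimum over $x$ of the old deficiency, combined with the $\min$ against $\omega(X)$, reproduces exactly the suffix-count formula for the new layers. The plan is to argue via a Hall-type or greedy exchange on suffix sets: the optimal $x$ is a page from the deepest violated suffix. Since only suffix counts shift, the identity can then be checked level by level over $j$.
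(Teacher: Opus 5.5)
Your route is correct, and it differs from the paper's. The paper does not prove this lemma; it imports it, together with the layer update rule \eqref{eq_update_rule}, from \citet{koutsoupias2000beyond}. You instead give a self-contained induction over requests with the stronger invariant $\omega(C)=\min(\omega)+d(C)$, so the layers determine the \emph{entire} work function and not only its zero set. This is the right strengthening. On an $L_0$-request the new valid configurations are exactly the old configurations containing $p$ whose value was $\min(\omega)+1$, so an induction that carries only the minimizers cannot close. As a by-product, your argument also certifies that \eqref{eq_update_rule} faithfully tracks \eqref{eq_work_function_dp}, which the paper takes on trust. The citation keeps the paper short; your version buys self-containment and the exact value formula.

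The bookkeeping you defer is short. By the $1$-Lipschitz property $\omega(X)\le\omega(Y)+|X\setminus Y|$, one has $\omega^p(X)=\omega(X)$ for every $X\ni p$. So the second branch of \eqref{eq_work_function_dp} never matters and no optimal $x^*$ is needed. Write $T_j=L_{k-j+1}\cup\cdots\cup L_k$, and let $T'_j$, $d'$ refer to the new layers.

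If $p\in L_0$, then $T'_1=\{p\}$ and $T'_j=T_j\cup\{p\}$ for $j\ge 2$. For $X\ni p$, the $j=k$ term of $d(X)$ is at least $1$ and dominates the $j=1$ term, which yields $d(X)=1+d'(X)$.

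If $p\in L_i$ with $i>0$, put $j_0=k-i+1$. Then $T'_j=T_{j-1}\cup\{p\}$ for $j\le j_0$ and $T'_j=T_j$ for $j>j_0$. So for $X\ni p$, $d'(X)$ is $d(X)$ with the $j_0$-term deleted. That term never exceeds the $(j_0-1)$-term, because $p\in X\cap(T_{j_0}\setminus T_{j_0-1})$; hence $d'(X)=d(X)$.

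For $X\not\ni p$ in both cases, $p$ lies in every $T'_j$. Hence $1+d'(X\setminus\{x\}\cup\{p\})\ge d'(X)$ for all $x\in X$. Equality holds for $x\in X\setminus T'_{j^*}$, where $j^*$ is the largest index attaining $d'(X)$. This set is nonempty, since $|X\cap T'_{j^*}|=j^*-d'(X)<k$.

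So the exchange that matters is \emph{removing} a page outside the deepest maximally violated suffix. If you do keep the $1+\min_x$ branch for $X\ni p$, the page you \emph{add} must come from the topmost layer not already contained in $X\setminus\{p\}$. An arbitrary page of the deepest violated suffix can be suboptimal, so your ``greedy exchange'' heuristic should be stated this way.
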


This naturally leads to the following prefix constraints.
\begin{corollary} \label{corollary:valid_prefix_constraints}
    A configuration $C$ is valid iff $|C \cap (L_1 \cup \cdots \cup L_j)| \leq j$ for all $1 \leq j \leq k$.
\end{corollary}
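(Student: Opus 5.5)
We will derive the corollary from Lemma~\ref{lemma_valid_condition} by taking complements inside the support $S(\omega)=L_1\cup\cdots\cup L_k$. Throughout, $C$ is a full cache configuration, so $|C|=k$. Since the layers are disjoint, for every $0\le j\le k$ we have
\begin{equation*}
|C\cap(L_1\cup\cdots\cup L_j)| + |C\cap(L_{j+1}\cup\cdots\cup L_k)| = |C\cap S(\omega)| .
\end{equation*}
The plan is first to pin down $|C\cap S(\omega)|$, then to turn each suffix inequality of the lemma into a prefix inequality by substituting $j'=k-j$.

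\textbf{($\Rightarrow$)} Let $C$ be valid. Apply Lemma~\ref{lemma_valid_condition} with $j=k$. This gives $|C\cap S(\omega)|\ge k=|C|$, so $C\subseteq S(\omega)$ and $|C\cap S(\omega)|=k$. Now fix $1\le j\le k-1$. The lemma applied with index $k-j$ gives $|C\cap(L_{j+1}\cup\cdots\cup L_k)|\ge k-j$. The identity above then yields $|C\cap(L_1\cup\cdots\cup L_j)|\le k-(k-j)=j$. For $j=k$ the bound $|C\cap S(\omega)|\le k$ is immediate.

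\textbf{($\Leftarrow$)} Suppose the prefix constraints hold. For this direction we use that a configuration is drawn from the support, $C\subseteq S(\omega)$, so that $|C\cap S(\omega)|=k$. This is exactly what the lemma's $j=k$ condition forces for every valid configuration. Without it, pages in $L_0$ are invisible to all prefix counts, and the constraints alone cannot certify validity. For $1\le j'\le k-1$, put $j=k-j'$. The identity gives $|C\cap(L_{k-j'+1}\cup\cdots\cup L_k)| = k-|C\cap(L_1\cup\cdots\cup L_{k-j'})| \ge k-(k-j')=j'$. For $j'=k$ the required inequality is $|C\cap S(\omega)|\ge k$, which holds by $C\subseteq S(\omega)$. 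So every condition of Lemma~\ref{lemma_valid_condition} holds, and $C$ is valid.

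The index bookkeeping is routine. The only delicate point, and the main obstacle, is the boundary term $j=k$ together with the role of $L_0$. The suffix family in the lemma contains the support condition $|C\cap S(\omega)|\ge k$, but the prefix family only ever yields upper bounds and never sees $L_0$. Hence the equivalence must be stated for configurations inside the support, i.e.\ with the normalization $|C\cap S(\omega)|=k$. I will state this explicitly: it is automatic for every valid $C$ by the lemma, and it is used in the converse direction.
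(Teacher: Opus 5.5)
Your proof is correct. It is the complement-within-the-support argument the paper intends when it says Lemma~\ref{lemma_valid_condition} \emph{naturally leads to} these prefix constraints; the paper gives no further proof. The hypothesis $C\subseteq S(\omega)$ that you make explicit is genuinely needed for the converse: for example, any $C\subseteq L_0$ satisfies every prefix bound yet is invalid by Corollary~\ref{corollary_revealed}. The paper leaves this hypothesis implicit, which is harmless because its later uses (such as checking validity of $\tilde{C}^* = C^*\setminus\{q\}\cup\{y\}$) only involve configurations inside the support.
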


We refer to a layer containing a single page as a \textit{singleton}. The support then partitions into two sets: (1) the \textit{revealed pages}, $R(\omega)=L_x \cup \cdots \cup L_k$, where $x$ is the smallest index such that all suffix layers $L_x,\ldots,L_k$ are singletons; and (2) the \textit{unrevealed pages}, $N(\omega)=L_1 \cup \cdots \cup L_{x-1}$. We refer to layers containing unrevealed pages as \textit{unrevealed layers}.

There is a \textit{valid configuration construction procedure}:
traversing the layers from $L_k$ down to $L_1$ and selecting pages to satisfy
Lemma~\ref{lemma_valid_condition}. The procedure first collects all revealed pages and then adds a subset of unrevealed pages. Based on the above, we obtain the following corollaries:

\begin{corollary}\label{corollary_revealed}
    A valid configuration contains all revealed pages and contains no page in $L_0$.
\end{corollary}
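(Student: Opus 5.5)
Both claims of Corollary~\ref{corollary_revealed} follow from Lemma~\ref{lemma_valid_condition}, Corollary~\ref{corollary:valid_prefix_constraints}, and the fact that a configuration holds at most $k$ pages. Throughout, we work after the warm-up process, so every valid configuration $C$ is a full cache with $|C|=k$.

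\textbf{No page in $L_0$.} Take $j=k$ in Lemma~\ref{lemma_valid_condition}. It gives $|C\cap(L_1\cup\cdots\cup L_k)|\ge k$, that is, $|C\cap S(\omega)|\ge k$. Since $|C|\le k$, we get $C\subseteq S(\omega)$. The layers are disjoint and $L_0$ is exactly the set of pages outside the support, so $C\cap L_0=\emptyset$.

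\textbf{All revealed pages.} Let $x$ be the index from the definition of $R(\omega)$. Each of $L_x,\ldots,L_k$ is a singleton, so $|L_x\cup\cdots\cup L_k|=k-x+1$. Now take $j=k-x+1$ in Lemma~\ref{lemma_valid_condition}. It gives
\[
|C\cap(L_x\cup\cdots\cup L_k)|\ge k-x+1=|L_x\cup\cdots\cup L_k|.
\]
A subset of a finite set that has the full cardinality is the whole set, so $R(\omega)\subseteq C$.

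Two details need care.
\begin{itemize}
\item The choice $j=k-x+1$ must satisfy $1\le j\le k$. This requires $x\ge 1$, which holds because $R(\omega)$ is defined as a suffix of the support layers. It also requires $R(\omega)\ne\emptyset$, which holds because $L_k=\{p\}$ after every request by the update rule~\eqref{eq_update_rule}, so $x\le k$.
\item The first claim uses $|C|=k$. For a cache that is not full, the same argument applies once the warm-up has completed, and the corollary is understood in that regime.
\end{itemize}

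The main point to check is that the layer structure after warm-up really has singleton layers $L_1,\ldots,L_k$. This makes the revealed suffix well defined and the cardinality count exact, and it is given by the characterization of~\citet{koutsoupias2000beyond} recalled above.
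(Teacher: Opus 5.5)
Your proof is correct and follows the paper's route. The paper obtains the corollary from Lemma~\ref{lemma_valid_condition} through the top-down valid-configuration construction: the singleton suffix layers are forced into $C$, and the constraint for $j=k$ fills all $k$ slots from the support. Your instantiations $j=k-x+1$ and $j=k$ make exactly that precise.

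Two small corrections to your closing remarks, neither affecting the argument. First, you only need $L_x,\ldots,L_k$ to be singletons, which holds by the definition of $x$; if all of $L_1,\ldots,L_k$ were singletons, $N(\omega)$ would be empty. Second, the first claim needs only $|C|\le k$, not $|C|=k$, since Lemma~\ref{lemma_valid_condition} with $j=k$ already forces $|C|=k$.
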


\begin{corollary}\label{corollary_offline_optimal_miss_on_L0}
    By~\eqref{eq_update_rule}, an offline-optimal algorithm transitions between valid configurations after each request. Consequently, it incurs a cache miss if and only if the requested page lies in $L_0$.
\end{corollary}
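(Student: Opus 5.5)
We need to prove Corollary~\ref{corollary_offline_optimal_miss_on_L0}: an offline-optimal algorithm moves between valid configurations after each request, and it misses exactly when the requested page lies in $L_0$.

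\emph{Step 1: an offline-optimal run passes only through valid configurations.} Let $C_t$ be the configuration of an offline-optimal algorithm after serving the prefix $\sigma_t$, and let $c_t$ be its cost on that prefix. We argue by an exchange. Suppose $c_t > \omega_{\sigma_t}(C_t)$. Then we can splice in a cheaper schedule that ends in $C_t$ and continue identically, which lowers the total cost and contradicts optimality. So $c_t = \omega_{\sigma_t}(C_t)$.

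To get validity, which means $\omega_{\sigma_t}(C_t) = \min(\omega_{\sigma_t})$, we use that the work function is $1$-Lipschitz under swapping one page: from any configuration $Y$ we can reach $C_t$ with $|C_t\setminus Y|$ extra loads. Strictly, an optimal algorithm need not be at a minimizer at every time. We therefore interpret ``offline-optimal'' as the canonical algorithm that always stays in a valid configuration. Its existence is shown by induction using~\eqref{eq_update_rule} and the construction procedure: from a valid $C$, after a request to $p$ we choose a valid $C'$ for the new layers with $|C'\setminus C|\le 1$. Summed over the sequence, this algorithm's cost equals $\min(\omega_\sigma)$, so it is optimal.

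\emph{Step 2: the miss characterization.} We use the DP~\eqref{eq_work_function_dp} together with the layer update.

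If $p\notin L_0$, then $p$ lies in the support. We show that every valid $C$ already contains $p$, or can be re-chosen as a valid configuration containing $p$ with no cost change. The cleaner route is to show directly that $\min(\omega)$ is unchanged when $p\in L_i$ with $i>0$. Pick a valid $C$ containing $p$; one exists via the construction procedure, which can select $p$ from $L_i$ when traversing. Then $\omega^{r}(C)=\omega(C)=\min(\omega)$, and $\min$ never decreases, so the optimum does not increase and there is no miss.

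If $p\in L_0$, Corollary~\ref{corollary_revealed} says no valid configuration contains $p$. Every configuration $X\ni p$ therefore has $\omega(X)\ge\min(\omega)+1$, so $\min(\omega^r)=\min(\omega)+1$ and the optimum must pay a miss. The new valid configuration is $C\setminus\{x\}\cup\{p\}$ for a suitable $x$, which matches the merged layers $L_{k-1}\cup L_k$, $\{p\}$.

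\emph{Main obstacle.} The hard part is the $p\in L_i$, $i>0$ case. We must check, using Lemma~\ref{lemma_valid_condition}, that some configuration containing $p$ is valid before the request. We must also check that the configuration the algorithm currently holds, if it lacks $p$, can be transformed into a valid configuration for the new layers at cost zero in the work-function sense. The plan is to verify the suffix counting inequalities before and after the merge $L_{i-1}\cup L_i\setminus\{p\}$. Moving $p$ to $L_k$ raises each suffix count containing $L_k$ by at most one and loosens the constraints. This yields a valid configuration containing $p$ whose $\omega$-value equals $\min(\omega)$.
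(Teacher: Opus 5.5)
There is a genuine gap: your construction of the canonical optimal algorithm does not make it optimal.

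What you have right: an arbitrary offline-optimal schedule need not sit in valid configurations. Your Step~2 computation is also correct: $\min(\omega)$ is unchanged for $p\in S(\omega)$ and rises by exactly one for $p\in L_0$. In fact the corollary is false for an arbitrary optimal schedule. Take $k=2$ and requests $a,b,c,a,b$, evict $a$ at $c$, then evict $c$ at the second $a$. This costs $4=cost(\textsc{OPT})$, yet it misses on the lazy request to $a\in L_1$ and hits on the request to $b\in L_0$.

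Your forward induction, ``from a valid $C$ choose a valid $C'$ with $|C'\setminus C|\le 1$,'' does not give cost $\min(\omega_\sigma)$. Whenever the request is to an unrevealed page $p\notin C$, the move $C'\neq C$ is forced and costs a real miss, while $\min(\omega)$ does not change. \textsc{OM}, and every \textsc{OnOPT} algorithm, is exactly such a valid-to-valid single-swap algorithm, and it is only $H_k$-competitive. Your ``main obstacle'' paragraph makes the same conflation: the configuration after loading $p$ having work-function value $\min(\omega)$ does not make the load free. So validity of configurations alone can never give ``no miss on support pages.'' Likewise, the Step~2 inference ``$\min(\omega)$ does not increase, so there is no miss'' is unsupported.

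The missing ingredient is prefix optimality of the realized cost, $c_t=\min(\omega_t)$ for every $t$, and this requires lookahead. Two ways to get it:
\begin{itemize}
\item Belady's run on $\sigma$, restricted to a prefix $\sigma_t$, is a farthest-in-future run on $\sigma_t$, with pages not requested again within $\sigma_t$ tie-broken in some way. It is therefore optimal on every prefix.
\item Trace the recursion \eqref{eq_work_function_dp} backward from a minimizer of $\omega_\sigma$, always picking a predecessor that is valid at time $t-1$. At a lazy request the hit branch is forced, because $1+\omega_{t-1}(\cdot)\ge\min(\omega_{t-1})+1>\min(\omega_t)$. At an $L_0$ request, a valid predecessor $X_t\setminus\{p\}\cup\{x\}$ exists; one checks this from the suffix conditions of Lemma~\ref{lemma_valid_condition}.
\end{itemize}
Once $c_t=\min(\omega_t)$ holds for all $t$, validity follows from $\omega_t(C_t)\le c_t$. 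A miss at step $t$ then occurs iff $\min(\omega_t)=\min(\omega_{t-1})+1$, which by your Step~2 happens iff the requested page lies in $L_0$. This prefix-optimality fact is what the paper's one-line appeal to \eqref{eq_update_rule} implicitly relies on.
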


\begin{corollary}\label{corollary_unrevealed}
    An unrevealed page appears in at least one valid configuration, but it may not be present in the current cache configuration of an offline-optimal algorithm.
\end{corollary}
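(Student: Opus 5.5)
We prove Corollary~\ref{corollary_unrevealed} by combining the valid configuration construction procedure with Lemma~\ref{lemma_valid_condition} and its prefix form, Corollary~\ref{corollary:valid_prefix_constraints}. There are two parts. First we show that every unrevealed page belongs to some valid configuration. Second we show, by an example, that an offline-optimal algorithm need not currently hold a given unrevealed page.

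\textbf{Existence.} Fix an unrevealed page $q \in L_i$ with $1 \le i \le x-1$, and set $m=|R(\omega)|=k-x+1$. After warm-up the support has at least $k$ pages, and we aim to build a valid $C$ of size $k$ with $q\in C$. We start with $C=R(\omega)$. We then add pages layer by layer from $L_{x-1}$ down to $L_1$, maintaining the suffix counting condition of Lemma~\ref{lemma_valid_condition}: $|C\cap(L_{k-j+1}\cup\cdots\cup L_k)|\ge j$ for $j\le k$.

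The key observation is that $L_{x-1}$ is not a singleton, since otherwise $x$ would not be minimal. Hence the unrevealed layers together contain strictly more pages than the constraints force. The plan is to take one page from each unrevealed layer, preferring $q$ when we reach $L_i$, and to check the conditions in the prefix form $|C\cap(L_1\cup\dots\cup L_j)|\le j$. Taking at most one page per layer makes this prefix bound hold trivially. We also need $|C|=k$, which holds because $m$ revealed pages plus $x-1$ layers give exactly $k$ pages.

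We must also check that the suffix condition holds with equality-type counts. Because each layer contributes exactly one page, $|C\cap(L_{k-j+1}\cup\dots\cup L_k)|=j$ for every $j$, so Lemma~\ref{lemma_valid_condition} applies. Therefore $C$ is valid and contains $q$.

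\textbf{Possible absence.} We exhibit a small instance, for example with $k=2$. Consider a request sequence in which the layer update \eqref{eq_update_rule} merges two pages into a single unrevealed layer $L_1=\{a,b\}$. By the construction above, both $\{a\}\cup R$ and $\{b\}\cup R$ are valid. An offline-optimal algorithm sits in only one of them, so at least one of $a,b$ is unrevealed but absent from its cache.

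\textbf{Main obstacle.} The delicate step is ensuring that the layer sizes allow exactly one pick per unrevealed layer, i.e.\ that the layers $L_1,\dots,L_{x-1}$ are nonempty. I would argue this from the update rules \eqref{eq_update_rule}: the rules only merge adjacent layers and append new singletons, so after warm-up each $L_i$ with $i>0$ stays nonempty.
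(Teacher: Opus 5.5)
Your proposal is correct and is essentially the paper's argument: the paper derives this corollary directly from Lemma~\ref{lemma_valid_condition} via the valid-configuration construction procedure (collect all revealed pages, then add a subset of unrevealed pages layer by layer). Your choice of one page per layer with $q$ taken from $L_i$, the nonemptiness of the layers under \eqref{eq_update_rule}, and a concrete $k=2$ instance (e.g.\ the requests $a,b,c$) together make that procedure explicit. The remark that $L_{x-1}$ is not a singleton is not needed for the existence part, but it is exactly what strengthens the second part: every unrevealed page is omitted by some valid configuration.
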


According to the above corollaries, $S(\omega)$ consists of all pages that may appear in some valid configuration; requesting any such page does not increase the current optimal cost, i.e., the value of the work function. A request that does not increase the optimal cost is called a \textit{lazy request} in the literature \cite{achlioptas2000competitive}, which is actually a request to a page in $S(\omega)$. A lazy request to an unrevealed page is called a \textit{lazy adversary request}. Each lazy adversary request reduces the number of layers containing unrevealed pages by one. Assume that there is an adversary that issues a consecutive request sequence consisting only of lazy adversary requests from now until no unrevealed pages remain in the support. We call such a request sequence a \textit{lazy adversary strategy}. 

\textbf{Observation 1: Uncertainty in valid configuration transitions.} Due to the dynamic-programming nature of paging, one can track the global optimum by maintaining all local optima. The set of valid configurations, denoted as $\mathcal{V}$, encapsulates all possible configurations that the offline optimum can be in. We note that the size of the valid-configuration set, $|\mathcal{V}|$, captures
the \textit{uncertainty} about the offline optimum's current configuration and changes with the type of request.

A request to a page \(p\in L_0\) resets the support: it makes all pages in the original $S(\omega)$ unrevealed according to the update rules \eqref{eq_update_rule}. Consequently, by Lemma \ref{lemma_valid_condition}, $|\mathcal{V}|$ increases significantly. Moreover, any page in the original support $S(\omega)$ may be absent from the offline optimum's current configuration. Therefore, such a request introduces a large increase in uncertainty.

On the other hand, for a request to a page $p \in S(\omega)$, if the request is to a page $p \in R(\omega)$, the set of valid configurations remains unchanged. If $p \in N(\omega)$, it becomes revealed. This decreases $|\mathcal{V}|$, thereby narrowing the uncertainty. Note that when all pages in $S(\omega)$ are revealed, the only valid configuration coincides with the offline optimum's configuration. 

\textbf{Observation 2: Valid configurations and optimality.} The set $\mathcal{V}$ is algorithm-independent and transitions online based only on the requests observed so far, without access to future requests. Since the offline optimum can currently be in any valid configuration in $\mathcal{V}$ depending on future requests, this necessarily creates a gap between the offline and online optima.

This suggests an improvement to reduce this gap by leveraging machine-learned predictions of future requests to narrow the set of valid configurations that the offline optimum may be in, thereby reducing uncertainty and guiding eviction decisions.

\subsection{Online-Optimal Algorithms}

\textsc{Equitable} \citep{achlioptas2000competitive} is a randomized online-optimal algorithm. Formally, \textsc{Equitable} maintains a distribution $\Pi_t$ over all valid configurations $\mathcal{V}$ at time $t$, where the probability that the current configuration is $X$ is given by
\begin{equation}
    \Pi_t(X) \coloneqq P_X(\omega_t), \nonumber
\end{equation}
where $\omega_t$ denotes the work function at time $t$, and $P_X(\omega_t)$ is defined as the probability that $X$ is the final configuration $X_k$ produced by the following randomized selection process. Initialize $X_0 = \emptyset$. Let $\omega_t^{X_m}$ denote the work function obtained from $\omega_t$ after serving the sequence of requests $x_1,\ldots,x_m$. For $i = 1, ..., k$, sample
\begin{equation}
    x_i \sim \mathrm{Uniform}\Bigl(S(\omega_t^{X_{i-1}})\setminus X_{i-1}\Bigr),
\ X_i := X_{i-1}\cup\{x_i\}. \nonumber
\end{equation}
This construction yields the following lemma, which implies that the adversary cannot benefit from choosing one lazy adversary strategy over another, and it is key to establishing the $H_k$ competitive ratio of \textsc{Equitable}.

\begin{lemma}\label{lemma_equitable_against}
    (Lemma 4 in \citet{achlioptas2000competitive}) The expected cost of \textsc{Equitable} is the same against all lazy adversary strategies for the current $\omega$.
\end{lemma}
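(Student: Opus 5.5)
We will prove Lemma~\ref{lemma_equitable_against} by showing that the expected cost of \textsc{Equitable} along any lazy adversary strategy depends only on the layer structure of the current $\omega$, not on the order in which unrevealed pages are requested. Let $u$ be the number of unrevealed layers, and write $E(\omega)$ for the expected number of misses \textsc{Equitable} incurs against a lazy adversary strategy starting from $\omega$. We will show by induction on $u$ that $E(\omega)$ is well defined, i.e.\ identical for all such strategies, and we will aim for an explicit closed form in terms of the unrevealed layer sizes. The key structural fact is that the distribution $\Pi_t$ is a function of $\omega_t$ alone. Since a lazy request keeps us among valid configurations and changes $\omega$ by the update rule \eqref{eq_update_rule}, the process is a Markov chain on work functions. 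The base case $u=0$ is immediate: the only valid configuration contains the whole support, by Corollary~\ref{corollary_revealed}, so every lazy request is a hit and $E=0$.

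For the inductive step, fix a lazy adversary request to an unrevealed page $p\in L_i$ with $i<x$. The expected cost of this single request is $1-\Pr_{\Pi_t}[p\in X]$. After the request, $\omega^p$ has one fewer unrevealed layer, and \textsc{Equitable} moves to $\Pi_{t+1}=P(\omega^p)$. Two things must be checked. First, the probability $\Pr_{\Pi_t}[p\in X]$ depends only on the index $i$, by symmetry of the uniform sampling among pages of the same layer. Second, \textsc{Equitable} can realize the transition from $\Pi_t$ to $\Pi_{t+1}$ while paying only this miss probability. This second point is essentially the claim that conditioning $P(\omega_t)$ on $p\in X$ gives $P(\omega^p)$ restricted appropriately. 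We would obtain it from the sequential sampling definition: requesting $p$ first in the sampling process yields exactly $\omega_t^{p}$, and the uniform choice makes the sampling order exchangeable. With this in hand, $E(\omega)=\min_i$-free recursion $E(\omega)=c_i(\omega)+E(\omega^{p})$, where $c_i(\omega)=1-\Pr[p\in X]$.

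It then remains to show that $c_i(\omega)+E(\omega^{p})$ is the same for every choice of unrevealed layer $i$. We would first compute $\Pr[p\in X]$ explicitly by tracking how the sampling process traverses layers. We expect a formula of the form $c_i=\frac{|L_1\cup\cdots\cup L_{i}|-i}{|L_1\cup\cdots\cup L_{x-1}|-(x-1)+\ldots}$, i.e.\ a ratio of excess counts of the prefix. We would then verify by induction that $E(\omega)$ equals a harmonic-type sum depending only on the total number of unrevealed pages and unrevealed layers, such as $H_{m}-H_{u}$, where $m=|N(\omega)|$. Two lazy requests in different orders lead to the same $\omega$, by a commutation check of \eqref{eq_update_rule}. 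An exchange argument on pairs of consecutive requests would therefore also suffice: it reduces the claim to showing $c(\omega,p)+c(\omega^p,q)=c(\omega,q)+c(\omega^q,p)$.

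The main obstacle is this exchange identity. It requires an exact computation of marginals $\Pr[p\in X]$ under $P(\omega)$, especially how merging layers via $L_{i-1}\cup L_i\setminus\{p\}$ reshapes the prefix constraints of Corollary~\ref{corollary:valid_prefix_constraints}. We would attack it by first handling the case where $p$ and $q$ lie in non-adjacent layers, where the effects decouple. The adjacent or same-layer case would then be a direct calculation.
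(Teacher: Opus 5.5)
\paragraph{There is a genuine gap.} The paper does not prove this lemma; it imports it as Lemma~4 of Achlioptas et al. Your skeleton is sound: induction on $U(\omega)$, a per-request cost of $\Pr_{P(\omega)}[p\notin X]$, and a reduction to showing that $F(p):=\Pr_{P(\omega)}[p\notin X]+E(\omega^p)$ does not depend on $p\in N(\omega)$. But that last statement is the entire content of the lemma, and you leave it, in the form of the exchange identity, as ``the main obstacle.'' At $U(\omega)=2$ the exchange identity is equivalent to the lemma, so the reduction alone removes no difficulty. The tools you plan to use also fail:
\begin{itemize}
\item \emph{The conditioning claim is false.} Take $k=3$ and $L_1=\{a,b\}$, $L_2=\{c,d\}$, $L_3=\{e\}$. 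Then $\{c,d,e\}$ has probability $1/6$, and each of $\{a,c,e\},\{a,d,e\},\{b,c,e\},\{b,d,e\}$ has probability $5/24$. Conditioning on $c\in X$ gives weights $4/14,5/14,5/14$. However, $\omega^c$ has $L_1=\{a,b,d\}$, so $P(\omega^c)$ is uniform on those three configurations. The sampling order is therefore not exchangeable. The per-request cost $\Pr_{P(\omega)}[p\notin X]$ is still right, but only because $P(\omega)\to P(\omega^p)$ can be realized by demand paging, as \textsc{OM} does.
\item \emph{No closed form in $|N|$ and $U$ exists.} In the example above, $E=7/12+1/2=5/12+2/3=13/12$. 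The work function $L_1=\{a\}$, $L_2=\{b,c,d\}$, $L_3=\{e\}$ has the same $|N|=4$ and $U=2$, yet $E=1/2+2/3=7/6$. Your guess $H_4-H_2=7/12$ matches neither.
\item \emph{The commutation claim needs qualifying.} $\omega^{pq}=\omega^{qp}$ holds only as work functions, since the top two singletons are swapped. The exchange is unavailable when $p,q\in L_1$, because requesting one expels the other from the support.
\end{itemize}

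\paragraph{The missing idea.} You need a recursive description of $P(\omega)$, not a conditional one. In the sampling process, picks of revealed pages leave the work function unchanged. Conditioned on being unrevealed, the next pick is uniform on the current $N$. Hence $P(\omega)$ is the law of $R(\omega)\cup\{u_1,\dots,u_{U(\omega)}\}$ for a \emph{uniformly random} lazy adversary strategy $u_1,u_2,\dots$, that is,
\[
P(\omega)=\frac{1}{|N(\omega)|}\sum_{y\in N(\omega)}P(\omega^y).
\]
Expand $|N(\omega)|\,F(p)=\sum_{y\in N(\omega)}\bigl(\Pr_{P(\omega^y)}[p\notin X]+E(\omega^p)\bigr)$ term by term:
\begin{itemize}
\item $y=p$: the term is $E(\omega^p)$.
\item $y\neq p$ with $y,p\in L_1$: $\Pr_{P(\omega^y)}[p\notin X]=1$, and $E(\omega^p)=E(\omega^y)$ by symmetry.
\item Every other $y$ (these are exactly $y\in N(\omega^p)$): by induction at $\omega^y$, $\Pr_{P(\omega^y)}[p\notin X]=E(\omega^y)-E(\omega^{yp})$. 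By induction at $\omega^p$, $E(\omega^p)=\Pr_{P(\omega^p)}[y\notin X]+E(\omega^{py})$. Since $E(\omega^{yp})=E(\omega^{py})$, the term becomes $E(\omega^y)+\Pr_{P(\omega^p)}[y\notin X]$.
\end{itemize}
Now apply the counting identity $\sum_{y\in N(\omega')}\Pr_{P(\omega')}[y\notin X]=|N(\omega')|-U(\omega')$ at $\omega'=\omega^p$. It holds because every valid configuration contains exactly $U(\omega')$ unrevealed pages. This gives
\[
F(p)=\frac{1}{|N(\omega)|}\Bigl(\sum_{y\in N(\omega)}E(\omega^y)+|N(\omega)|-U(\omega)\Bigr)\qquad\text{for every }p\in N(\omega).
\]
When $p\in L_1$, the $|L_1|-1$ unit terms exactly compensate for the pages of $L_1$ expelled from $N(\omega^p)$. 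The right-hand side is independent of $p$, which closes the induction without explicit marginals or any case split on layer distance.
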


The configuration update rule in \citet{achlioptas2000competitive} can be described as a distribution transition kernel $K(\omega_t,\omega_{t'})$ for serving the next request at time $t'$. It requires simulating the randomized selection process and computing all probabilities, which takes \(O(k^2)\) time. A distributional transition example of \textsc{Equitable} is shown in Figure~\ref{fig:configuration_transition_example}.

\begin{figure}[h]
    \centering
    \includegraphics[width=0.95\linewidth]{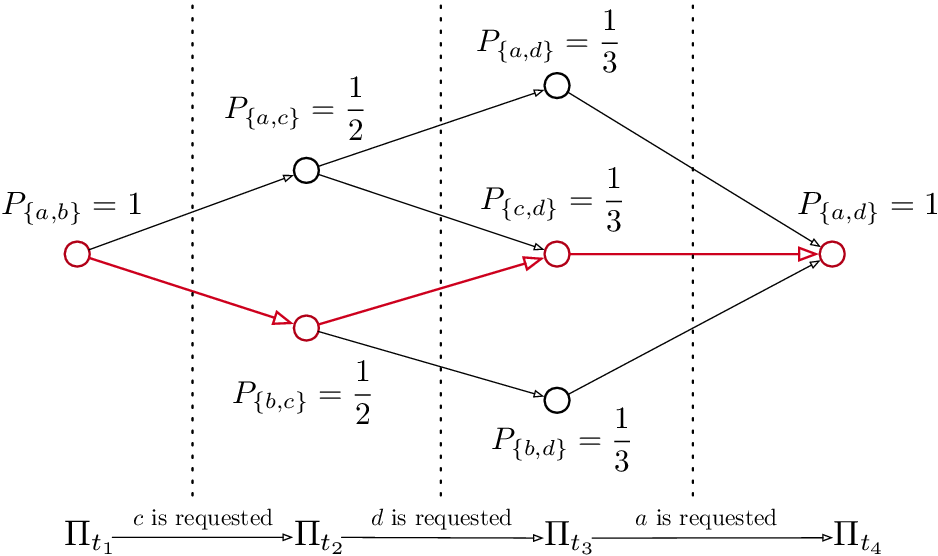}
    \caption{An example of \textsc{Equitable}'s distributional transitions (black) and \textsc{OnlineMin}'s update process (red), starting from the configuration ${a, b}$ and serving requests to $c$, $d$, and $a$ sequentially.}
    \label{fig:configuration_transition_example}
\end{figure}

Repeated requests to $L_0$ can make the support grow in \textsc{Equitable}, so \textit{forgiveness} is used to approximate the work function and cap the size of the support as discussed in detail in Appendix \ref{appendix_forgiveness}.

\paragraph{Fast Implementations.}

\citet{brodal2015onlinemin} proposed an efficient implementation called \textsc{OnlineMin}, which maintains a single configuration via a selection procedure based on random priorities that generates each configuration with the same probability as in \textsc{Equitable}'s distribution. We abbreviate \textsc{OnlineMin} as \textsc{OM} for the remainder of the paper.

Specifically, \textsc{OM} iteratively constructs sets $C_0,\ldots,C_k$ from the layer partition of $\omega$ by setting $C_0=\emptyset$ and, for $j=1,\ldots,k$, $C_j=\max_j \bigl(C_{j-1}\cup L_j\bigr)$, where $\max_j(S)$ returns the size-$j$ subset of $S$ with the highest priorities. Thus, all revealed pages lie in $C_k$, and $C_k$ is the current configuration of \textsc{OM}. This yields an efficient online update of $C_k$ upon a request to page $p$:
\begin{enumerate}
    \item If $p \in C_k$: $C_k'=C_k$.
    \item If $p \notin C_k$ and $p \in L_0$: $C_k'=C_k \backslash \min(C_k) \cup \{p\}$.
    \item If $p \notin C_k$ and $p \in L_i, i > 0$: $C_k'=C_k \backslash \min(C_j)\cup \{p\}$, where $j \;=\; \min \{ x \ge i \bigm| |C_x \cap C_k| = x\}$.
\end{enumerate}

This yields a cache update rule with $\mathcal{O}(\log k)$ computation per request. In short, rather than explicitly maintaining a distribution, \textsc{OM} efficiently maintains and updates a single valid configuration sampled from the same distribution as \textsc{Equitable}, as illustrated in Figure \ref{fig:configuration_transition_example}.

\textbf{\textsc{OnOPT}.} Building on \textsc{OM}, \citet{moruz2012outperforming} propose a class of algorithms called \textsc{OnOPT} as shown in Algorithm \ref{algo:onopt}. \textsc{OnOPT} adopts \textsc{OM}'s framework while allowing a more flexible priority assignment. \textsc{OM} itself is an instance of \textsc{OnOPT}.

\begin{algorithm}[!h]
\caption{\textsc{OnOPT}}
\label{algo:onopt}
\begin{algorithmic}[1]
    \STATE $S(\omega) \coloneqq$ the support of the current work function
    \STATE $C \coloneqq$ the current cache configuration
    \WHILE{receiving a request to page $p$}
        \IF {$p \notin C$}
            \IF{$p \in L_0$}
                \STATE Evict a page $x \in C$ with the lowest priority
            \ELSIF{$p \in L_i, i > 0$}
                \STATE Apply \textsc{OM}'s rule to select eviction candidates: $E \leftarrow C \cap \bigcup_{l=1}^{z} L_l$, where $z \;=\; \min\Bigl\{\, j \in \{i,\ldots,k\} \bigm| |C \cap \bigcup_{l=1}^{j} L_l| = j \Bigr\}$
                \STATE Evict a page $x \in E$ with the lowest self-defined priority
            \ENDIF
            \STATE Cache page $p$ and update $C$
        \ENDIF
    \ENDWHILE
\end{algorithmic}
\end{algorithm}

\textbf{Observation 3: Tracking valid configurations online.} At a high level, \textsc{OM} provides an efficient procedure for tracking valid configurations online. \textsc{OnOPT} builds on this idea to define a broad class of algorithms that always remain in a valid configuration. Building on \textsc{OnOPT}, one can achieve $H_k$-competitiveness.

This naturally motivates incorporating predictions into \textsc{OnOPT} to guide evictions. We find that a new learning-augmented design is needed to fully exploit relatively accurate predictions while adding only an $O(1)$ term to the competitive ratio of a robust baseline, thereby achieving $H_k + O(1)$ robustness. Note that $H_k + O(1)$ is optimal up to an additive constant: to obtain $H_k$ robustness, we would have to follow the original $H_k$-competitive online algorithm exactly, without any use of predictions, which falls outside the learning-augmented setting.

\section{Relative Prediction Budget} \label{sec:rpb}

In this section, we review existing learning-augmented paging algorithms and introduce a new notion that captures their core methodologies while also revealing their limitations.

A central difficulty in learning-augmented design is that predictions can be extremely helpful when accurate, yet arbitrarily harmful when adversarial. Robust designs therefore rely on \emph{gating} prediction-driven actions to prevent prediction errors from causing unbounded costs. We make this idea explicit via the \textit{Relative Prediction Budget (RPB)}: fix a robust baseline algorithm $\mathcal{A}$ and maintain a nonnegative budget $B_t$. The algorithm may deviate from $\mathcal{A}$ and take prediction-driven actions only when it can pay from $B_t$. Prior learning-augmented paging algorithms can be viewed as implicitly maintaining and updating such a budget \textit{relative to $\mathcal{A}$}, but with different earning rules.

For example, \textsc{BlindOracle\&LRU} \cite{wei2020better} switches between \textsc{BlindOracle} and \textsc{LRU}, always following the one with smaller cost on the observed prefix. Implicitly, it earns one unit of budget \emph{globally} at each step whenever \textsc{BlindOracle}'s cumulative prefix cost is lower \emph{relative} to \textsc{LRU}'s, and it spends one unit of budget by following \textsc{BlindOracle}. This budget earning rule is coarse-grained, since it depends only on global past performance. As a result, its total cost is at most $2$ times \emph{relative} to \textsc{LRU}'s cost, yielding $2k$-robustness. \textsc{F\&R}~\cite{sadek2024algorithms} switches to a robust algorithm whenever it performs worse than \textsc{Belady} at the current step on the observed request sequence. Thus, it earns prediction budget whenever its current-step cost matches \textsc{Belady}'s, and spends it by continuing to follow the predictions. However, detecting the offline optimum's cost in this way requires excessive recomputation.

In contrast, Marker-based methods such as \textsc{PredictiveMarker} \cite{pmlr-v80-lykouris18a} and \textsc{LMarker} \cite{rohatgi2020near} earn budget \emph{locally}, \emph{relative} to the cost of \textsc{Marker}. Specifically, budget is earned at the events where the robust baseline \textsc{Marker} \cite{fiat1991competitive} must pay, namely the arrival of a clean request within the current phase, which initiates an eviction chain. \textsc{PredictiveMarker} earns $H_k$ units of budget per clean request, and continues to follow predictions along the eviction chain until its length exceeds $H_k$. This aggressive earning rule inflates its robustness to $4H_k$, nearly twice the $(2H_k-1)$-competitiveness of \textsc{Marker}. In contrast, \textsc{LMarker} earns only a single unit of budget per clean request, and follows predictions only once per eviction chain. This conservative rule improves robustness to $2H_k+4$. However, it can underutilize relatively accurate predictions when prediction-driven evictions are actually better than \textsc{Marker}'s random eviction, which limits its practical performance.

\textbf{Observation 4: Limitations of existing budget rules.} The above examples show that existing robust learning-augmented algorithms earn prediction budget relative to a robust baseline's cost, ensuring that performance does not deviate too far. However, there exist two limitations:
\begin{enumerate}
    \item Lack of fine-grained evaluation of algorithm performance: \textsc{BlindOracle\&LRU} relies only on global performance comparisons, which may underutilize locally accurate predictions and overuse predictions under adversarial inputs, leading to a robustness bound with an extra multiplicative factor of $2$.
    \item Lack of a fine-grained correlation between the prediction budget and prediction effectiveness: \textsc{PredictiveMarker} and \textsc{LMarker} assess the cost caused by prediction errors via an eviction-chain length threshold, and abandon prediction-driven evictions once the threshold is reached. However, their thresholds, i.e., $H_k$ and $1$, are fixed and do not adaptively correlate with the effectiveness of prediction-driven evictions, as reflected by past performance. This can lead to either overusing predictions under the $H_k$ threshold when prediction errors are large, resulting in $4H_k$-robustness, or underutilizing predictions under the $1$ threshold when predictions are relatively accurate, resulting in weak practical performance.
\end{enumerate}
Since our goal is to keep robustness within an additive constant of the robust baseline’s competitive ratio, we adaptively correlate the prediction budget with both prediction effectiveness and the baseline’s performance.

\section{A Learning-Augmented Design}

We first present several key conclusions, building on the principles underlying
online-optimal algorithms and the insights stated above. These conclusions then naturally lead to our new algorithm.

\subsection{Benefiting From Predictions} \label{sec:benefiting_from_predictions}

Observations $1$ and $2$ motivate incorporating predictions upon a request to a page outside the support, i.e., a page in $L_0$. At such a step, the uncertainty is maximized, making accurate predictions particularly valuable for guiding evictions and reducing the cost incurred by lazy requests that may follow. This also creates an opportunity for a learning-augmented algorithm to behave optimally, as shown next.

In Theorem~\ref{theorem_1_consistency} (as proved in Appendix \ref{appendix_1_consistency}), ``perfect predictions'' means that the advice can guide the algorithm by prioritizing for eviction the pages that the offline optimum will evict before their next requests, ahead of the remaining pages. Here, ``following the predictions'' means the algorithm evicts the page following the priorities given by the advice. For example, if the predictions provide next-arrival times, the algorithm evicts the cached page with the largest predicted next-arrival time.

\begin{theorem}\label{theorem_1_consistency}
    With perfect predictions, a learning-augmented algorithm \textsc{ALG} that, upon a cache miss on a request to a page in $L_0$, evicts following the predictions is optimal, and hence $1$-consistent.
\end{theorem}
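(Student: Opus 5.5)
We need to show that \textsc{ALG} incurs exactly $cost(\textsc{OPT})$ misses. I would take \textsc{ALG} to be an \textsc{OnOPT}-type algorithm: it stays within valid configurations, and on misses to $L_0$ it uses the prediction priority. The argument has three steps.

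\textbf{Step 1: misses only on $L_0$.} By Corollary~\ref{corollary_offline_optimal_miss_on_L0}, every algorithm pays at least one miss per request to a page in $L_0$, so $cost(\textsc{OPT})$ equals the number of $L_0$-requests (after warm-up, which costs the same for everyone). It therefore suffices to show that \textsc{ALG} never misses on a request to a page in $S(\omega)$. Such a miss could only be on an unrevealed page, since Corollary~\ref{corollary_revealed} says a valid configuration contains all revealed pages.

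\textbf{Step 2: the invariant.} I would prove by induction over time that, under perfect predictions, \textsc{ALG}'s cache equals the configuration of a fixed offline optimum that follows Belady's rule; Belady's configuration is always valid. Fix an optimal schedule, and define ``perfect predictions'' as in the statement: the pages the optimum evicts before their next request rank ahead of all others.
\begin{itemize}
\item \emph{Request to $L_0$.} Both algorithms miss. \textsc{ALG} evicts the top-priority page. That page is one the optimum would evict before its next request (it exists, since the optimum must evict some page now), so the eviction coincides with Belady's choice, up to the standard exchange argument that any such page may be evicted. The configurations therefore stay equal.
\item \emph{Request to $S(\omega)$.} By Step 1's reasoning the optimum hits, and by the inductive equality \textsc{ALG}'s cache equals the optimum's, so \textsc{ALG} also hits. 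No eviction happens and the configurations remain equal.
\end{itemize}
Because equality holds at all times, the case of an \textsc{ALG} miss on an unrevealed page never arises.

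\textbf{Step 3: conclusion.} Summing over requests, $cost(\textsc{ALG})=cost(\textsc{OPT})$, which gives optimality and hence $1$-consistency.

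\textbf{Main obstacle.} The delicate point is the $L_0$ case of Step 2. The top-priority page need not be exactly the page the chosen optimum evicts now; it may only be a page the optimum evicts \emph{later} before its next request. I would handle this with an exchange argument, as in Belady's proof. Since that page $x$ is not requested before the optimum evicts it, we can swap it with the optimum's current victim $y$, evicting $x$ now and $y$ at $x$'s later eviction time. If $y$ is requested in between, charge that miss against the miss the optimum avoids by this swap. This yields another optimal schedule that agrees with \textsc{ALG}. We then redefine the reference optimum at each step, keeping it optimal on the remaining suffix, and the induction goes through.
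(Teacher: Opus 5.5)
\textbf{There is a genuine gap, and it sits in the re-selection step you flag as the main obstacle.} Your induction needs two properties of whichever optimum currently serves as the reference.
(a) It hits on every request to $S(\omega)$; you take this from Corollary~\ref{corollary_offline_optimal_miss_on_L0}.
(b) The predictions are perfect with respect to it; you use this in the $L_0$ case.
Neither property is preserved by the exchange as you describe it.

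For (a): Corollary~\ref{corollary_offline_optimal_miss_on_L0} holds for an optimum that stays in valid configurations, such as Belady, but not for an arbitrary optimal schedule. Take $k=2$ and requests $a,b,c,a,b$. Evicting $a$ at $c$ and then $c$ at the second $a$ costs $4=cost(\textsc{OPT})$. Yet it misses on the second $a$, which is a request to $S(\omega)$. Your exchange explicitly allows $y$ to be requested before the time $s$ at which the reference evicts $x$, and charges that miss. That certifies only global (or suffix) optimality of the new schedule, so (a) can be lost after the first switch.

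For (b): the priorities are fixed relative to the \emph{original} optimum. Once you redefine the reference, nothing you have said guarantees that \textsc{ALG}'s top-ranked page is one the \emph{new} reference evicts before its next request.

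\textbf{The missing idea is prefix optimality of the reference $O$:} its cost on every prefix equals $\min(\omega)$. Let $t_0$ be the current $L_0$-request. Suppose $y$ were requested at some $u\le s$. The swapped schedule would miss exactly when $O$ does on $(t_0,u)$ and would hit at $u$. Its cost on the prefix ending at $u$ would then be $\min(\omega_u)-1$, which is impossible. So the case you propose to charge must instead be ruled out. Once it is, the swap changes no miss times, which restores (a). Both $x$ and $y$ are still evicted before their next requests, so the top-ranked set is unchanged, which restores (b). With this addition your argument closes.

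\textbf{The paper avoids all of this with a direct contradiction against one fixed optimum.}
\begin{itemize}
\item Take \textsc{ALG}'s first miss on a support page $p$, at time $t$.
\item \textsc{OPT} hits there, so \textsc{ALG} evicted $p$ at an earlier $L_0$-miss $t'$, while \textsc{OPT} holds $p$ from before $t'$ through $t$.
\item By perfectness, no page in \textsc{ALG}'s cache at $t'$ is one that \textsc{OPT} evicts before its next request.
\item Hence \textsc{ALG}'s cache equals \textsc{OPT}'s at $t'$. This contradicts the fact that \textsc{OPT} must evict one of these pages at that $L_0$-miss.
\end{itemize}
This route needs no exchange argument and no invariant that the two caches coincide. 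It also makes no \textsc{OnOPT} assumption on \textsc{ALG}.

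A minor point on your Step 1: it is not true that every algorithm misses on every $L_0$-request, since an algorithm outside valid configurations may hold an $L_0$ page. You only need that $cost(\textsc{OPT})$ equals the number of $L_0$-requests.
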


We note that Theorem~\ref{theorem_1_consistency} also establishes the minimal prediction-usage requirement for achieving $1$-consistency. The detailed discussion of prediction usage and prior work on reducing prediction usage is deferred to Appendix~\ref{appendix_reduce_use_of_predictions}.

\subsection{A New Learning-Augmented Framework}

Observation $3$, together with Section~\ref{sec:rpb}, motivates incorporating predictions into the \textsc{OnOPT} framework while designing a new method to earn prediction budget relative to both prediction quality and the offline optimum's cost.

We propose an algorithmic framework \textsc{RPB-OnOPT} (i.e., \textsc{OnOPT} augmented with relative prediction budget), as shown in Algorithm \ref{algo_rpb_onopt}. \textsc{RPB-OnOPT} earns prediction budget via a fine-grained comparison between the effectiveness of the algorithm's evictions and the evictions the baseline would have made in the same situation. By doing so, we can both make full use of good predictions and avoid the high costs that may be caused by large prediction errors.

\begin{algorithm}[!h]
\caption{\textsc{RPB-OnOPT}}
\label{algo_rpb_onopt}
\begin{algorithmic}[1]
    \STATE $\omega \coloneqq$ the current work function
    \STATE $C \coloneqq$ the current cache configuration
    \STATE $B \coloneqq$ the prediction budget
    \STATE $\tau \coloneqq$ the $O(1)$ budget value reset at an $L_0$-miss
    \STATE $Y \coloneqq$ the eviction-effectiveness metric
    \STATE $V \coloneqq$ the eviction candidate set
    \STATE $\mathcal{G}_{RPB}, \mathcal{U}_{RPB} \coloneqq$ the RPB gate and update functions
    \WHILE{receiving a request to page $p$}
        \IF {$p \notin C$}
            \IF{$p \in L_0$}
                \STATE Evict a page $x \in C$ following the predictions
                \STATE $B \leftarrow \tau$
            \ELSIF{$p \in L_i, i > 0$}
                \STATE Apply \textsc{OM}'s rule to select eviction candidates: $V \leftarrow C \cap \bigcup_{l=1}^{z} L_l$, where $z \;=\; \min\Bigl\{\, j \in \{i,..,k\} \bigm| |C \cap \bigcup_{l=1}^{j} L_l| = j \Bigr\}$
                \IF {$\mathcal{G}_{RPB}(Y, \omega)$}
                    \STATE $B \leftarrow B + 1$ \linelabel{line:budget_charging}
                \ENDIF
                \IF {$B > 0$}
                    \STATE Evict a page $x \in V$ following the predictions
                    \STATE $B \leftarrow B - 1$
                \ELSE
                    \STATE Evict a page \(x \in V\) with the lowest self-defined priority \linelabel{line:evict_with_lowest_self_defined_priority}
                \ENDIF
            \ENDIF
            \STATE Cache page $p$, update $C$
            \STATE $Y \leftarrow \mathcal{U}_{RPB}(\omega)$
        \ENDIF
    \ENDWHILE
\end{algorithmic}
\end{algorithm}

At each $L_0$-miss, the budget is reset to an $O(1)$ constant $\tau$, which allows for prediction-driven evictions during the subsequent lazy adversary requests.

Then, we show that \textsc{RPB-OM}, which integrates \textsc{OM} into \textsc{RPB-OnOPT}, achieves both $1$-consistency and $(H_k + O(1))$-robustness. The formal algorithm description of \textsc{RPB-OM} is provided in Appendix \ref{appendix_rpb_onlinemine_algo}. Specifically, \textsc{RPB-OM} substitutes the self-defined priority in Line \ref{line:evict_with_lowest_self_defined_priority} in Algorithm \ref{algo_rpb_onopt} with \textsc{OM}'s priority. Moreover, \textsc{RPB-OM} sets $\mathcal{G}_{RPB}(Y, \omega) = (U(\omega) \leq (Y+2)/e - 2)$ and $\mathcal{U}_{RPB}(\omega) = U(\omega)$, where $U(\omega) = k - |R(\omega)|$ is the number of unrevealed layers. The intuition behind the choice of the RPB functions is twofold. First, in general, they correlate the effectiveness of predictions with the prediction budget, and the budget is charged only when the previous evictions perform well relative to the robust baseline. Second, we keep the gate $\mathcal{G}_{RPB}$ concise by using a worst-case lower bound on \textsc{OM}'s accumulated expected cost since the last miss; this suffices for robustness close to the $H_k$-competitiveness of \textsc{OM}. A finer alternative is to estimate \textsc{OM}'s expected cost at each step directly, which leads to different RPB functions but is still computable without simulating \textsc{OM}'s cache. Appendix~\ref{appendix_hit_credit} presents one such design as the hit-credit variant of \textsc{RPB-OM}.

\subsection{Optimal Consistency}
Theorem \ref{theorem:rpb-onopt_1_consistent} follows from Theorem \ref{theorem_1_consistency}.

\begin{theorem} \label{theorem:rpb-onopt_1_consistent}
    \textsc{RPB-OnOPT} is $1$-consistent.
\end{theorem}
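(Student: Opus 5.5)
The plan is to reduce the statement to Theorem~\ref{theorem_1_consistency}. That theorem only asks that, on every cache miss to a page in $L_0$, the algorithm evicts following the predictions. It leaves evictions on misses to pages in $L_i$, $i>0$, unconstrained, provided they keep the algorithm in the regime where its argument applies.

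\textbf{Step 1: the $L_0$ branch.} Inspecting Algorithm~\ref{algo_rpb_onopt}, the $L_0$ branch always evicts a page $x\in C$ following the predictions. This is unconditional: the budget $B$ and the gate $\mathcal{G}_{RPB}$ play no role in this branch, and $B$ is only reset afterwards. So \textsc{RPB-OnOPT} satisfies the hypothesis of Theorem~\ref{theorem_1_consistency} for every choice of $\tau$, $\mathcal{G}_{RPB}$, $\mathcal{U}_{RPB}$ and self-defined priority.

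\textbf{Step 2: lazy requests.} I must also check that what happens on misses to support pages cannot break the optimality argument. The way I expect Theorem~\ref{theorem_1_consistency} to be proved is an invariant. Under perfect predictions, each $L_0$-miss eviction removes a page that \textsc{OPT} would evict before its next request. Hence after every request the cache of \textsc{ALG} coincides with a valid configuration consistent with the offline optimum's trajectory, namely the configuration \textsc{OPT} is actually in. By Corollary~\ref{corollary_offline_optimal_miss_on_L0}, \textsc{OPT} misses exactly on $L_0$ requests.

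Given that invariant, a request to $p\in L_i$ with $i>0$ has $p$ in \textsc{OPT}'s current configuration, which equals $C$. The request is therefore a hit, and the $L_i$ branch with its budget-gated choice between predicted and self-defined eviction is never executed. Consequently the cost of \textsc{RPB-OnOPT} equals the number of $L_0$ requests, which equals $cost(\textsc{OPT})$, giving ratio $1$.

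\textbf{Main obstacle.} The delicate point is confirming that Theorem~\ref{theorem_1_consistency} is stated for an arbitrary algorithm whose only constraint is the $L_0$ rule, and not for a specific base rule on lazy misses. If its proof instead relies on the non-$L_0$ evictions being of a particular form, I would argue directly that no non-$L_0$ miss ever occurs, by induction on requests. The base case is the warm-up, where all algorithms coincide. The inductive step shows that perfect-prediction eviction at an $L_0$-miss yields exactly \textsc{OPT}'s (Belady-consistent) configuration, which contains the entire support relevant to \textsc{OPT}. Then the $L_i$ branch is vacuous, and the budget mechanism is irrelevant to consistency.
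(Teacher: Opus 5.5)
Your proposal is correct and is essentially the paper's argument: the paper proves this theorem in one line by invoking Theorem~\ref{theorem_1_consistency}, and \textsc{RPB-OnOPT} meets its only hypothesis because the $L_0$ branch always evicts following the predictions. One caveat is that the invariant you guess in Step 2 (that the cache coincides exactly with \textsc{OPT}'s configuration) is stronger than needed, and it can fail when several cached pages are ones \textsc{OPT} will evict, since ties among them may be broken differently. The paper's proof of Theorem~\ref{theorem_1_consistency} instead argues by contradiction at the \emph{first} miss on a support page, so every earlier eviction is an $L_0$-miss eviction and the budget-gated $L_i$ branch never matters; this removes your ``main obstacle''.
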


\begin{corollary}
    \textsc{RPB-OM} is $1$-consistent.
\end{corollary}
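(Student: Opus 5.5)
The corollary should follow by specializing Theorem~\ref{theorem:rpb-onopt_1_consistent} to \textsc{RPB-OM}. I would first check that \textsc{RPB-OM} is a genuine instance of Algorithm~\ref{algo_rpb_onopt}. It keeps the $L_0$-miss branch unchanged: evict following the predictions and reset $B \leftarrow \tau$. It changes only two things. First, the self-defined priority in Line~\ref{line:evict_with_lowest_self_defined_priority} is replaced by \textsc{OM}'s random priority. Second, the gate and update functions are fixed to $\mathcal{G}_{RPB}(Y,\omega) = (U(\omega) \le (Y+2)/e - 2)$ and $\mathcal{U}_{RPB}(\omega) = U(\omega)$. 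Neither choice touches the $L_0$-miss branch.

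The key step is that the proof of Theorem~\ref{theorem:rpb-onopt_1_consistent}, via Theorem~\ref{theorem_1_consistency}, uses only one property of the algorithm. Under perfect predictions, every miss on a page in $L_0$ is served by evicting according to the predictions. That theorem places no restriction on how misses to pages in $L_i$, $i>0$, are handled. I would still confirm that those evictions stay inside the \textsc{OM} candidate set $V$. This is what keeps the configuration valid, so that Theorem~\ref{theorem_1_consistency}'s argument goes through. I would also confirm that this holds whether the eviction is prediction-driven or priority-driven, since both choose $x \in V$.

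It follows that \textsc{RPB-OM} meets the hypothesis of Theorem~\ref{theorem_1_consistency}. Its cost with perfect predictions then equals $cost(\textsc{OPT})$, so it is $1$-consistent.

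The only point needing care is randomness. \textsc{OM}'s priorities are random, so one must argue that the optimality holds for every realization of the priorities, not just in expectation. This holds because Theorem~\ref{theorem_1_consistency}'s argument is pathwise: for each fixed priority assignment we get a deterministic algorithm satisfying the hypothesis. Hence $\mathbb{E}[cost(\textsc{RPB-OM})] = cost(\textsc{OPT})$.
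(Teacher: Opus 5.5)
Your proposal is correct and follows the paper's route: \textsc{RPB-OM} is an instance of \textsc{RPB-OnOPT} whose $L_0$-miss branch evicts according to the predictions, so Theorem~\ref{theorem:rpb-onopt_1_consistent} (itself immediate from Theorem~\ref{theorem_1_consistency}) applies directly. Your extra check that $L_i$-evictions stay in $V$ is harmless but unnecessary: the first-miss argument in the proof of Theorem~\ref{theorem_1_consistency} shows that under perfect predictions no miss on a support page ever occurs, so the $L_i$ branch, and with it \textsc{OM}'s random priorities, is never invoked.
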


\subsection{Robustness Near Optimal Competitiveness} \label{sec:robustness_near_optimal}
We first present a fundamental lemma below that serves as the basis for subsequent proofs. It applies to all algorithms in the \textsc{OnOPT} class, including \textsc{OM} and \textsc{RPB-OM}.


Define the difference $D$ between the caches of two algorithms as the number of pages in $C_1$ but not in $C_2$, namely $D = |C_1 \setminus C_2|$. Denote by $\Delta D$ the change of the cache difference that happens when serving a request. Lemma \ref{lemma:onopt_miss_D_change} characterizes how the cache difference changes, with its proof given in Appendix \ref{appendix:proof_lemma_onopt_miss_D_change}.

\begin{lemma} \label{lemma:onopt_miss_D_change}
    Suppose that, when serving a lazy adversary request, \textsc{OnOPT} incurs a miss, whereas \textsc{OM} incurs a miss with probability $x$. If \textsc{OnOPT} currently evicts a page according to \textsc{OM}'s priorities, then the change of expected difference $D$ between the cache configurations of the two algorithms satisfies $\Delta D \leq -1 + x$. Otherwise, $\Delta D \leq x$.
\end{lemma}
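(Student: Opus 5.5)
We take $C_1$ to be the cache of \textsc{OnOPT} (or \textsc{RPB-OM}) and $C_2$ the cache of \textsc{OM}, with $D=|C_1\setminus C_2|$. The plan is to split the single step into the two algorithms' individual moves and bound $\Delta D$ as the sum of their contributions. We condition on the realization of \textsc{OM}'s random priorities and then take expectations. Since the request is a lazy adversary request to an unrevealed page $p\in L_i$ with $p\notin C_1$, the page $p$ enters $C_1$ after the request and lies in $C_2$ after the request as well.

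\textbf{Step 1: \textsc{OnOPT}'s move.} Loading $p$ into $C_1$ affects $D$ only through $p$ and the evicted page $x\in V$.
\begin{itemize}
\item If $p\in C_2$ already (so \textsc{OM} hits), then $p$ becomes common to both caches and does not increase $D$.
\item Evicting $x$ decreases $D$ by $1$ if $x\notin C_2$, and leaves $D$ unchanged if $x\in C_2$.
\end{itemize}
So in all cases the \textsc{OnOPT} move gives $\Delta D\le 0$. It gives $\Delta D\le -1$ exactly when the evicted page is absent from \textsc{OM}'s cache.

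\textbf{Step 2: \textsc{OM}'s move, which occurs with probability $x$.} \textsc{OM} loads $p$, which is now already in $C_1$, and evicts some page $y$. Removing $y$ from $C_2$ can raise $D$ by at most $1$, and only if $y\in C_1$. Hence \textsc{OM}'s move contributes at most $+1$ with probability $x$, i.e.\ at most $x$ in expectation. Adding this to the bound from Step 1 already gives the second claim, $\Delta D\le x$.

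\textbf{Step 3: the case where \textsc{OnOPT} evicts by \textsc{OM}'s priorities.} This is the main obstacle. We must show that the page \textsc{OnOPT} evicts is, in expectation, not in \textsc{OM}'s cache, up to the event where \textsc{OM} also misses. The tool is the construction $C_j=\max_j(C_{j-1}\cup L_j)$ together with Corollary~\ref{corollary:valid_prefix_constraints}. Both caches are valid, since \textsc{OnOPT} stays in valid configurations.

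On the event that \textsc{OM} hits on $p$, we have $p\in C_2$ with $p\in L_i$. By the prefix constraints, \textsc{OM}'s cache then holds at most $z$ pages of $L_1\cup\dots\cup L_z$. One of these slots is occupied by $p$, so \textsc{OM} cannot contain all of $V=C_1\cap\bigcup_{l\le z}L_l$.

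We then argue that the lowest-\textsc{OM}-priority page of $V$ is excluded from \textsc{OM}'s cache. The reason is that \textsc{OM}'s greedy $\max_j$ selection keeps the highest-priority pages within each prefix constraint, and the tight index $z$ is exactly the first prefix at which that constraint binds. This gives $\Delta D\le -1$ on the hit event.

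On the miss event, which has probability $x$, we fall back on the crude bound from Steps 1--2: \textsc{OnOPT}'s move gives $\Delta D\le 0$ and \textsc{OM}'s move gives at most $+1$. Averaging the two events yields
\[
\Delta D\le -(1-x)+x\cdot 0=-1+x .
\]
If this averaging turns out too loose, we would instead couple the two evictions directly. On a miss, \textsc{OM} evicts $\min(C_j)$ for the same priority ordering, which should coincide with, or be tied to, \textsc{OnOPT}'s victim, so that the two moves cancel.

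The delicate point is justifying the exclusion claim rigorously when $C_1\neq C_2$ below level $z$. We would attempt it by induction on $j$, comparing $C_1\cap\bigcup_{l\le j}L_l$ with \textsc{OM}'s $C_j$.
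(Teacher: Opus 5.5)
\textbf{Gap on the miss event.} Steps 1--2 correctly give $\Delta D\le x$ in the non-priority case. The priority case, however, has a genuine gap on the event that \textsc{OM} also misses. There your own crude bound gives $\Delta D\le 0+1=1$, so averaging yields $-(1-x)+x=-1+2x$, not $-1+x$. The term $x\cdot 0$ in your display is not supported by anything you proved.

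To get $-1+x$ you need a second fact: if both algorithms miss and \textsc{OnOPT} evicts by \textsc{OM}'s priorities, then $\Delta D\le 0$. The paper proves the hit event and the miss event as two separate lemmas and then averages. Your fallback, that \textsc{OM}'s victim ``should coincide with'' \textsc{OnOPT}'s, is false in general. \textsc{OM}'s candidate set is cut off at the saturation index of its own cache. Once the caches have diverged (e.g.\ after prediction-driven evictions), \textsc{OnOPT}'s victim $q$ need not be in $C_2$ at all.

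The right statement is $q\notin C_2'$, where $C_2'$ is \textsc{OM}'s cache after serving $p$. Because the request is lazy, $C_2'$ is the priority-greedy configuration among the valid configurations of $\omega$ that contain $p$. So $p\in C_2'\setminus C_1$ plays the role that $p\in C_2\setminus C_1$ plays on the hit event, and the same exchange argument goes through. The swap never removes $p$, since $q,y\in C_1$ and $p\notin C_1$. Then either \textsc{OM} also evicts $q$ (so $\Delta D=0$), or $q\notin C_2$, and \textsc{OnOPT}'s $-1$ cancels the at most $+1$ from \textsc{OM}'s eviction.

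\textbf{Gap on the hit event.} The hit-event exclusion, which you rightly flag as the crux, is also not yet proved. Your heuristic conflates the two caches: $z$ is the first saturated prefix of $C_1$, not a place where \textsc{OM}'s constraint binds.

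Your count does produce some $y\in V\setminus C_2$, but that is not enough. Write $Q_l=L_1\cup\dots\cup L_l$. If $y$ sits in an earlier layer than $q\in L_b$ and some intermediate prefix $Q_l$ is saturated by $C_2$, then $C_2\setminus\{q\}\cup\{y\}$ is invalid and nothing is contradicted.

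The missing idea is to take the witness beyond $Q_h$, where $h<b$ is the last index with $|C_2\cap Q_h|=h$.
\begin{enumerate}
\item Suppose $C_1\cap(Q_z\setminus Q_h)\subseteq C_2$. Then $|C_1\cap Q_h|\le h$ and $|C_2\cap Q_z|\le z$ force $|C_1\cap Q_h|=h$.
\item Minimality of $z$ then forces $h<i$, so $p\in Q_z\setminus Q_h$. This gives $|C_2\cap Q_z|\ge z+1$, a contradiction, so a witness $y\in(C_1\cap(Q_z\setminus Q_h))\setminus C_2$ exists.
\item For this $y$, every prefix $Q_l$ with $h<l<b$ is unsaturated by $C_2$. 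Hence if $q\in C_2$, the swap $C_2\setminus\{q\}\cup\{y\}$ is valid and raises priority, since $\pi(y)>\pi(q)$. This contradicts the greedy construction of \textsc{OM}'s cache, so $q\notin C_2$.
\end{enumerate}
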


The robustness analysis of \textsc{RPB-OM} relies on the competitiveness of \textsc{OM}. \citet{brodal2015onlinemin} prove \textsc{OM}'s optimality indirectly, by showing that its cache configuration distribution matches that of \textsc{Equitable}, whose $H_k$-competitiveness is established via a potential-based argument. The potential function is defined below.
\begin{definition} [\textsc{OM}'s potential function] \label{def:om_potential_function}
    Let $\phi(\omega)$ denote the expected
    cost incurred by \textsc{OM} when serving a lazy adversary strategy from work function $\omega$.
\end{definition}

However, once predictions are incorporated, \textsc{RPB-OM}'s configuration may diverge from \textsc{OM}'s, making it difficult to analyze \textsc{RPB-OM}'s robustness using the original potential function. We therefore revisit \textsc{OM} in Appendix~\ref{appendix:om_re-analysis}, where we derive Corollary~\ref{corollary:om_l0_potential_change} and prove a new property of \textsc{OM} in Lemma~\ref{lemma:om_potential}. These results together facilitate the proof of \textsc{RPB-OM}'s robustness.
\begin{corollary} \label{corollary:om_l0_potential_change}
    For any $\omega$, when \textsc{OM} serves a request to $p \in L_0$, the potential change satisfies $\Delta \phi = \phi(\omega^p) - \phi(\omega) \le H_k - 1$.
\end{corollary}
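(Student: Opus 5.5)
We want to bound $\phi(\omega^p)-\phi(\omega)$ for $p\in L_0$, where $\phi(\omega)$ is the expected cost of \textsc{OM} under a lazy adversary strategy starting from $\omega$. By Lemma~\ref{lemma_equitable_against} (and the fact that \textsc{OM} reproduces \textsc{Equitable}'s distribution), $\phi(\omega)$ does not depend on which lazy adversary strategy is used. We may therefore evaluate both $\phi(\omega)$ and $\phi(\omega^p)$ with strategies of our choosing, coupled so that their difference is easy to read off.

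\textbf{Step 1: reduce to the structure after an $L_0$ request.} By the update rule~\eqref{eq_update_rule}, requesting $p\in L_0$ gives the layers
\[
(L_0\setminus\{p\},L_1,\dots,L_{k-2},L_{k-1}\cup L_k,\{p\}).
\]
Thus $p$ becomes the only revealed page. The old support, minus nothing, is merged into unrevealed layers with the top two layers combined.

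\textbf{Step 2: bound $\phi(\omega^p)$ by the fully-unrevealed case.} I would show that $\phi$ is monotone under coarsening of layers, i.e., merging layers can only increase $\phi$. The worst case is then the configuration with one revealed page and $k$ candidate pages, all unrevealed, sharing $k-1$ slots. I would compute $\phi$ in this case directly via the Equitable distribution. With $j$ unrevealed layers remaining, each containing one more candidate than slots, a lazy adversary request to a page absent from the cache misses with probability $1/(j+1)$-type terms. Summing over $j=1,\dots,k-1$ gives $\phi\le \sum_{j=2}^{k}1/j=H_k-1$.

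The cleanest approach is to argue directly from the layered representation: with $U$ unrevealed layers in the canonical maximally merged form, $\phi$ equals $H_{U+1}-1$. Known expressions for Equitable's cost on lazy sequences (Achlioptas et al.) support this, and in general $\phi\le H_{U+1}-1$.

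\textbf{Step 3: conclude.} Since $\phi(\omega)\ge 0$ and $\phi(\omega^p)\le H_k-1$ (because $U(\omega^p)\le k-1$), we get $\Delta\phi\le H_k-1$.

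\textbf{Main obstacle.} The hard step is the monotonicity/upper bound $\phi(\omega')\le H_{U(\omega')+1}-1$ for arbitrary layer structures. Layers can contain many pages, not just one more than the slot count. I would handle this by induction on $U$. At each lazy request, I would choose the adversary's page to be one that \textsc{OM} holds with the largest probability; by Lemma~\ref{lemma_equitable_against} this choice is free. Among the $m$ candidates of the lowest-merged unrevealed block sharing $m-1$ slots, such a page is held with probability at least $(m-1)/m\ge 1-1/(U+1)$. The miss probability is then at most $1/(U+1)$, and the recursion telescopes to $H_{U+1}-1$.
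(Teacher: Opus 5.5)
\textbf{There is a genuine gap: Step~2 is false.} You claim $\phi(\omega^p)\le H_k-1$, via $\phi(\omega')\le H_{U(\omega')+1}-1$, and then drop $\phi(\omega)\ge 0$. That bound, and the estimate in your last paragraph that the most-held page of the lowest block is held with probability at least $(m-1)/m$, both assume a surplus of exactly one unrevealed page, i.e.\ $|N(\omega')|=U(\omega')+1$. In general $\omega^p$ has all $|S(\omega)|$ old support pages as unrevealed candidates for $k-1$ slots. That number can be much larger than $k$, because every $L_0$-request adds a page to the support without adding a slot.

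Concretely, let $k=2$ and request $a,b,c$. The layers are $(\{a,b\},\{c\})$ and $\phi(\omega)=1/2$. Requesting $p=d\in L_0$ gives $(\{a,b,c\},\{d\})$. \textsc{OM} then holds $d$ and a uniformly random page of $\{a,b,c\}$, so $\phi(\omega^d)=2/3>H_2-1$. More generally, if $|L_1|=m$ then $\phi(\omega^p)=m/(m+1)\to 1$, and this example is legal even under the forgiveness cap $|S(\omega)|\le 3k$. The block $\{a,b,c\}$ has three candidates for one slot, not $m$ candidates for $m-1$ slots. The corollary survives only because $\phi(\omega)=1/2$ as well, so $\Delta\phi=1/6$. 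The surplus inflates both potentials, and discarding $\phi(\omega)$ throws away exactly what cancels it.

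\textbf{What a proof needs.} It must compare the two potentials. One way to do this is as follows.
\begin{itemize}
\item $\phi$ depends only on the unrevealed layers. Since $L_k=\{q\}$ is revealed, $\phi(\omega)$ is the potential of the $(k-1)$-slot structure $(L_1,\dots,L_{k-1})$, and $\phi(\omega^p)$ is that of $(L_1,\dots,L_{k-2},L_{k-1}\cup\{q\})$. So you must show that inserting one page into the top layer raises the potential by at most $H_k-1$.
\item Your all-singleton computation is exactly the base case $U(\omega)=0$.
\item Otherwise, issue the same unrevealed request $y\in N(\omega)$ in both structures. This preserves the ``one extra page on top'' relation, so induction on the number of slots bounds the remainder by $H_{k-1}-1$. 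It then suffices to find $y$ with excess miss probability $\Pr[y\notin C']-\Pr[y\notin C]\le 1/k$.
\item Let $C$ be \textsc{OM}'s cache on $\omega$ minus $q$, and $C'$ its cache on $\omega^p$ minus $p$. Under shared i.i.d.\ priorities, $C'=\max_{k-1}(C\cup\{q\})$, so the excess equals $\Pr[y=\arg\min_{C\cup\{q\}}\pi]$.
\item Summed over $y\in N(\omega)$, this is the probability that the minimum of $C\cup\{q\}$ is one of the $U=U(\omega)$ selected unrevealed pages. That probability is at most $U/k$: the greedy selection's minimum priority is pointwise at least that of any fixed valid selection, while the revealed pages and $q$ carry unconditioned priorities.
\item Since $|N(\omega)|>U$, some $y$ has excess below $1/k$, giving $\Delta\phi\le 1/k+(H_{k-1}-1)=H_k-1$.
\end{itemize}
For comparison, the paper does not reprove this corollary; it imports it from the potential analysis of \textsc{Equitable} by Achlioptas, Chrobak and Noga.
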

\begin{lemma} \label{lemma:om_potential}
    For any $\omega$, when \textsc{OM} serves a lazy adversary request to $p \in N(\omega)$, the potential change satisfies $\Delta \phi = \phi(\omega^p) - \phi(\omega) \le -1/(U(\omega)+1)$, where $U(\omega)=k-|R(\omega)|$ is the number of unrevealed layers.
\end{lemma}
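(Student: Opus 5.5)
The plan is to express the potential drop as the probability that \textsc{OM} misses on the request, and then lower-bound that probability using the priority-based description of \textsc{OM}'s configuration.

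\textbf{Step 1: telescoping through Lemma~\ref{lemma_equitable_against}.} By Definition~\ref{def:om_potential_function}, $\phi(\omega)$ is the expected cost of \textsc{OM} on a lazy adversary strategy from $\omega$. By Lemma~\ref{lemma_equitable_against}, this expectation is the same for every lazy adversary strategy. Since $p\in N(\omega)$, a request to $p$ is a legitimate first step of such a strategy, so I will use a strategy that begins with $p$.

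\textsc{OM}'s configuration is distributed as \textsc{Equitable}'s distribution $\Pi$ determined by the current work function. After serving $p$, its configuration is therefore distributed according to the distribution determined by $\omega^p$. Hence the remainder of the strategy costs $\phi(\omega^p)$ in expectation, and
\[
\phi(\omega) = \Pr[p\notin C_k] + \phi(\omega^p),
\qquad\text{so}\qquad
\Delta\phi = -\Pr[p\notin C_k].
\]
It then suffices to show that $\Pr[p\notin C_k]\ge 1/(U(\omega)+1)$.

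\textbf{Step 2: reduction to the unrevealed layers.} Let $U=U(\omega)$, so the unrevealed layers are $L_1,\dots,L_U$. In the construction $C_j=\max_j(C_{j-1}\cup L_j)$, every revealed layer is a singleton, and its page is always kept. Therefore $p\in C_k$ if and only if $p\in C_U$, where $C_U$ is a $U$-subset of $N(\omega)=L_1\cup\cdots\cup L_U$.

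By the definition of $x$ (the index where the singleton suffix starts), the layer $L_U$ is not a singleton. Every support layer is nonempty, so $|N(\omega)|\ge U+1$. Thus at least one unrevealed page is absent from $C_U$, and the question is how the priorities spread this absence across pages.

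\textbf{Step 3: the probability bound (main obstacle).} I would show the following claim: for each fixed $p\in L_i$ with $i\le U$, there is a set $T\ni p$ of $U+1$ unrevealed pages such that $p\notin C_U$ whenever $p$ has the lowest priority in $T$. Because priorities are i.i.d.\ uniform, this gives $\Pr[p\notin C_U]\ge 1/(U+1)$.

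The naive choice "$p$ is the global minimum of $N$" fails, because a low page in an early singleton layer $L_j$ may survive step $j$ unopposed. So I would build $T$ adaptively. I would take $T$ to be $p$, together with one page from each of $L_{i+1},\dots,L_U$, together with $p$'s competitors in $C_{i-1}\cup L_i$, with $|L_U|\ge 2$ providing the extra element.

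The argument would be by induction on $j\ge i$: if $p$ is below every element of $T$ considered so far, then at the first step $j$ where $|C_{j-1}\cup L_j|>j$, the minimum removed is $p$. Such a step exists, since at the latest $j=U$ we have $|C_{U-1}\cup L_U|\ge U+1$.

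Making the counting of $T$ exact, so that it has exactly $U+1$ elements in all layer-size patterns, is the delicate part. If it resists, I would instead argue directly with \textsc{Equitable}'s sampling process. There, the $U$ unrevealed slots are filled by uniform choices from supports of size at least $U+1, U, \dots$. I would bound the probability that $p$ is ever chosen by $\prod(1-1/m_j)$ over the candidate-set sizes $m_j$, which telescopes to at most $U/(U+1)$.
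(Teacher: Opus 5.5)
\paragraph{Verdict.} The argument is right in substance but takes a different route from the paper, and your primary version of Step~3 has a fixable gap.

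\paragraph{Steps 1 and 2.} Step~1 coincides with the paper's first step: the equal-cost property of lazy adversary strategies gives $\Delta\phi=-\Pr[p\notin C(\omega)]$. Step~2 is correct: only $C_U\subseteq N(\omega)$ matters, and $|N(\omega)|\ge U+1$ because $L_U$ is not a singleton.

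\paragraph{How the routes differ.} The paper never opens up the priority construction. It inducts on $U$. An averaging argument over the redistributed miss mass finds a second page $y\in N(\omega^p)$ whose miss probability rises by at most $\Pr[p\notin C(\omega)]/|N(\omega^p)|$. The equal-cost recursion at $y$ is then fed into the inductive hypothesis for $U-1$. You instead lower-bound $\Pr[p\notin C(\omega)]$ directly.

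\paragraph{The gap in Step 3.} Your $T$ contains ``$p$'s competitors in $C_{i-1}\cup L_i$''. But $C_{i-1}$ is itself a function of the priorities: it is the set of top survivors of $Q_{i-1}=L_1\cup\cdots\cup L_{i-1}$. So independence of the priorities does not give $\Pr[p=\min_\pi T]=1/|T|$. Moreover, for $|L_i|\ge 3$ your $T$ has $U+|L_i|-1>U+1$ elements.

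\paragraph{Repair with a fixed $T$.} Let $B\subseteq N(\omega)\setminus\{p\}$ consist of one page from each of $L_1,\dots,L_U$ other than $p$. If $L_i=\{p\}$ (which forces $i<U$), take two pages from $L_U$ instead. Then $|B|=U$ and $|B\cap Q_j|\le j$ for all $j$, so $B$ is a valid unrevealed part.
\begin{itemize}
\item The sets $\{S\subseteq N(\omega): |S\cap Q_j|\le j\ \forall j\}$ form a laminar matroid.
\item The rule $C_j=\max_j(C_{j-1}\cup L_j)$ is its greedy algorithm, since pages dropped at step $j$ are spanned by higher-priority pages of $Q_j$. Hence $C_U$ is the maximum-priority basis.
\item Suppose $p\in C_U$ while $\pi(p)<\pi(b)$ for every $b\in B$. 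Basis exchange gives $b\in B\setminus C_U$ with $C_U\setminus\{p\}\cup\{b\}$ valid and of larger total priority, a contradiction.
\end{itemize}
Therefore $\Pr[p\notin C(\omega)]\ge\Pr[p=\min_\pi(\{p\}\cup B)]=1/(U+1)$.

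\paragraph{The fallback via \textsc{Equitable}.} This route is also sound, but it needs to be stated precisely.
\begin{itemize}
\item The product $\prod_{u=1}^{U}\bigl(1-\frac{1}{u+1}\bigr)$ lower-bounds the probability that $p$ is \emph{never} drawn. It telescopes to $\frac{1}{U+1}$, not $\frac{U}{U+1}$; the complementary statement is $\Pr[p\text{ drawn}]\le U/(U+1)$.
\item The factors must come from the $U$ draws that land in the unrevealed part. At each draw, the candidates are the undrawn original revealed pages together with the current $N$. So, conditioned on landing in $N$, the draw is uniform on $N$.
\item Draws of revealed pages leave the unrevealed layers unchanged. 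Each unrevealed draw lowers the number $u$ of unrevealed layers by exactly one, and $|N|\ge u+1$ persists throughout.
\item A product over all $k$ draws using the full candidate sets would only give $1/(k+1)$.
\end{itemize}

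\paragraph{What each route buys.} Your direct argument is non-inductive and shows, page by page, why the miss probability is large. The same matroid reasoning also gives the sharper bound $1/(i+1)$ when $|L_i|\ge 2$. The paper's argument uses only the black-box equal-cost property and stays inside the potential framework. However, combining its two recursions actually yields
$\Delta\phi=\Delta_y+\phi(\omega^{\langle p,y\rangle})-\phi(\omega^{y})$.
So to invoke the inductive hypothesis it needs $\phi(\omega^{\langle p,y\rangle})=\phi(\omega^{\langle y,p\rangle})$. This holds because requesting $p$ and $y$ in either order yields the same unrevealed layers, but the paper does not state it. Your approach needs nothing of this kind.
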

\begin{proof}
    See details in Appendix \ref{appendix:lemma_om_potential_change}.
\end{proof}
\paragraph{A New Potential Function.}
Let $\omega$ be the current work function and $C(\omega)$ be the current cache configuration of \textsc{RPB-OM}.
\begin{definition} [\textsc{RPB-OM}'s potential function] \label{def:rpb-om_potential}
Let $\Phi(\omega)$ be a potential function for \textsc{RPB-OM}:
\begin{equation} 
    \Phi(\omega) = \phi(\omega) + D(\omega) + B(\omega), \nonumber
\end{equation}
where $\phi(\omega)$ is defined in Definition \ref{def:om_potential_function}, $D(\omega)$ denotes the expected cache difference between \textsc{RPB-OM} and \textsc{OM}, and $B(\omega)$ denotes the current remaining prediction budget of \textsc{RPB-OM}.
\end{definition}

\begin{theorem}\label{theorem_rpb_om_robustness}
    \textsc{RPB-OM} is $(H_k + O(1))$-robust.
\end{theorem}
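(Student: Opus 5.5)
We will run an amortized analysis with the potential $\Phi(\omega)=\phi(\omega)+D(\omega)+B(\omega)$ of Definition~\ref{def:rpb-om_potential}. The target is the per-request inequality
\[
\Delta cost(\textsc{RPB-OM}) + \Delta\Phi \le (H_k + c_0)\,\Delta cost(\textsc{OPT}),
\]
where $\Delta cost(\textsc{OPT})$ is the increase of $\min(\omega)$ and $c_0=O(1)$ depends only on $\tau$. Summing over the sequence then gives the robustness bound, because $\Phi\ge 0$ and $\Phi$ starts bounded (empty cache after warm-up, $B$ bounded). We may restrict to requests inside the support or in $L_0$, and may assume the adversary is lazy between $L_0$-requests. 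Non-lazy requests to revealed pages are hits for every valid configuration. Since \textsc{RPB-OM} is in the \textsc{OnOPT} class, its configuration is always valid and therefore contains $R(\omega)$ (Corollary~\ref{corollary_revealed}).

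\textbf{Case 1: a request to $p\in L_0$.} Here $\min(\omega)$ increases by $1$. \textsc{RPB-OM} pays at most $1$. By Corollary~\ref{corollary:om_l0_potential_change}, $\Delta\phi\le H_k-1$. The cache difference $D$ can change by at most $1$ (one eviction each for \textsc{RPB-OM} and \textsc{OM}), so $\Delta D\le 1$. The budget is reset to $\tau$, so $\Delta B\le\tau$. The total is at most $H_k+1+\tau$ against an OPT increase of $1$, which gives $c_0=1+\tau$.

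\textbf{Case 2: a lazy adversary request to $p\in N(\omega)$.} Here OPT pays $0$, so we must show $\Delta cost + \Delta\Phi\le 0$. Let $x$ be \textsc{OM}'s miss probability on this request. If \textsc{RPB-OM} hits, its cost is $0$. We then need $\Delta\phi+\Delta D\le 0$: Lemma~\ref{lemma:om_potential} gives $\Delta\phi\le -1/(U+1)$, and $\Delta D\le x$ should be controlled because \textsc{OM}'s miss on a page already in \textsc{RPB-OM}'s cache can only decrease or preserve the difference. I would verify this from the \textsc{OM} update rule.

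If \textsc{RPB-OM} misses and evicts by \textsc{OM}'s priority, Lemma~\ref{lemma:onopt_miss_D_change} gives $\Delta D\le -1+x$. Then $1+\Delta\phi+\Delta D\le x+\Delta\phi$. This requires relating $\Delta\phi$ to $-x$, namely that \textsc{OM}'s expected cost on this request is paid by its potential drop, $\Delta\phi\le -x$. That holds by definition, since $\phi$ is the expected cost along a lazy adversary strategy and Lemma~\ref{lemma_equitable_against} makes it strategy-independent. So this subcase is fine.

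If \textsc{RPB-OM} evicts by prediction, then $\Delta D\le x$ and we spend one unit of budget, $\Delta B=-1$. That cancels the miss cost $1$, and we are left with $\Delta\phi+x\le 0$ as before. The delicate point is the charging step at line~\ref{line:budget_charging}, where $B$ increases by $1$. That unit must be paid for by the slack $-1/(U+1)$ from Lemma~\ref{lemma:om_potential}, accumulated since the last miss. This slack is not used in the bounds above.

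\textbf{Main obstacle: accounting for budget charging.} We must show the total budget charged between consecutive misses of \textsc{RPB-OM} is covered by the accumulated $\sum 1/(U+1)$ drops. During a run of lazy adversary requests in which $U$ decreases from $Y$ (its value at the last miss) to the current $U(\omega)$, the accumulated slack is at least $\sum_{u=U+1}^{Y}1/(u+1)\ge \ln\frac{Y+2}{U+2}$. The gate $U(\omega)\le (Y+2)/e-2$ is exactly the condition that this is at least $1$. So each charge is prepaid by the harmonic slack since the previous miss. Since $Y$ is reset at every miss, charges use disjoint slack.

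To make this rigorous, I would strengthen the potential by adding the unused slack $\ln\frac{Y+2}{U(\omega)+2}$, which is $O(1)$-bounded just after any charge. Alternatively, I would telescope over maximal hit-runs. Either way, the additive constants from $\tau$ and this bookkeeping only affect $c_0$, yielding $(H_k+O(1))$-robustness.
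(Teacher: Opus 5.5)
Your proposal is correct and follows the paper's proof essentially step for step. It uses the same potential $\phi+D+B$, the same $L_0$ bound $H_k+1+\tau$, and the same use of Lemma~\ref{lemma:onopt_miss_D_change} together with $\Delta\phi=-x$ on lazy adversary misses. The gate-pass case is also handled the same way: it is grouped with the preceding hit-run, whose harmonic drop $\sum 1/(u+1)\ge\ln\frac{Y+2}{U+2}\ge 1$ prepays the unit of cost that the immediately spent charged budget does not cover.

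The paper telescopes over that hit-run rather than adding your banked-slack term $\ln\frac{Y+2}{U+2}$ to the potential; your variant also works. On hits the paper uses $\Delta D\le 0$ directly. That is what your update-rule argument actually gives, and it is what you need so the $\phi$-drop is not consumed on hits; $\Delta D\le x$ would not suffice.
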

\begin{proof}
Upon an $L_0$-miss, \textsc{RPB-OM} performs a prediction-driven eviction and \textsc{OM} also incurs a miss, so $\Delta D \leq 1$, and $\Delta \phi \leq H_k - 1$ by Corollary \ref{corollary:om_l0_potential_change}. Moreover, $\Delta B \leq \tau = O(1)$ since $B$ is reset to $\tau$. Thus, \eqref{eq:rpb-om-robustness_l0} holds in this case.
\begin{align} \label{eq:rpb-om-robustness_l0}
    \Delta cost(\textsc{RPB-OM}) + \Delta \Phi \leq (H_k + O(1))\cdot \Delta cost(\textsc{OPT}).
\end{align}
Next, we prove that the following inequality always holds when serving a lazy request.
\begin{equation} \label{eq:rpb-om_robustness_lazy}
    \Delta cost(\textsc{RPB-OM}) + \Delta \Phi \leq 0.
\end{equation}
If the lazy request is to a revealed page, both \textsc{RPB-OM} and \textsc{OM} hit, and the number of unrevealed layers remains unchanged. Hence, $\Delta \phi = \Delta D = \Delta B = 0$, and \eqref{eq:rpb-om_robustness_lazy} holds.

Otherwise, the lazy request is to an unrevealed page, i.e., a lazy adversary request. If \textsc{RPB-OM} hits, then $\Delta \phi \leq 0$, $\Delta D \leq 0$ and $\Delta B = 0$, and \eqref{eq:rpb-om_robustness_lazy} holds. Then, we prove \eqref{eq:rpb-om_robustness_lazy} through case-by-case discussion when \textsc{RPB-OM} misses.
\begin{enumerate}
    \item Gate-fail: $U(\omega) > (Y+2)/e - 2$. Two subcases arise according to the current prediction budget $B$.
    \begin{enumerate}
        \item $B > 0$: \textsc{RPB-OM} performs a prediction-driven eviction, leading to $\Delta B = -1$. Suppose \textsc{OM} misses with probability $x$, then $\Delta \phi \leq -x$. Moreover, $\Delta D \leq x$ by Lemma \ref{lemma:onopt_miss_D_change}. This gives $\Delta \Phi \leq -1$, thus \eqref{eq:rpb-om_robustness_lazy} holds.
        \item $B = 0$: \textsc{RPB-OM} performs an eviction according to \textsc{OM}'s priorities. Suppose \textsc{OM} misses with probability $x$, then $\Delta \phi \leq -x$. Moreover, $\Delta D \leq -1 + x$ by Lemma \ref{lemma:onopt_miss_D_change}. Thus, $\Delta \Phi \leq -1$ and \eqref{eq:rpb-om_robustness_lazy} holds.
    \end{enumerate}
    \item Gate-pass: $U(\omega) \leq (Y+2)/e - 2$. In this case, we additionally consider the previous process during which the number of unrevealed layers decreases from $Y$ at \textsc{RPB-OM}'s previous miss to $U(\omega)$. During this process, \textsc{RPB-OM} always hits on lazy requests. Thus, the total difference change $\Delta D' \leq 0$, and the budget remains unchanged, i.e., $\Delta B' = 0$. For $\Delta \phi'$, it depends on the change of the number of unrevealed layers by Lemma \ref{lemma:om_potential}. We denote by $\Delta \phi'$ the total potential change during this process. Thus,
    \begin{align}
        \Delta \phi' &\leq -\sum_{i=U(\omega)+2}^{Y+1} \frac{1}{i}  \nonumber \\
        &\leq -\int_{U(\omega)+2}^{Y+2} \frac{dx}{x} \qquad \text{since $\frac{1}{x} \geq \int_{x}^{x+1}\frac{dx}{x}$} \nonumber \\
        &= -\ln \frac{Y + 2}{U(\omega) + 2} \nonumber \\
        &\leq -1.
    \end{align}
    Therefore, during this process, we have
    \begin{align} \label{eq:rpb-om_robustness_previous_process}
        \Delta cost(\textsc{RPB-OM})' + \Delta \Phi' \leq -1,
    \end{align}
    where $\Delta \Phi' = \Delta\phi'+\Delta D' + \Delta B'$.

    Now we serve the current request. The budget earned in the current gate-pass step is immediately consumed by the subsequent prediction-driven eviction, so the net budget change is $\Delta B''=0$.

    Suppose \textsc{OM} misses with probability $x$, the current potential change $\Delta\phi'' \leq -x$, and the current difference change $\Delta D'' \leq x$ by Lemma \ref{lemma:onopt_miss_D_change}. Then, we have $\Delta \Phi'' \leq 0$ and
    \begin{align} \label{eq:rpb-om_robustness_after_previous_process_current_req}
        \Delta cost(\textsc{RPB-OM})'' + \Delta \Phi'' \leq 1,
    \end{align}
    where $\Delta \Phi'' = \Delta \phi'' + \Delta D'' + \Delta B''$.

    Combining \eqref{eq:rpb-om_robustness_after_previous_process_current_req} and \eqref{eq:rpb-om_robustness_previous_process}, we obtain \eqref{eq:rpb-om_robustness_lazy}.
\end{enumerate}

Finally, summing over all requests that incur $L_0$-misses by \eqref{eq:rpb-om-robustness_l0} and lazy requests by \eqref{eq:rpb-om_robustness_lazy}, we obtain
\begin{align}
    cost(\textsc{RPB-OM}) &+ \Phi_{final} - \Phi_{initial} \nonumber \\
    &\leq (H_k + O(1)) \cdot cost(\textsc{OPT}), \nonumber
\end{align}
where $\Phi_{final} \geq 0$, and $\Phi_{initial} = \phi_{initial} + D_{initial} + B_{initial} = 0$ as the initial cache is empty. Therefore, we have
\begin{align}
    cost(\textsc{RPB-OM})
    \leq (H_k + O(1))\cdot &cost(\textsc{OPT}). \nonumber
\end{align}
This completes the proof.
\end{proof}

The computational overhead of \textsc{RPB-OM} is mainly introduced by layer partition updating, which is $O(\log k)$ per request, consistent with \textsc{OnOPT} and \textsc{OM} \cite{brodal2015onlinemin}.

We also derive \textsc{RPB-OM}'s smoothness in Theorem~\ref{theorem_rpb_om_smoothness}, proved in Appendix~\ref{appendix_rpb_onlinemin_smoothness}. This bound matches the best-known smoothness guarantee among existing robust randomized algorithms, including \textsc{LMarker}.
\begin{theorem} \label{theorem_rpb_om_smoothness}
    \textsc{RPB-OM} is $O(1 + \log(\eta_1/cost(\textsc{OPT})))$-smooth, where $\eta_1$ denotes the total $\ell_1$ error in the predicted next-arrival times.
\end{theorem}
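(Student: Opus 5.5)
We would prove the smoothness bound by reducing to the usual \textsc{LMarker}-style analysis, in which the cost from deviating is charged to prediction errors. The bound is the minimum of two quantities. One is the robustness guarantee of Theorem~\ref{theorem_rpb_om_robustness}, namely $(H_k+O(1))\cdot cost(\textsc{OPT})$. The other is a prediction-dependent bound of the form $O(cost(\textsc{OPT}) + \text{something logarithmic in } \eta_1)$. The decomposition is by phases delimited by $L_0$-misses. Between two consecutive $L_0$-requests every request is lazy. By Corollary~\ref{corollary_offline_optimal_miss_on_L0}, \textsc{OPT} pays exactly one miss per $L_0$-request, so $cost(\textsc{OPT})$ equals the number of phases, up to warm-up.

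The plan for a single phase $j$ is as follows.
\begin{enumerate}
\item Let $\eta^{(j)}$ be the $\ell_1$ error charged to the prediction-driven evictions made in phase $j$.
\item For the prediction-driven eviction at the $L_0$-miss that opens the phase, use the argument of Theorem~\ref{theorem_1_consistency}. If the evicted page $x$ is one \textsc{OPT} would also evict before its next request, it creates no extra miss. Otherwise $x$ is a page whose true next arrival precedes that of some page \textsc{OPT} discards. We would then charge the eventual miss on $x$ to the inversion in predicted next-arrival times, as in the \textsc{BlindOracle} analysis of \citet{pmlr-v80-lykouris18a}.
\item For the lazy misses inside the phase, use the potential $\Phi$. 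Misses taken via \textsc{OM}'s priority are paid by a drop in $\phi+D$, exactly as in the gate-fail case $B=0$.
\item Misses taken via predictions within the phase are either paid from the reset budget $\tau=O(1)$, or are gate-pass evictions. A gate-pass eviction occurs only after $U$ has shrunk by a factor $e$ since the previous miss. Hence the number of gate-pass prediction evictions in a phase is at most $O(\log(U_{\text{start}}/U_{\text{end}}+1))$, and these are the evictions that can each cost one extra miss when predictions are wrong.
\item Each such wrong eviction is charged either to an inversion (so that it contributes to $\eta^{(j)}$) or to the $\phi$-decrease already accounted for.
\end{enumerate}

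The main step, and the expected obstacle, is converting error into a logarithmic number of misses. For this we would adopt \textsc{LMarker}'s eviction-chain lemma from \citet{rohatgi2020near}. If a chain of $m$ consecutive prediction-caused misses occurs in a phase, then the predictions must have inverted the order of $m$ pages whose true next arrivals are spread out. This forces $\eta^{(j)}\ge \Omega(2^{m})$, or at least $\Omega(m^2)$; we would aim for the exponential version, which is what yields a log. The gate inequality $U\le (Y+2)/e-2$ makes successive prediction evictions geometrically separated in $U$, and we would try to show it supplies exactly this doubling structure. Phase $j$ then costs \textsc{RPB-OM} at most $O(1+\log(1+\eta^{(j)}))$ beyond what the potential argument already pays.

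Summing over phases and applying Jensen's inequality (concavity of $\log$) gives
\begin{equation*}
cost(\textsc{RPB-OM}) \le O\bigl(1+\log(1+\eta_1/cost(\textsc{OPT}))\bigr)\cdot cost(\textsc{OPT}).
\end{equation*}
Taking the minimum with Theorem~\ref{theorem_rpb_om_robustness} then yields the stated smoothness. If the doubling argument fails for gate-pass evictions, the fallback is to bound those evictions by the potential drop $-\ln((Y+2)/(U+2))$ from the proof of Theorem~\ref{theorem_rpb_om_robustness}, and to charge only the $\tau$ budgeted evictions to $\eta$.
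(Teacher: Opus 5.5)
Your proposal has a genuine gap, and it sits exactly where smoothness has to be earned. In steps 3 and 5, and in your fallback, lazy misses are paid for by a drop in $\phi+D$, as in Theorem~\ref{theorem_rpb_om_robustness}. But $\phi$ is \textsc{OM}'s potential and depends only on the work function. By Corollary~\ref{corollary:om_l0_potential_change} it may rise by up to $H_k-1$ at every $L_0$-request, however accurate the predictions are. After telescoping this is a term of order $H_k\cdot cost(\textsc{OPT})$, which your final display silently drops (``beyond what the potential argument already pays''). Carried through honestly, the plan only reproves $(H_k+O(1))$-robustness.

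What must be tied to the error is the cascade started by a wrong eviction at an $L_0$-miss. The lazy miss it causes forces another eviction (by prediction while $B>0$, then by \textsc{OM}'s priority), which can cause another miss, and so on. Charging ``the eventual miss on $x$'' to an inversion covers only the first link. Your proposed source of the logarithm cannot fill this gap, for two reasons. First, the gate $U(\omega)\le (Y+2)/e-2$ is a function of the request sequence alone. Its geometric structure caps gate-pass evictions per phase at $O(\log k)$ and says nothing about $\eta_1$. Second, there is no bound $\eta^{(j)}\ge\Omega(2^m)$ for a chain of $m$ prediction-caused misses. Reversing the predicted order of $m$ pages with consecutive true next arrivals costs only $O(m^2)$ in $\ell_1$ error. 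This is why deterministic chain-following arguments give only $\sqrt{\eta}$-type bounds \citep{pmlr-v80-lykouris18a}.

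The missing idea is a potential whose jump at each $L_0$-miss depends on the error of that particular eviction. This mirrors how, in \textsc{LMarker}, random evictions after a single prediction-driven eviction make the expected cascade harmonic in the local inversion count. The paper replaces $\phi$ by the active residual potential $R^{act}(\omega)=\sum_{e}H_{z_e(\omega)}$. Each $L_0$-miss $e$ opens a residual set $Z_e$: the evicted page $q_e$ plus the $\mathrm{inv}_e$ cached pages whose true next arrival is later than $q_e$'s. A page leaves $Z_e$ when it is requested or leaves the unrevealed layers. An $L_0$-miss then raises $R^{act}+D+B$ by at most $H_{\mathrm{inv}_e+1}+1+\tau$. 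A domination inequality $\Delta R^{act}\le\Delta\phi$ on lazy adversary requests lets the case analysis of Theorem~\ref{theorem_rpb_om_robustness} be reused. Summing gives $cost(\textsc{RPB-OM})\le O(cost(\textsc{OPT}))+\sum_e H_{\mathrm{inv}_e}$. This is closed by $\sum_e\mathrm{inv}_e\le 2\eta_1$ (Lemma~4.1 of \citet{rohatgi2020near}) and concavity.

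If you take this route, treat the domination inequality as the real work. \textsc{OM}'s miss probability is fixed by the layers, not by the predictions, and the paper's symmetry argument does not settle it. Take $k=3$ and requests $a,b,c,d,e,b$. Let the predictions evict $b$ on the request to $d$ and $d$ on the request to $e$, and let $a$ and $c$ be next requested only after $b$. The final request is a miss for \textsc{RPB-OM} and has $\Delta\phi=-1/2$, since \textsc{OM} holds $e$ and the top two of $\{a,b,c,d\}$. Yet $\Delta R^{act}=H_2-H_3=-1/3$.
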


\section{Experiments}
\textbf{Benchmark Datasets.} Following recent work \cite{pmlr-v139-chledowski21a}, we use real-world traces from the well-known and commonly used SPEC CPU 2006 benchmark \cite{crc2017} to evaluate empirical performance. We set the cache size to 2MB and the associativity to 16, as is standard in the literature for this benchmark. With 64-byte cache lines, this yields $2{,}048$ sets of $16$ lines each, and cache eviction is performed independently within each set ($k = 16$).

\textbf{Prediction Setup.} In the experiments, both synthetic and real predictions are used to evaluate algorithms. Synthetic predictions with different levels of noise have been widely used by prior work \cite{pmlr-v80-lykouris18a, pmlr-v139-chledowski21a, sadek2024algorithms}. Specifically, the noisy predicted next-arrival times of pages are provided by adding log-normal noise to the ground truth.

For real predictors, we adopt \textsc{POPU} following \citet{sadek2024algorithms}. 
\textsc{POPU}~\cite{antoniadis2023online} is a frequency-based predictor that assumes a page requested in a fraction \(p\) of past accesses will reappear after \(1/p\) steps. Additionally, we follow the milestone work \cite{pmlr-v80-lykouris18a}, and use PLECO \cite{anderson2014dynamics} as a real predictor. PLECO is a probability-based predictor that estimates a page's access likelihood \(p\) and predicts its next request arriving after \(1/p\) steps. 


\begin{figure}[h]\centering
\oraclelogreuselegendexguard{bzip}{1.886}{1.912}{2.055}{2.2}{\mydir}
\vspace*{-1mm}
\caption{Algorithm performance with synthetic predictions of next-arrival times on the bzip dataset.}
\label{fig:exp_result_synthetic}
\end{figure}
\vspace*{-2mm}

\textbf{Experimental Results.} 
Figure~\ref{fig:exp_result_synthetic} shows algorithm performance under varying levels of synthetic next-arrival-time prediction noise on the \textit{bzip} dataset from the SPEC CPU 2006 benchmark. Cost ratio denotes the number of misses incurred by an algorithm relative to \textsc{OPT}. Table~\ref{table:exp_res} shows that \textsc{RPB-OM} achieves the best average performance on the SPEC CPU 2006 benchmark under both the PLECO and POPU predictors.

We also compare \textsc{OnOPT-OM}, which incorporates \textsc{OM} into the \textsc{OnOPT} framework without the relative prediction budget design. Specifically, \textsc{OnOPT-OM} uses predictions only on $L_0$-misses and otherwise follows \textsc{OM}'s priority within \textsc{OnOPT}. \textsc{RPB-OM} with \(\tau=1\) outperforms \textsc{OnOPT-OM} in both Figure~\ref{fig:exp_result_synthetic} and Table~\ref{table:exp_res}. A slightly larger \(\tau\) allows \textsc{RPB-OM} to use predictions more aggressively without affecting its asymptotic robustness. \textsc{BlindOracle} lacks robustness, while \textsc{PredictiveMarker} and \textsc{LMarker} lack the ideal $1$-consistency, limiting their practical performance when prediction errors are large or small, respectively. \textsc{RPB-OM} exhibits both bounded robustness and \(1\)-consistency across different values of \(\tau\). See Appendix~\ref{appendix_additional_results} for additional experimental results and Appendix~\ref{appendix:sensitivity_analysis} for the $\tau$-sensitivity analysis. Code is available at \url{https://github.com/Natureal/ICML-Cache-Coliseum}.

\begin{table}[t]
\centering
\caption{Average performance of algorithms on the SPEC CPU 2006 benchmark using the \textsc{PLECO} and \textsc{POPU} predictors.}
\vspace*{-2mm}
\label{table:exp_res}
\begin{small}
\begin{tabular}{lrrrr}
\toprule
\textbf{Algorithm} & \multicolumn{2}{c}{Cost Ratio $\downarrow$} & \multicolumn{2}{c}{Hit Ratio (\%)} \\
\cmidrule(lr){2-3} \cmidrule(lr){4-5}
 & Mean & Std & Mean & Std \\
\midrule
\textsc{OPT}       & 1.000 & 0.000 & 34.62 & 25.60 \\
\textsc{LRU}       & 1.478 & 0.640 & 14.27 & 20.17 \\
\textsc{Marker}    & 1.392 & 0.361 & 16.51 & 21.13 \\
\textsc{OnlineMin (OM)}        & 1.396 & 0.329 & 16.30 & 20.12 \\
\midrule
\multicolumn{5}{l}{\textit{With PLECO predictor}} \\
\midrule
\textsc{BlindOracle}        & 1.404 & 0.405 & 15.92 & 21.24 \\
\textsc{F\&R}      & 1.359 & 0.310 & 17.92 & 21.34 \\
\textsc{LMarker}        & 1.335 & 0.260 & 18.11 & 23.18 \\
\textsc{PredictiveMarker}        & 1.335 & 0.261 & 18.10 & 23.20 \\
\textsc{BlindOracle\&LRU}     & 1.286 & 0.239 & 20.53 & 23.55 \\
\textsc{OnOPT-OM}       & 1.253 & 0.259 & 22.42 & 23.34 \\
\textsc{RPB-OM ($\tau = 1$)}   & \textbf{1.249} & 0.268 & \textbf{22.75} & 23.29 \\
\textsc{RPB-OM ($\tau = 2$)}   & 1.252 & 0.274 & 22.70 & 23.18 \\
\textsc{RPB-OM ($\tau = 4$)}   & 1.254 & 0.279 & 22.64 & 23.10 \\
\midrule
\multicolumn{5}{l}{\textit{With POPU predictor}} \\
\midrule
\textsc{BlindOracle}        & 1.261 & 0.254 & 21.75 & 23.99 \\
\textsc{F\&R}      & 1.319 & 0.256 & 19.29 & 22.72 \\
\textsc{LMarker}        & 1.320 & 0.248 & 18.93 & 23.33 \\
\textsc{PredictiveMarker}        & 1.312 & 0.237 & 19.13 & 23.68 \\
\textsc{BlindOracle\&LRU}     & 1.230 & 0.220 & 23.10 & 24.65 \\
\textsc{OnOPT-OM}       & 1.237 & 0.234 & 23.76 & 23.00 \\
\textsc{RPB-OM ($\tau = 1$)}   & 1.220 & 0.219 & 24.34 & 23.61 \\
\textsc{RPB-OM ($\tau = 2$)}   & 1.215 & 0.216 & 24.49 & 23.84 \\
\textsc{RPB-OM ($\tau = 4$)}   & \textbf{1.213} & 0.215 & \textbf{24.55} & 23.93 \\
\bottomrule
\end{tabular}
\vspace*{-2mm}
\end{small}
\end{table}

\section{Conclusion}
In this paper, we provide insights into online optimality that are useful for the learning-augmented setting. Building on these insights, we propose \textsc{RPB-OnOPT}, a framework that enables achieving the best possible robustness for learning-augmented paging algorithms, namely $H_k + O(1)$, thereby closing a long-standing gap in this field. Building on this framework, \textsc{RPB-OM} achieves both $1$-consistency and \((H_k+O(1))\)-robustness. In addition to theoretical results, the experiments validate the strong empirical performance of \textsc{RPB-OM} on real-world traces from the SPEC CPU 2006 benchmark.

Moreover, we introduce the primitive of \emph{relative prediction budget}, which reveals the key mechanism by which existing learning-augmented paging algorithms achieve bounded robustness. Future work may explore how to incorporate relative prediction budget designs into existing algorithms that may currently overuse or underutilize predictions.

\section*{Acknowledgements}
We thank the anonymous reviewers for their valuable comments and suggestions. This research was supported in part by the National Key Research and Development Program of China under Grant 2024YFB3309400, the National Science Foundation of China (62502441, 62125206), the Major Program of the National Natural Science Foundation of Zhejiang (LD25F020002), the Singapore Ministry of Education under Academic Research Fund Tier 1 Award RG15/25, and the Zhejiang Key Laboratory Project (2024E10001). Hailiang Zhao's work was supported in part by the Zhejiang University Education Foundation Qizhen Scholar Foundation.

\section*{Impact Statement}


This work advances learning-augmented algorithms, which bridge machine learning and theoretical computer science. By improving worst-case paging performance guarantees, it contributes to ML for Systems and may enable more efficient and safe caching systems in practice. We do not foresee immediate negative societal impacts.


\bibliography{main}
\bibliographystyle{icml2026}

\newpage
\appendix
\onecolumn

\section{Forgiveness} \label{appendix_forgiveness}

Repeated requests to $L_0$ can make the support grow in \textsc{Equitable}, so \textit{forgiveness} is used to approximate the work function and cap the size of the support within $\mathcal{O}(k^2\log k)$.

\textsc{K\_Equitable}~\cite{bein2011knowledge} uses a sharper forgiveness rule
and achieves $\mathcal{O}(k)$ space: when $|S(\omega)| = 3k$ and a page
$p \in L_0$ is requested, it removes $L_1$ from the support and adds $\{p\}$ as the new $L_k$. It is equivalent to inserting $p$ into $L_1$ and updating $\omega$ as if $p$ had been requested from $L_1$. Specifically, the update rule for \textsc{K\_Equitable}, including its forgiveness step, upon a request to page $p$ is as follows:

\begin{enumerate}
    \item If $p \in L_0$, the update depends on the support size:
    \begin{equation}
        \left\{ 
            \begin{array}{ll}
                (L_0 \backslash \{p\}, L_1, ..., L_{k-1}\cup L_{k}, \{p\}) & \text{if $|S(\omega)|<3k$}, \\
                (L_0 \backslash \{p\}\cup L_1, L_2, ..., L_{k}, \{p\}) & \text{if $|S(\omega)|=3k$}. \nonumber
            \end{array}
        \right.
    \end{equation}
    \item If $p \in L_i, i > 0$, the update is the same as \textsc{Equitable}.
\end{enumerate}

The intuition is that a large support indicates that the adversary has already deviated from the worst-case behavior, creating slack between the realized ratio and the worst-case bound of $H_k$. Since future requests cannot recover this slack, the online algorithm can safely deviate from the exact layer update rules while still preserving $H_k$-competitiveness.

\section{Proof of Theorem \ref{theorem_1_consistency}} \label{appendix_1_consistency}

Theorem \ref{theorem_1_consistency}. \textit{With perfect predictions, a learning-augmented algorithm \textsc{ALG} that, upon a cache miss on a request to a page in $L_0$, evicts following the predictions is optimal, and hence $1$-consistent.}

We first show that when predictions are perfect, \textsc{ALG} incurs no cache misses on requests to pages in the support.

Suppose that at time $t$, \textsc{ALG} incurs a cache miss on a request to a page $p$ in the support, and that this is \textsc{ALG}'s first miss on requests to pages in the support. By Corollary \ref{corollary_offline_optimal_miss_on_L0}, the offline optimum incurs a cache hit at time $t$, implying the page $p$ has been requested before $t$. Therefore, \textsc{ALG} must have incurred a cache miss and evicted $p$ at some time $t'<t$. By assumption, the request that arrived at time $t'$ must have requested a page in $L_0$, and \textsc{ALG} evicted the page $p$ following predictions.

Note that the offline optimum keeps page $p$ in the cache from its previous request that happened before $t'$ until time $t$. Since the predictions are perfect, the only possible reason why \textsc{ALG} evicts $p$ at time $t'$ is that no cached page has a higher eviction priority than $p$. This means that every page cached by \textsc{ALG} at time $t'$ is retained by the offline optimum until its next request, and therefore appears in the offline optimum's current configuration. However, the offline optimum also incurs a cache miss at time $t'$ because the request is to a page in $L_0$, which yields a contradiction.

Therefore, with perfect predictions, \textsc{ALG} incurs cache misses only on requests to pages in $L_0$. By Corollary~\ref{corollary_offline_optimal_miss_on_L0}, \textsc{ALG} is no worse than the offline optimum and thus is optimal, completing the proof.

\section{Reducing the Use of Predictions}\label{appendix_reduce_use_of_predictions}

Prior work has investigated how to reduce prediction usage. \citet{antoniadis2023paging} reduces the prediction range of pages when performing eviction. \textsc{F\&R} \citep{sadek2024algorithms} recomputes the optimal cost by simulating \textsc{Belady} on the requests observed so far, and queries the predictor only when both the algorithm and \textsc{Belady} incur a miss; otherwise, it synchronizes its configuration to \textsc{Belady}'s. Although \textsc{F\&R} has high time complexity, it limits the number of calls to the predictor and achieves $1$-consistency.

Moreover, $1$-consistency has received increasing attention and has been shown to yield superior empirical performance when prediction errors are small \citep{pmlr-v80-lykouris18a, sadek2024algorithms}. The theorem below establishes the minimum prediction-usage requirement for achieving $1$-consistency.

\begin{theorem}\label{theorem_minimum_requirement}
     To be $1$-consistent, a learning-augmented algorithm \textsc{ALG} must at least leverage predictions on each request to a page in $L_0$.
\end{theorem}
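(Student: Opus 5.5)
We prove Theorem~\ref{theorem_minimum_requirement} by an adversary (indistinguishability) argument. Suppose \textsc{ALG} is $1$-consistent, yet on some request to a page $p\in L_0$ on which it misses, it chooses its eviction without consulting the predictions. We will build two request sequences with these properties: they share the same prefix up to this request, and they induce the same information for \textsc{ALG} at that moment. However, they require different evictions to remain optimal. Since \textsc{ALG}'s decision is then independent of which continuation follows, on at least one of them it evicts the wrong page. We then show this costs a strictly positive additive loss that cannot be absorbed.

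First, we describe the situation at an $L_0$-request. By Corollary~\ref{corollary_offline_optimal_miss_on_L0}, \textsc{OPT} also misses there, and the update rule \eqref{eq_update_rule} moves all of the old support into unrevealed layers. Hence, after the request, every page of the old support lies in some valid configuration. By Corollary~\ref{corollary_unrevealed}, we can then pick two cached pages $a\neq b$ of \textsc{ALG}. There is a continuation $\delta_a$, namely a lazy adversary strategy revealing a valid configuration that excludes $a$, for which the unique way to stay at optimal cost is to evict $a$ now. Symmetrically, a continuation $\delta_b$ forces eviction of $b$. Concretely, $\delta_a$ requests all cached pages except $a$, then requests a fresh page, and so on. With perfect predictions, these two continuations correspond to two different prediction vectors, while the observed prefix is identical.

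Next, we treat the case where \textsc{ALG} ignores the predictions. Its eviction distribution is then the same under both continuations. If it is randomized, one of $a,b$ is evicted with probability at most $1/2$ in the wrong direction. Then, on the corresponding continuation, \textsc{ALG} pays at least one extra miss (in expectation at least $1/2$) relative to \textsc{OPT}, because the evicted page is requested again in a lazy request while \textsc{OPT} hits. To defeat the additive constant $c$ in the definition of competitiveness, we repeat this gadget. We concatenate $m$ phases, each starting with a fresh $L_0$-request on which \textsc{ALG} does not use predictions, and each costing \textsc{OPT} $O(1)$ misses but \textsc{ALG} an extra $\Omega(1)$. This yields $cost(\textsc{ALG})\ge (1+\epsilon)\,cost(\textsc{OPT})$ for a fixed $\epsilon>0$ as $m\to\infty$, contradicting $1$-consistency.

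The main obstacle is making the ``must leverage on each request'' claim precise. If \textsc{ALG} ignores predictions only on finitely many $L_0$-requests, a single loss is absorbed by the constant $c$. I would therefore interpret the statement adversarially: for any fixed policy that skips predictions on some identifiable $L_0$-requests, the adversary can make every $L_0$-request of that kind, repeated infinitely often. I would also need to check that the gadget truly forces a unique optimal eviction. For this I would use Lemma~\ref{lemma_valid_condition} on the two-page-difference configurations, showing that the other choice lands in a configuration whose work-function value exceeds $\min(\omega^{\delta})$ by one.
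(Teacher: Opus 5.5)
Your core idea matches the paper's proof. There the adversary re-requests the page that \textsc{ALG} evicted without predictions at the $L_0$-miss (a page \textsc{OPT} keeps), and the proof concludes from that single strict extra miss that $1$-consistency fails, without addressing randomization or the additive constant $c$. Your two-continuation averaging and repetition over $m$ phases are the right way to make this rigorous, but the repetition step as you state it does not go through.

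The issue is that after the first phase the two continuations no longer present \textsc{ALG} with the same state. Set the phases up obliviously; define the gadget from the sequence-determined valid configuration, not from \textsc{ALG}'s random cache:
\begin{itemize}
\item before phase $i$ the work function is fully revealed with unique valid configuration $X_i$;
\item request a fresh $p_i$, then the pages of $X_i\setminus\{a_i\}$, with $a_i$ uniform over two fixed pages of $X_i$;
\item set $X_{i+1}=(X_i\setminus\{a_i\})\cup\{p_i\}$.
\end{itemize}
Spared pages never recur, so $cost(\textsc{OPT})=k+m$. But the perfect predictions issued during phase $i-1$ already reveal $a_i$, since it is the only page of $X_i$ never requested again. \textsc{ALG} may also use predictions at lazy misses. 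So at a lazy miss in phase $i-1$ it can evict $a_i$ and keep the dead page $a_{i-1}$. A prediction-free rule such as LRU at $p_i$ then evicts $a_{i-1}$, and phase $i$ costs nothing extra. Hence neither ``the eviction distribution is the same under both continuations'' nor ``an extra $\Omega(1)$ in each phase'' holds.

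A charging argument repairs this. If phase $i$ has no lazy miss, \textsc{ALG} ends it in $X_{i+1}$, so its cache before $p_i$ was $(X_i\setminus\{a_i\})\cup\{e_i\}$, where $e_i$ is the page evicted at $p_i$. There are two cases.
\begin{itemize}
\item $e_i=a_i$ and the cache was exactly $X_i$. This state is prefix-determined and $e_i$ is chosen without predictions, so the case has probability at most $1/2$.
\item $a_i$ was already absent. Since $a_i$ was requested in phase $i-1$ (and for $i=1$ the cache is $X_1$ after warm-up), this forces a lazy miss in phase $i-1$, and the charge is injective.
\end{itemize}
With $E$ the number of phases containing a lazy miss, $m-E\le E+\#\{\text{phases of the first kind}\}$. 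Hence $\mathbb{E}[E]\ge m/4$ and $\mathbb{E}[cost(\textsc{ALG})]\ge cost(\textsc{OPT})+m/4$, which contradicts $1$-consistency as $m\to\infty$.
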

\begin{proof}
    Suppose a request to a page in $L_0$ arrives at time $t$ and \textsc{ALG} incurs a cache miss, but makes its eviction decision without using the predictions. Let $p$ be the page evicted by \textsc{ALG} at time $t$. The adversary can then request $p$ again next, forcing \textsc{ALG} to incur an additional cache miss. In contrast, the offline optimum evicts some page $q \neq p$ at time $t$, since $p$ has the earliest next-arrival time. Hence, even if \textsc{ALG} is optimal up to time $t$, it incurs strictly more misses than the offline optimum after serving the request to $p$ again. This violates $1$-consistency, completing the proof.
\end{proof}

\section{Proof of Lemma \ref{lemma:onopt_miss_D_change}} \label{appendix:proof_lemma_onopt_miss_D_change}
Lemma \ref{lemma:onopt_miss_D_change}. \textit{Suppose that, when serving a lazy adversary request, \textsc{OnOPT} incurs a miss, whereas \textsc{OM} incurs a miss with probability $x$. If \textsc{OnOPT} currently evicts a page according to \textsc{OM}'s priorities, then the change of expected difference $D$ between the cache configurations of the two algorithms satisfies $\Delta D \leq -1 + x$. Otherwise, $\Delta D \leq x$.}
\begin{proof}
    The lemma follows by combining Lemma \ref{lemma:onopt_miss_om_hit_D_change}
    and Lemma \ref{lemma:onopt_miss_om_miss_D_change}, which distinguish whether
\textsc{OM} misses on the current request. Their proofs are given in Appendix \ref{appendix:proof_onopt_miss_om_hit_D_change} and Appendix \ref{appendix:proof_onopt_miss_om_miss_D_change}, respectively.

    \begin{lemma} \label{lemma:onopt_miss_om_hit_D_change}
        Suppose that when serving a lazy adversary request, \textsc{OnOPT} misses, whereas \textsc{OM} hits. If \textsc{OnOPT} currently evicts a page according to \textsc{OM}'s priorities, then the difference $D$ between the cache configurations of the two algorithms decreases by $1$, i.e., $\Delta D = -1$. Otherwise, $\Delta D \leq 0$.
    \end{lemma}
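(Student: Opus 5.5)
Fix one realization of \textsc{OM}'s random priorities and compare its cache $C_2$ with the \textsc{OnOPT} cache $C_1$, where $D=|C_1\setminus C_2|$. Let $p$ be the lazy adversary request, so $p\in N(\omega)$, $p\notin C_1$, and $p\in C_2$ because \textsc{OM} hits. \textsc{OM}'s cache is then unchanged, while \textsc{OnOPT} loads $p$ and evicts some $x\in V=C_1\cap\bigcup_{l=1}^{z}L_l$. The new difference is $|(C_1\setminus\{x\}\cup\{p\})\setminus C_2|$. Since $p\in C_2$, adding $p$ contributes nothing, and removing $x$ lowers $D$ by exactly $1$ if $x\notin C_2$ and leaves it unchanged if $x\in C_2$. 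This already gives $\Delta D\le 0$ for an arbitrary eviction, which is the ``otherwise'' clause. (The layer update caused by $p$ changes the work function but not the two cache sets, so it does not affect $D$.)

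For the first clause, I must show that if \textsc{OnOPT} evicts the page of $V$ with lowest \textsc{OM} priority, then that page is not in $C_2$. The plan is a counting argument that uses the prefix structure of both caches. Both $C_1$ and $C_2$ are valid configurations. By Corollary~\ref{corollary:valid_prefix_constraints}, $|C_2\cap\bigcup_{l\le j}L_l|\le j$ for every $j$. By the choice of $z$, $|C_1\cap\bigcup_{l\le z}L_l|=z$, so $|V|=z$.

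Now $p\in L_i$ with $i\le z$, $p\in C_2$ and $p\notin C_1$. Hence $C_2\cap\bigcup_{l\le z}L_l$ contains $p$, which lies outside $V$, and has at most $z$ elements. It follows that $|V\cap C_2|\le z-1<|V|$, so $V\not\subseteq C_2$ and at least one candidate is missing from \textsc{OM}'s cache. I then need the lowest-priority candidate to be such a page. For this I would use the greedy construction $C_j=\max_j(C_{j-1}\cup L_j)$. Its consequence is that whenever a page $y\in\bigcup_{l\le z}L_l$ lies in $C_2$ and $y'\in\bigcup_{l\le z}L_l$ has higher priority, $y'$ also survives in $C_2$, unless $y'$ was cut at an earlier prefix $j$ where the prefix was already saturated.

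The main obstacle is this monotonicity step. I would argue by contradiction. Suppose the minimum-priority page $x\in V$ lies in $C_2$, and take $y\in V\setminus C_2$, so $y$ has higher priority than $x$. Let $y\in L_a$. Since $y$ was dropped, some step $j\ge a$ of the construction cut it, so $C_j$ consists of $j$ pages of priority above $y$ from $\bigcup_{l\le j}L_l$. Then $x$, whose priority is below $y$'s, cannot be in $C_j$ if $x\in\bigcup_{l\le j}L_l$. It cannot return later either, because sets only shrink along the priority order. If instead $x\in L_b$ with $b>j$, I would use the minimality of $z$: it gives $|C_1\cap\bigcup_{l\le j'}L_l|<j'$ for $i\le j'<z$, and combined with the saturation of $C_2$'s prefix at $j$ and $p$'s membership, the count of $C_2\cap\bigcup_{l\le z}L_l$ exceeds $z$, contradicting validity. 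Choosing $y$ to be the highest-priority dropped candidate should make this case analysis close. Finally, the bound holds for every realization of the priorities, so it holds in expectation, which gives $\Delta D=-1$.
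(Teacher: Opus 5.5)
\paragraph{Verdict: genuine gap in the subcase $b>j$.}
Your first clause ($\Delta D\le 0$ for an arbitrary eviction) is fine. So is the counting step $|V\cap C_2|\le z-1$, which is the $h=0$ instance of the paper's claim, and so is the subcase $b\le j$ of your contradiction argument. The gap is the subcase $b>j$, which is where the real difficulty of the lemma lies. Write $Q_l=L_1\cup\cdots\cup L_l$.

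\paragraph{Why the subcase $b>j$ does not go through.}
\begin{itemize}
\item You invoke ``the saturation of $C_2$'s prefix at $j$'', but the greedy only gives $|C_j|=j$ for the \emph{intermediate} set $C_j=\max_j(C_{j-1}\cup L_j)$. Later steps can discard members of $C_j$. For example, if $L_{j+1}$ contains $j+1$ pages of higher priority than everything in $C_j$, then $C_{j+1}\cap Q_j=\emptyset$. So $|C_2\cap Q_j|<j$ is possible.
\item Even granting some saturated prefix $Q_l$ of $C_2$ with $l<b$, the counting does not push $|C_2\cap Q_z|$ above $z$. The pages of $V\cap(Q_z\setminus Q_l)$ that minimality of $z$ gives you belong to $C_1$, and you do not know they lie in $C_2$. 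Run honestly (using minimality of $z$ if $l\ge i$, and $p$ if $l<i$), the counting only produces \emph{another} candidate $y'\in V\setminus C_2$ in a layer above $l$.
\item That $y'$ satisfies $\pi(x)<\pi(y')$ but possibly $\pi(y')<\pi(y)$. So choosing $y$ of highest priority does not close the loop.
\end{itemize}

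\paragraph{What is missing.}
You need two things: the exchange (maximum-priority) property of \textsc{OM}'s greedy configuration among valid configurations, and the right extremal choice.
\begin{itemize}
\item The paper sets $h=\max\{0\le l<b: |C_2\cap Q_l|=l\}$.
\item Your counting, refined to $V\cap(Q_z\setminus Q_h)$, then shows that some $y\in V$ in a layer $a>h$ is missing from $C_2$.
\item The paper then checks that $C_2\setminus\{x\}\cup\{y\}$ satisfies every prefix constraint. The only counts that increase are those of $Q_l$ with $a\le l<b$, and these prefixes are unsaturated because $h<l<b$.
\item Since $\pi(y)>\pi(x)$, this contradicts the fact that the greedy configuration is the highest-priority valid configuration.
\end{itemize}
Within your own framework, an equivalent repair is to take $y$ to be the missing candidate in the \emph{highest layer} rather than the one of highest priority. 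You would still need the exchange property to guarantee a saturated prefix of $C_2$ between $a$ and $b$; the greedy cut step alone does not supply it.
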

    
    \begin{lemma} \label{lemma:onopt_miss_om_miss_D_change}
        Suppose that when serving a lazy adversary request, both \textsc{OnOPT} and \textsc{OM} miss. If \textsc{OnOPT} currently evicts a page according to \textsc{OM}'s priorities, then the difference $D$ between the cache configurations of the two algorithms does not increase, i.e., $\Delta D \leq 0$. Otherwise, $\Delta D \leq 1$.
    \end{lemma}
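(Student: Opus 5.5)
Write $C_1$ for \textsc{OnOPT}'s cache and $C_2$ for \textsc{OM}'s cache just before a lazy adversary request to a page $p\in L_i$, $i>0$. Both algorithms miss, so $p\notin C_1\cup C_2$. \textsc{OnOPT} evicts some $x\in C_1$ and \textsc{OM} evicts some $y\in C_2$. The plan is a direct set computation followed by a structural argument for the priority-following case.

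\textbf{Step 1: the bookkeeping identity.} Since $p$ enters both caches, it never contributes to the difference, so
\[
D'=\bigl|(C_1\setminus\{x\})\setminus(C_2\setminus\{y\})\bigr| .
\]
If $x=y$, this is $|(C_1\setminus C_2)\setminus\{x\}|\le D$. If $x\neq y$, the pages of $C_1\setminus\{x\}$ missing from $C_2\setminus\{y\}$ are those of $(C_1\setminus C_2)\setminus\{x\}$, together with $y$ when $y\in C_1$. Hence
\[
D' \le D-\mathbf{1}[x\in C_1\setminus C_2]+\mathbf{1}[y\in C_1].
\]
This immediately gives $\Delta D\le 1$ for an arbitrary (e.g.\ prediction-driven) eviction, which is the ``otherwise'' clause.

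\textbf{Step 2: reduce the priority case to one implication.} When \textsc{OnOPT} evicts by \textsc{OM}'s priorities, it suffices to show: if $x\neq y$ and $y\in C_1$, then $x\notin C_2$. Suppose, for contradiction, that $x,y\in C_1\cap C_2$ and $x\neq y$. Let $z_1$ and $z_2$ be the tight indices of Algorithm~\ref{algo:onopt} for $C_1$ and $C_2$. Then
\begin{itemize}
\item $x$ is the lowest-priority page of $V_1=C_1\cap(L_1\cup\cdots\cup L_{z_1})$;
\item $y$ is the lowest-priority page of $V_2=C_2\cap(L_1\cup\cdots\cup L_{z_2})$, namely $y=\min(C_{z_2})$ in \textsc{OM}'s construction.
\end{itemize}
If both $x,y$ lie in $V_1\cap V_2$, the two minimality conditions force $x=y$, a contradiction. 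So it remains to show $x\in V_2$ and $y\in V_1$.

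\textbf{Step 3: compare the tight indices (main obstacle).} I split on the order of $z_1$ and $z_2$.
\begin{itemize}
\item If $z_1\le z_2$, then $x\in V_1$ lies in $L_1\cup\cdots\cup L_{z_2}$ and $x\in C_2$, so $x\in V_2$. It remains to rule out $y$ lying in a layer $L_j$ with $z_1<j\le z_2$.
\item The case $z_1>z_2$ is symmetric, with the roles of $x$ and $y$ swapped.
\end{itemize}
For the remaining membership I would use two facts.
\begin{enumerate}
\item Both configurations are valid, so Corollary~\ref{corollary:valid_prefix_constraints} bounds their prefix counts by $j$; tightness at $z_1$ for $C_1$ then constrains how many pages $C_2$ can hold in each prefix.
\item \textsc{OM}'s configuration is greedy: $C_j=\max_j(C_{j-1}\cup L_j)$, so pages of $C_2$ in low layers have high priority, and $\min(C_{z_2})$ is the globally lowest-priority page among \textsc{OM}'s pages in those layers.
\end{enumerate}
Concretely, I would argue as follows when $y\in L_j$ with $z_1<j\le z_2$. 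The page $x$ sits in the lower prefix $L_1\cup\cdots\cup L_{z_1}$ and is in $C_2$, so $x$ belongs to $C_{z_1}\subseteq C_{z_2}$. Since $y=\min(C_{z_2})$, this gives $\mathrm{prio}(y)<\mathrm{prio}(x)$. On the other side, $y$ survived \textsc{OM}'s greedy selection into $C_{z_2}$ despite lower priority than $x$. Combining this with the tightness of $C_1$ at $z_1$ should yield a contradiction.

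Making this exchange argument airtight is the step I expect to be hardest. If it resists, I would instead prove directly that $z_1=z_2$ whenever $x,y\in C_1\cap C_2$, using the prefix constraints on the common pages.

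\textbf{Step 4: conclude.} The implication of Step 2 gives $\Delta D\le 0$ in the priority-following case. Together with the general $\Delta D\le 1$ from Step 1, this proves the lemma.
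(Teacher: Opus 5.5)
Your Steps 1 and 2 are correct. The identity $D'\le D-\mathbf{1}[x\in C_1\setminus C_2]+\mathbf{1}[y\in C_1]$ settles the ``otherwise'' clause, and the priority case does reduce to ruling out $x\neq y$ with $x,y\in C_1\cap C_2$.

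\textbf{The gap is Step 3.} It is the whole content of the lemma, and you leave it at ``should yield a contradiction''. The facts you cite do not give one. Having $\pi(y)<\pi(x)$ and $y$ surviving into $C_{z_2}$ is perfectly consistent, because the greedy keeps low-priority pages whenever the prefix constraints force it. ``Pages of $C_2$ in low layers have high priority'' is not a property of the greedy. Also, $C_{z_1}\subseteq C_{z_2}$ is false in general. What you actually need, $x\in C_{z_2}$, holds only because $x\in C_2\cap Q_{z_2}=C_{z_2}$ by tightness, where $Q_l=L_1\cup\cdots\cup L_l$.

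Your fallback of forcing $z_1=z_2$ from prefix constraints on common pages cannot work either. Take $k=5$, $L_1=\{a,g\}$, $L_2=\{p,b\}$, $L_3=\{c,f\}$, $L_4=\{d\}$, $L_5=\{e\}$, $C_1=\{a,b,c,d,e\}$, $C_2=\{a,c,d,e,f\}$, with $\pi(c)<\pi(a)<\pi(b)$ and $\pi(c)<\pi(f)$. Both caches are valid and miss on $p$, $z_1=2\neq z_2=3$, and $x=a$, $y=c$ are both common pages. Yet $\Delta D=+1$. The only thing excluding this configuration is that $C_2$ cannot be \textsc{OM}'s cache: $C_2\setminus\{c\}\cup\{b\}$ is valid and $\pi(b)>\pi(c)$.

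\textbf{The missing idea} is the exchange optimality of \textsc{OM}'s configuration. It is the maximum-priority valid configuration, so if $u\in C_2$, $w\notin C_2$ and $C_2\setminus\{u\}\cup\{w\}$ is valid, then $\pi(w)<\pi(u)$. You must combine this with a counting argument that produces a witness $w\in V_1\setminus C_2$ for which such a swap is valid.

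For $y\in L_j$ with $z_1<j\le z_2$, let $h$ be the largest index below $j$ at which $C_2$ is saturated. Since $C_2$ is unsaturated on $[i,z_2)$, we get $h<i\le z_1$. Then $|C_1\cap(Q_{z_1}\setminus Q_h)|\ge z_1-h>z_1-1-h\ge|C_2\cap(Q_{z_1}\setminus Q_h)|$. So some $w\in V_1\setminus C_2$ lies in some $L_a$ with $h<a\le z_1$. Swapping $y$ out of $C_2$ for $w$ only raises prefix counts at levels in $[a,j)$, all of which are unsaturated, so the swap is valid. This gives $\pi(x)\le\pi(w)<\pi(y)$, contradicting $\pi(y)<\pi(x)$.

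The case $z_1>z_2$ is not symmetric, since only $C_2$ has exchange optimality. There you must swap $x$, not $y$, out of $C_2$. Use the last saturated index $h'\in[z_2,j)$ of $C_2$ below $x$'s layer $L_j$, together with the non-saturation of $C_1$ at $h'$, to find $w\in V_1\setminus C_2$ above $Q_{h'}$ whose swap with $x$ is valid. Then $\pi(w)<\pi(x)$ contradicts the minimality of $x$ in $V_1$ directly.

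\textbf{How the paper avoids the case split.} It shows once, by the same kind of exchange, that $x$ is not in \textsc{OM}'s post-request cache $C_2\setminus\{y\}\cup\{p\}$. That cache is the greedy configuration among valid ones containing $p$, and $p$ supplies the extra page in the counting step. A three-way check on \textsc{OM}'s evicted page then gives $\Delta D\le 0$.
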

\end{proof}

\section{Proof of Lemma \ref{lemma:onopt_miss_om_hit_D_change}} \label{appendix:proof_onopt_miss_om_hit_D_change}
Lemma \ref{lemma:onopt_miss_om_hit_D_change}. \textit{
    Suppose that when serving a lazy adversary request, \textsc{OnOPT} misses, whereas \textsc{OM} hits. If \textsc{OnOPT} currently evicts a page according to \textsc{OM}'s priorities, then the difference $D$ between the cache configurations of the two algorithms decreases by $1$, i.e., $\Delta D = -1$. Otherwise, $\Delta D \leq 0$.
}
\begin{proof}
    Let $\omega = (L_0, L_1, \ldots, L_k)$ be the current work function, $C$ be \textsc{OnOPT}'s current cache configuration, and $C^*$ be that of \textsc{OM}. Since \textsc{OnOPT} misses and \textsc{OM} hits, the requested page $p$ satisfies $p \in C^* \setminus C$.

    Denote $Q_j = L_1 \cup \cdots \cup L_j$. Let $p \in L_i$, and let $j \geq i$ be the smallest index such that $Q_j$ is saturated by $C$, meaning $|C \cap Q_j| = j$. By definition, $C \cap Q_j$ is \textsc{OnOPT}'s eviction candidate set, and $p \in Q_j$. Recall that \textsc{OM}'s cache $C^* = C^*_k(\omega)$ is constructed via the priority-greedy process (see Definition~2 of \citet{brodal2015onlinemin}), which imposes the validity constraint $|C^* \cap Q_l| \leq l$ for all $l$.

    We consider two cases based on which eviction rule \textsc{OnOPT} applies.

    \begin{enumerate}
        \item \textsc{OnOPT} does not follow \textsc{OM}'s priorities. Instead, it evicts an arbitrary page in $C \cap Q_j$. Since the evicted page may or may not belong to $C^*$, we have $\Delta D \leq 0$.
        \item \textsc{OnOPT} follows \textsc{OM}'s priorities. Below, we prove that the evicted page $q$ satisfies $q \notin C^*$, which implies $\Delta D = -1$.
        
        In this case, \textsc{OnOPT} evicts
        \begin{equation} \label{eq:appendix_om_eviction_rule}
            q = \arg\min_{x \in C \cap Q_j} \pi(x),
        \end{equation}
        where $\pi(x)$ denotes the priority of $x$ assigned by \textsc{OM}.

        Let $q \in L_b$, where $b \leq j$. Define
        \begin{equation}
            h = \max\{0 \leq l < b : |C^* \cap Q_l| = l\}, \nonumber
        \end{equation}
        where $Q_0 = \emptyset$. Thus $Q_h$ is the last prefix before $q$'s layer that is saturated by $C^*$, and $Q_l$ is not saturated by $C^*$ for every $h < l < b$, i.e., $|C^* \cap Q_l| \leq l - 1$.
    
        \medskip
        \noindent\textbf{Claim.} There exists a page in $C\cap (Q_j \setminus Q_h)$ that is not in $C^*$, i.e., $(C \cap (Q_j \setminus Q_h)) \setminus C^* \neq \emptyset$.
    
        \begin{proof}[Proof of Claim]
            Suppose for contradiction that $C \cap (Q_j \setminus Q_h) \subseteq C^*$. Since $C$ is a valid configuration, Corollary \ref{corollary:valid_prefix_constraints} gives $|C \cap Q_h| \leq h$. Combined with $|C \cap Q_j| = j$:
            \begin{equation}
                |C \cap (Q_j \setminus Q_h)| = j - |C \cap Q_h| \geq j - h. \nonumber
            \end{equation}
            Since $C \cap (Q_j \setminus Q_h) \subseteq C^*$ by assumption, and $|C^* \cap Q_h| = h$:
            \begin{equation}
                |C^* \cap Q_j| \geq h + |C \cap (Q_j \setminus Q_h)| \geq j. \nonumber
            \end{equation}
            By validity of $C^*$, $|C^* \cap Q_j| \leq j$. This forces $|C^* \cap Q_j| = j$, which in turn forces $|C \cap Q_h| = h$.
    
            If $h \geq i$, then $Q_h$ is saturated by $C$ with $i \leq h < j$, contradicting the fact that $j$ is the smallest index no less than $i$ such that $Q_j$ is saturated by $C$. Hence $h < i \leq j$, so $p \in L_i \subseteq Q_j \setminus Q_h$. Now $C^* \cap Q_j$ contains: (i) the $h$ pages in $C^* \cap Q_h$, (ii) pages in $C \cap (Q_j \setminus Q_h) \subseteq C^*$, with at least $j - h$ pages, and (iii) the page $p \in (C^* \setminus C) \cap (Q_j \setminus Q_h)$. Therefore
            \begin{equation}
                |C^* \cap Q_j| \geq h + (j - h) + 1 = j + 1, \nonumber
            \end{equation}
            contradicting the validity of $C^*$.
        \end{proof}
    
        Now let $y \in (C \cap (Q_j \setminus Q_h)) \setminus C^*$ be the page guaranteed by the claim.
        
        Suppose, for contradiction, that $q \in C^*$. Since $y \notin C^*$ while $q \in C^*$, we have $q \neq y$. Since $q$ is the minimum-priority page in $C \cap Q_j$ by \eqref{eq:appendix_om_eviction_rule} and $y \in C \cap Q_j$, we have $\pi(q) < \pi(y)$.
    
        Let $y \in L_a$. Since $y \in Q_j \setminus Q_h$, we have $h < a \leq j$. Consider the alternative configuration $\tilde{C}^* = C^* \setminus \{q\} \cup \{y\}$. We verify it satisfies the prefix constraints given by Corollary \ref{corollary:valid_prefix_constraints}:
        \begin{itemize}
            \item If $a < b$: for each $a \leq l < b$, adding $y \in L_a$ increases $|\tilde{C}^* \cap Q_l|$ by $1$ relative to $|C^* \cap Q_l|$, since $q \in L_b$ is not in $Q_l$. Because $h < a \leq l < b$ implies $|C^* \cap Q_l| \leq l - 1$, we retain $|\tilde{C}^* \cap Q_l| \leq l$. Besides, all other prefixes are unchanged or decreased.
            \item If $a \geq b$: for each $b \leq l < a$, removing $q \in L_b$ decreases $|\tilde{C}^* \cap Q_l|$ by $1$, and for $l \geq a$ and $l < b$, the count is unchanged. Thus, all constraints are preserved.
        \end{itemize}
        Hence, $\tilde{C}^*$ is still valid. Since $\pi(y) > \pi(q)$, the priority-greedy process of \textsc{OM} would select $y$ over $q$, contradicting $q \in C^*$ and $y \notin C^*$.
    
        Therefore, $q \notin C^*$, and the eviction removes a page in $C \setminus C^*$, giving $\Delta D = -1$.
    \end{enumerate}
\end{proof}

\section{Proof of Lemma \ref{lemma:onopt_miss_om_miss_D_change}}
\label{appendix:proof_onopt_miss_om_miss_D_change}

Lemma \ref{lemma:onopt_miss_om_miss_D_change}. \textit{
    Suppose that when serving a lazy adversary request, both \textsc{OnOPT} and \textsc{OM} miss. If \textsc{OnOPT} currently evicts a page according to \textsc{OM}'s priorities, then the difference $D$ between the cache configurations of the two algorithms does not increase, i.e.,
    $\Delta D\le 0$. Otherwise, $\Delta D \leq 1$.
}
\begin{proof}
    We adopt the notation from the proof of Lemma \ref{lemma:onopt_miss_om_hit_D_change} in Appendix \ref{appendix:proof_onopt_miss_om_hit_D_change}: $\omega$, $C$, $C^*$, $C^*_k(\omega)$, $Q_j$, and the eviction candidate set $C \cap Q_j$ with $j$ the smallest saturated index no less than $i$. The difference here is that both algorithms miss on $p$, i.e., $p \notin C$ and $p \notin C^*$. Similarly, we consider two cases based on which eviction rule \textsc{OnOPT} applies.

    \begin{enumerate}
        \item \textsc{OnOPT} does not follow \textsc{OM}'s priorities. Instead, it evicts an arbitrary page in $C \cap Q_j$. Since this page may or may not belong to $C^*$, and \textsc{OM} may or may not evict a page belonging to $C$, we have $\Delta D \leq 1$.
        \item \textsc{OnOPT} follows \textsc{OM}'s priorities. Let $C'$ and $C^{*\prime}$ be the cache configurations of \textsc{OnOPT} and \textsc{OM} after serving $p$, respectively. Below, we prove that the evicted page $q$ satisfies $q \notin C^{*\prime}$.
        
        In this case, \textsc{OnOPT} evicts
        \begin{equation} \label{eq:appendix_om_eviction_rule_miss}
            q = \arg\min_{x \in C \cap Q_j} \pi(x), \nonumber
        \end{equation}
        yielding $C' = C \setminus \{q\} \cup \{p\}$. Meanwhile, \textsc{OM} also loads $p$. By Definition \ref{def:valid_configuration} and Corollary \ref{corollary_unrevealed}, valid configurations after serving $p$ are exactly the valid configurations before serving $p$ that contain $p$, since $p$ incurs no cost. Therefore
        \begin{equation}
            C^{*\prime} = C_k^*(\omega^p) \nonumber
        \end{equation}
        is the priority-greedy valid configuration among all valid configurations containing $p$.

        Since $p \in C^{*\prime} \setminus C$, we can replay the argument of Case~2 of Lemma~\ref{lemma:onopt_miss_om_hit_D_change}'s proof with $C^{*\prime}$ in place of $C^*$. Specifically, let $q \in L_b$ with $b \leq j$, and define
        \begin{equation}
            h = \max\{0 \leq l < b : |C^{*\prime} \cap Q_l| = l\} \nonumber
        \end{equation}
        as before.
        
        A similar claim holds: there exists a page $y \in (C \cap (Q_j \setminus Q_h)) \setminus C^{*\prime}$. The proof is identical to that in Lemma~\ref{lemma:onopt_miss_om_hit_D_change}, with the validity of $C^{*\prime}$ in place of $C^*$ and using the fact that $p \in (C^{*\prime} \setminus C) \cap (Q_j \setminus Q_h)$.
        
        Suppose for contradiction $q \in C^{*\prime}$. By the same priority comparison, $\pi(q) < \pi(y)$. Consider the alternative configuration $\widetilde{C}^{*\prime} = C^{*\prime} \setminus \{q\} \cup \{y\}$. The validity check across the two cases $a < b$ and $a \geq b$ proceeds similarly as before, so $\widetilde{C}^{*\prime}$ is a valid configuration. Since $\pi(y) > \pi(q)$, the priority-greedy process of \textsc{OM} would select $y$ over $q$, contradicting $q \in C^{*\prime}$ and $y \notin C^{*\prime}$. Therefore, $q \notin C^{*\prime}$. 
        
        Let $z$ be the page evicted by \textsc{OM}. If $z = q$, both algorithms perform the same, so $\Delta D = 0$.
        
        Otherwise $z \neq q$. Combined with $q \notin C^{*\prime}$, we have $q \notin C^*$. In this case, if $z \notin C$, both algorithms remove a page that is not in the cache configuration of the other, leading to $\Delta D = -1$. If $z \in C$, \textsc{OM} replaces $z \in C$ with $p \in C'$, and \textsc{OnOPT} replaces $q \notin C^*$ with $p \in C^{*\prime}$, so $\Delta D = 0$.

        Therefore, $\Delta D \leq 0$ in this case.
    \end{enumerate}
\end{proof}

\section{Reanalysis of \textsc{OM} Algorithm}
\label{appendix:om_re-analysis}

\subsection{Definition and Properties}
Recall that \textsc{OM}'s cache configuration $C=C_k(\omega)$ is constructed by the priority-greedy process in Definition $2$ of \citet{brodal2015onlinemin}:
\begin{equation} \label{eq:appendix_om_priority-greedy_process}
    \text{for } j = 1, ..., k, \quad C_j(\omega) = \max\nolimits_j(C_{j-1}(\omega) \cup L_j)
\end{equation}
where $C_0(\omega) = \emptyset$, and $\max_j(S)$ means the subset of $S$ of size $j$ having the largest priorities.

The following two conclusions are given in the literature.

\begin{theorem} [Theorem 2 of \citet{achlioptas2000competitive}] \label{appendix:theorem_equitable_optimality}
    Algorithm \textsc{Equitable} is $H_k$-competitive.
\end{theorem}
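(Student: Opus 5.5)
The plan is an amortized potential argument in which the potential is exactly the quantity of Definition~\ref{def:om_potential_function}, but for \textsc{Equitable}. Let $\phi(\omega)$ be the expected cost \textsc{Equitable} incurs, starting from its distribution $\Pi(\omega)$, while serving a lazy adversary strategy from $\omega$. By Lemma~\ref{lemma_equitable_against} this value does not depend on which lazy adversary strategy is chosen, so $\phi$ is a well-defined function of $\omega$ alone. The goal is the two amortized inequalities
\begin{align}
\Delta cost(\textsc{Equitable}) + \Delta\phi &\le 0 \quad \text{on lazy requests}, \nonumber\\
\Delta cost(\textsc{Equitable}) + \Delta\phi &\le H_k \quad \text{on requests to } L_0. \nonumber
\end{align}
By Corollary~\ref{corollary_offline_optimal_miss_on_L0}, \textsc{OPT} pays exactly $1$ on each $L_0$ request and $0$ on each lazy request. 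Summing over $\sigma$, and using $\phi\ge 0$ together with a bounded initial potential after warm-up, then gives $cost(\textsc{Equitable}) \le H_k\cdot cost(\textsc{OPT}) + O(1)$.

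\textbf{Lazy requests.} A request to a revealed page leaves the layers, $\mathcal{V}$, $\Pi$ and hence $\phi$ unchanged. Every valid configuration contains the revealed pages (Corollary~\ref{corollary_revealed}), so \textsc{Equitable} hits and the inequality is trivial. For a lazy adversary request $r$ to $p\in N(\omega)$, the key is a \emph{consistency property} of \textsc{Equitable}'s transition kernel $K$: after serving $r$, the distribution over configurations equals $\Pi(\omega^r)$, the distribution that the randomized selection process defines directly from $\omega^r$. Given this, any lazy adversary strategy beginning with $r$ decomposes as $r$ followed by a lazy adversary strategy from $\omega^r$. Hence $\phi(\omega) = \Pr[\text{miss on } r] + \phi(\omega^r)$, which is exactly $\Delta cost + \Delta\phi = 0$. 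This step is mostly unpacking definitions, together with citing the construction of $K$ in \citet{achlioptas2000competitive}.

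\textbf{Requests to $L_0$.} Here $\Delta cost\le 1$ and $\phi(\omega)\ge 0$, so it suffices to show the uniform bound $\phi(\omega')\le H_k-1$ for every work function $\omega'$ reached right after an $L_0$ request. In $\omega'$ the new page is the singleton $L_k$, and the uncertainty sits in at most $k-1$ unrevealed layers. Since $\phi$ is strategy-independent, I would evaluate it on a convenient \emph{randomized} lazy adversary strategy: repeatedly request a uniformly random page among those whose request reveals a new layer. When $m$ unrevealed layers remain, I would argue through the uniform-sampling symmetry of the selection process that \textsc{Equitable} misses with probability at most $1/(m+1)$. Summing over $m=k-1,\dots,1$ gives $\sum_{m=1}^{k-1}\frac{1}{m+1} = H_k-1$.

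\textbf{Main obstacle.} I expect the per-step bound $1/(m+1)$ to be the hardest part. It requires relating the probability that the requested page is absent from a $\Pi$-random configuration to the counting structure of Lemma~\ref{lemma_valid_condition} for the layer sets. I would try induction on the number of unrevealed layers, using the sequential uniform sampling of $x_1,\dots,x_k$: the sampled configuration misses a chosen unrevealed page only when that page loses the uniform draw in its merged layer. If a direct bound fails, the fallback is the converse route, namely exhibiting one specific lazy strategy whose cost is computable and invoking Lemma~\ref{lemma_equitable_against} to transfer the bound.
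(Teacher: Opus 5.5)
Your treatment of lazy requests is correct, and so is the final summation. Since \textsc{Equitable}'s distribution after a lazy request is $\Pi(\omega^r)$ by definition, strategy-independence gives $\phi(\omega)=\Pr[\text{miss on } r]+\phi(\omega^r)$, which is the same identity the paper uses in its appendix.

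The gap is in the $L_0$ step. After dropping $-\phi(\omega)$ you aim for the absolute bound $\phi(\omega')\le H_k-1$ for every $\omega'$ reached right after an $L_0$ request, and that bound is false. Take $k=2$ and warm up on $a,b$, giving layers $\{a\},\{b\}$. Then request $c$ and $d$, both in $L_0$. Now $L_1=\{a,b,c\}$ and $L_2=\{d\}$, and the valid configurations are $\{a,d\},\{b,d\},\{c,d\}$. The selection process gives each of them probability $\tfrac14\cdot\tfrac13+\tfrac14=\tfrac13$, so the single lazy adversary request misses with probability $\tfrac23>\tfrac12=H_2-1$.

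More generally, $t$ consecutive $L_0$ requests after a fully revealed state grow the merged layer $L_{k-1}$ to $t+1$ pages. Requesting the merged layer each time, the miss probabilities are $t/(k+t-1),\dots,t/(t+1)$. This gives $\phi(\omega')=t\,(H_{k+t-1}-H_t)$, which tends to $k-1$.

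The same example refutes your per-step claim: with $m=1$ unrevealed layer the miss probability is $\tfrac23$, not at most $\tfrac12$. The absence probability of an unrevealed page depends on the layer sizes, not only on $m$. The value $1/(m+1)$ is attained only in the clean case, where $\omega$ was fully revealed before the $L_0$ request. In general the inequality runs the other way: the lower bound $1/(m+1)$ is exactly Lemma~\ref{lemma:om_potential}, since \textsc{OM} and \textsc{Equitable} share the same distribution. Your fallback of computing one convenient lazy strategy does not help, because it still targets the absolute bound.

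What the argument needs is the incremental bound $\phi(\omega^r)-\phi(\omega)\le H_k-1$ for $r\in L_0$. The paper records this as Corollary~\ref{corollary:om_l0_potential_change} and attributes it to the proof in \citet{achlioptas2000competitive}. Together with $\Delta cost=1$, it gives an amortized cost of $H_k$ per $L_0$ request, and your summation then goes through unchanged.

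Proving it requires keeping $\phi(\omega)$ in play. One natural route uses strategy-independence on both sides: choose lazy adversary strategies from $\omega$ and from $\omega^r$ that request the same pages wherever possible, and bound the total excess cost of the second over the first by $H_k-1$. In the $k=2$ family with $|L_1|=m$ before the request, this excess is $\frac{m}{m+1}-\frac{m-1}{m}=\frac{1}{m(m+1)}\le\frac12$, even though $\phi(\omega^r)$ itself approaches $1$.
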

\begin{theorem} [Theorem 2 of \citet{brodal2015onlinemin}] \label{appendix:theorem_om_same_distribution_as_equitable}
    Assume that non-revealed pages are assigned priorities such that the order of the priorities is distributed uniformly at random. For any $\omega$, the distribution of $C_k$ over all possible cache configurations is the same as the distribution of the cache configurations for the \textsc{Equitable} algorithms.
\end{theorem}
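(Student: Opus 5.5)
We plan to prove the statement by induction, matching the recursive structure of \textsc{Equitable}'s selection process with a recursive description of \textsc{OM}'s priority-greedy process \eqref{eq:appendix_om_priority-greedy_process}. Revealed pages belong to every valid configuration (Corollary~\ref{corollary_revealed}). We may therefore give them arbitrary priorities, say i.i.d.\ uniform together with the unrevealed pages, without changing $C_k$. We induct on the number of unrevealed pages $|N(\omega)|$. In the base case $N(\omega)=\emptyset$, the only valid configuration is $R(\omega)$, and both distributions are a point mass there.

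The first step is a structural lemma: $C_k(\omega)$ is the maximum-priority valid configuration. By Corollary~\ref{corollary:valid_prefix_constraints}, the valid configurations are the $k$-subsets of $S(\omega)$ satisfying the nested prefix constraints $|C\cap Q_j|\le j$, where $Q_j=L_1\cup\cdots\cup L_j$. Such laminar capacity constraints define the bases of a matroid. The layer-by-layer greedy $C_j=\max_j(C_{j-1}\cup L_j)$ is then the standard greedy for this matroid, so it returns the unique base that is lexicographically maximal in priority order. We would verify this directly with an exchange argument of the same type as in the proof of Lemma~\ref{lemma:onopt_miss_om_hit_D_change}: any valid $C\neq C_k$ admits a swap $C\setminus\{q\}\cup\{y\}$ that stays valid and increases priority. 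A consequence is that the globally highest-priority page $u\in S(\omega)$ always lies in $C_k(\omega)$.

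The second step is the key reduction $C_k(\omega)=C_k(\omega^u)$. By \eqref{eq_update_rule}, a lazy request to $u$ keeps the optimal cost unchanged. Hence the valid configurations of $\omega^u$ are exactly the valid configurations of $\omega$ that contain $u$ (this is the same fact used in the proof of Lemma~\ref{lemma:onopt_miss_om_miss_D_change}). The maximum-priority valid configuration of $\omega$ contains $u$ by the first step, so it is also the maximum-priority valid configuration of $\omega^u$. Applying the first step to $\omega^u$ gives the claim.

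The third step compares the recursions. In \textsc{Equitable}, $x_1$ is uniform on $S(\omega)$, and the remaining choices run the same process from $\omega^{x_1}$. This gives
\[
P_X(\omega)=\frac{1}{|S(\omega)|}\sum_{x\in X}P_X(\omega^{x}),
\]
where a revealed $x$ leaves the set of valid configurations unchanged. We need to check that this matches the paper's sampling from $S(\omega^{X_{i-1}})\setminus X_{i-1}$, or at least its induced distribution.

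For \textsc{OM}, the top-priority page $u$ is uniform on $S(\omega)$ under i.i.d.\ priorities. Conditioned on $u$, the relative order of the remaining pages is still uniform, and $C_k(\omega)=C_k(\omega^u)$. If $u$ is unrevealed, $\omega^u$ has fewer unrevealed pages, so the induction hypothesis applies. If $u$ is revealed, we condition on the next-highest page instead, that is, we peel revealed pages off without changing anything.

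The main obstacle is this last bookkeeping. We must reconcile \textsc{Equitable}'s exclusion of already-selected pages with the fact that, in $\omega^u$, $u$ becomes revealed and has top priority. We also must handle revealed pages drawn first. Our plan is to prove, as an auxiliary lemma, that \textsc{Equitable}'s final distribution is unchanged if the revealed pages are declared selected up front. This reduces both processes to uniform sampling over $N(\omega)$, after which the two recursions coincide term by term.
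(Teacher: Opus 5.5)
The paper does not prove this statement. It imports it from \citet{brodal2015onlinemin} and uses it as a black box, so you are effectively supplying a proof, and your route works.

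\emph{Step~1 is correct.} The truncation $C_j=\max_j(C_{j-1}\cup L_j)$ is the matroid greedy for the laminar constraints $|C\cap Q_j|\le j$. Hence $C_k(\omega)$ is the unique maximum-priority valid configuration. It does not depend on the priorities of revealed pages, since these are coloops. This is exactly the fact the paper uses without proof when it says ``the priority-greedy process of \textsc{OM} would select $y$ over $q$'' in the proof of Lemma~\ref{lemma:onopt_miss_om_hit_D_change}. Your argument would supply it.

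\emph{Step~2} then follows immediately, because the valid configurations of $\omega^u$ are exactly those of $\omega$ that contain $u$.

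Two points of bookkeeping need attention.

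\emph{On the \textsc{OM} side, peeling is unnecessary.} Take $u$ to be the highest-priority page of $N(\omega)$. Since $u$ lies in some valid configuration and revealed pages are coloops, $R(\omega)\cup\{u\}$ is independent. So $u\in C_k(\omega)$ and $C_k(\omega)=C_k(\omega^u)$. Here $u$ is uniform on $N(\omega)$, and the priority order on $N(\omega^u)\subseteq N(\omega)\setminus\{u\}$ remains uniform.

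\emph{On the \textsc{Equitable} side, your recursion over $S(\omega)$ does not follow directly from the sampling rule.} The continuation after $x_1$ excludes $x_1$, so it is not a fresh run from $\omega^{x_1}$; the recursion holds only once your auxiliary lemma is in place. That lemma is true, and it rests on two facts:
\begin{itemize}
\item A lazy request to an unrevealed $y\in L_i$ reveals exactly $y$. The merged layer $L_{i-1}\cup L_i\setminus\{y\}$ cannot join the singleton suffix, since otherwise $y$ would already have been revealed; for $i=1$ the leftover pages drop to $L_0$.
\item A request to a revealed page only moves that page to the top of the revealed suffix. The unrevealed layers, and hence their future lazy evolution, are untouched.
\end{itemize}
Together these give $S(\omega^{X_{i-1}})\setminus X_{i-1}=(R(\omega)\setminus X_{i-1})\cup N(\omega^{X_{i-1}})$. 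So, conditioned on being unrevealed, each draw is uniform on the current $N$, however the revealed draws are interleaved. The final set is $R(\omega)$ together with the $U(\omega)$ unrevealed draws. Both laws therefore satisfy $P_X(\omega)=\frac{1}{|N(\omega)|}\sum_{y\in N(\omega)\cap X}P_X(\omega^y)$ with the same base case, and induction on $|N(\omega)|$ closes the proof.
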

\begin{lemma} [Lemma 4 of \citet{achlioptas2000competitive}] \label{appendix:lemma_equitable_same_cost_on_lazy_adversary_startegies}
    For any $\omega$, the expected cost of \textsc{Equitable} is the same against all lazy adversary strategies for $\omega$.
\end{lemma}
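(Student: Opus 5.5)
This statement is cited from \citet{achlioptas2000competitive}, so I do not need a new argument; my plan is to reconstruct why it holds from the definition of \textsc{Equitable}'s distribution $\Pi_t(X)=P_X(\omega_t)$. I would argue by induction on $U(\omega)$, the number of unrevealed layers. The quantity to compare is the expected number of misses \textsc{Equitable} incurs along a lazy adversary strategy, that is, a sequence of requests to unrevealed pages ending once no unrevealed pages remain. The base case $U(\omega)=0$ is immediate: the only valid configuration is $R(\omega)$, every lazy request hits, and the cost is $0$.

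For the inductive step, I would first fix a lazy adversary request to some $p\in N(\omega)$. By the update rule \eqref{eq_update_rule}, this merges $p$'s layer into its predecessor and moves $p$ to a new singleton $L_k$. The valid configurations of $\omega^p$ are exactly those valid configurations of $\omega$ that contain $p$, and $U$ drops by one. By the induction hypothesis, the remaining expected cost from $\omega^p$ is a value $f(\omega^p)$ that does not depend on the continuation. The expected cost of the whole strategy is therefore $\Pr_{\Pi(\omega)}[p\notin X]+f(\omega^p)$. It remains to show that this sum does not depend on which unrevealed $p$ is chosen.

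The key computation is $\Pr[p\notin X]$ under the sequential uniform-sampling description. Here I would use the symmetry of that process. The sampling $x_i\sim\mathrm{Uniform}(S(\omega^{X_{i-1}})\setminus X_{i-1})$ is itself a lazy adversary strategy applied to $\omega$, and the miss probability is determined by how many unrevealed pages compete for the same ``slots''. Using the equivalent priority description of Theorem~\ref{appendix:theorem_om_same_distribution_as_equitable}, a uniformly random ordering of priorities with greedy prefix selection, I would express both $\Pr[p\notin X]$ and $f(\omega^p)$ as functions of the layer sizes. I would then show that their sum equals a fixed quantity depending only on the multiset of unrevealed layer sizes and their positions. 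For instance, I expect something of the form $\sum$ over unrevealed layers of $(|L_j|-1)/(\text{number of available slots})$, which is clearly independent of the order in which the layers are revealed.

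The main obstacle is this closed-form identity. The layer merging caused by a request changes layer sizes non-locally, so checking invariance requires a careful exchange argument. I would first try to show that two consecutive lazy requests $p,p'$ give the same total expected cost in either order; this is a local commutation. By induction, that local fact extends to all strategies.
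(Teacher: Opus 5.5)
\paragraph{There is a genuine gap at the central step.} The paper gives no argument for this statement; it imports it from Achlioptas, Chrobak and Noga. Your proposal therefore has to carry the proof, and it stops exactly where the content begins. Your reduction is sound. By induction on $U(\omega)$, a strategy opening with $p\in N(\omega)$ costs $g(p)=\Pr_{\Pi(\omega)}[p\notin X]+f(\omega^p)$, and the lemma is precisely the claim that $g(p)$ does not depend on $p$. You do not prove this.

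The closed form you anticipate is wrong. Take $k=3$, right after an $L_0$-request from a fully revealed state. The unrevealed layers are $(\{a_1\},\{a_2,a_3\})$, all three $2$-subsets of $\{a_1,a_2,a_3\}$ are valid and equally likely, and the lazy cost is $(1-\tfrac{2}{3})+(1-\tfrac{1}{2})=\tfrac{5}{6}$. A sum of terms $(|L_j|-1)/s_j$ collapses here to the single term $1/s_2$, which cannot equal $\tfrac{5}{6}$ for any count $s_2$.

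The ``local commutation'' fallback needs
$\Pr_{\Pi(\omega)}[p\notin X]+\Pr_{\Pi(\omega^{p})}[p'\notin X]=\Pr_{\Pi(\omega)}[p'\notin X]+\Pr_{\Pi(\omega^{p'})}[p\notin X]$.
This is not a formal identity but a property of the specific distribution $\Pi$. Given the induction hypothesis it is essentially equivalent to the inductive step, and you offer no way to establish it. It is also undefined when $p,p'\in L_1$, since requesting one sends the other to $L_0$.

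\paragraph{The missing idea is the averaging recursion built into \textsc{Equitable}.} Picks of revealed pages in the sampling process change nothing. So the unrevealed part of $X$ is the request set of a uniformly random lazy adversary strategy, and hence $\Pi(\omega)=\frac{1}{|N|}\sum_{q\in N}\Pi(\omega^q)$ with $N=N(\omega)$. The argument then runs as follows.

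\begin{enumerate}
\item Expand $\Pr_{\Pi(\omega)}[p\notin X]$ over $q$. The term is $0$ for $q=p$, and $1$ for $q\neq p$ with $p,q\in L_1$. The remaining $q$ are exactly the pages of $N(\omega^p)$, and for them the induction hypothesis at $\omega^q$ gives $\Pr_{\Pi(\omega^q)}[p\notin X]=f(\omega^q)-f(\omega^{\langle q,p\rangle})$.
\item From the update rule, $\omega^{\langle q,p\rangle}$ and $\omega^{\langle p,q\rangle}$ have the same unrevealed layers, so their $f$-values agree.
\item Sum the induction hypothesis at $\omega^p$ over $q\in N(\omega^p)$. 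Every valid configuration of $\omega^p$ contains exactly $U(\omega)-1$ unrevealed pages, so $\sum_{x\in N(\omega^p)}\Pr_{\Pi(\omega^p)}[x\notin X]=|N(\omega^p)|-U(\omega)+1$. This yields $\sum_{q\in N(\omega^p)}f(\omega^{\langle p,q\rangle})=|N(\omega^p)|\bigl(f(\omega^p)-1\bigr)+U(\omega)-1$.
\item If $p\in L_1$, then $f(\omega^q)=f(\omega^p)$ for every $q\in L_1$, because all these work functions have unrevealed layers $L_2,\dots,L_{U(\omega)}$.
\end{enumerate}

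Combining these gives
\begin{equation*}
g(p)=\frac{|N|-U(\omega)}{|N|}+\frac{1}{|N|}\sum_{q\in N}f(\omega^q).
\end{equation*}
This is independent of $p$ and closes the induction without any closed form. It gives $\tfrac{5}{6}$ in the example above.
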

Theorem \ref{appendix:theorem_equitable_optimality} and Theorem \ref{appendix:theorem_om_same_distribution_as_equitable} together imply that \textsc{OM} is $H_k$-competitive. Theorem \ref{appendix:theorem_om_same_distribution_as_equitable} and Lemma \ref{appendix:lemma_equitable_same_cost_on_lazy_adversary_startegies} imply the following corollary.
\begin{corollary} \label{appendix:corollary_om_same_cost_on_lazy_adversary_strategies}
    For any $\omega$, the expected cost of \textsc{OM} is the same against all lazy adversary strategies for $\omega$.
\end{corollary}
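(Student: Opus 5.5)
The corollary should follow almost immediately from composing the two cited results, so the plan is a short transfer argument at the level of cost distributions. Fix $\omega$ and any lazy adversary strategy $\rho = \langle r_1,\ldots,r_m\rangle$ for $\omega$, meaning every $r_i$ is a lazy request to a page unrevealed under $\omega^{r_1\cdots r_{i-1}}$, continued until $N(\cdot)=\emptyset$. Write $\mathbb{E}[cost(\textsc{OM}, C_k(\omega), \rho)]$ for the expected cost of \textsc{OM} on $\rho$ when its starting configuration is the random $C_k(\omega)$ induced by the priorities. The goal is to show this equals \textsc{Equitable}'s expected cost on $\rho$ from its distribution $\Pi$ over $\mathcal{V}$. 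Then Lemma~\ref{appendix:lemma_equitable_same_cost_on_lazy_adversary_startegies} shows the common value is independent of $\rho$.

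The key step is a step-by-step coupling of marginals. By Theorem~\ref{appendix:theorem_om_same_distribution_as_equitable}, for every intermediate work function $\omega_i=\omega^{r_1\cdots r_i}$ the law of \textsc{OM}'s configuration $C_k(\omega_i)$ equals \textsc{Equitable}'s distribution $\Pi$ at $\omega_i$. Here I use that \textsc{OM}'s online update rule keeps its cache equal to $C_k(\omega_i)$, a deterministic function of $\omega_i$ and the priorities, with fresh uniformly random relative order on non-revealed pages. The expected miss indicator at step $i+1$ is $\Pr[r_{i+1}\notin C_k(\omega_i)]$ for \textsc{OM} and $\Pr_{X\sim\Pi(\omega_i)}[r_{i+1}\notin X]$ for \textsc{Equitable}. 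These depend only on the one-step marginals, which coincide, so the per-step expected costs agree. Summing over $i$ by linearity of expectation gives equal total expected cost on $\rho$.

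The main obstacle is justifying that the marginal-equality hypothesis of Theorem~\ref{appendix:theorem_om_same_distribution_as_equitable} holds at every $\omega_i$ along the strategy, not just at the starting $\omega$. The concern is that the priorities are conditioned by past requests, since \textsc{OM} assigns new priorities only on $L_0$ requests. I would argue that along a lazy adversary strategy no page enters from $L_0$, and requests only move pages from unrevealed to revealed. So the relative order of priorities among pages still unrevealed at $\omega_i$ remains uniformly random. The reason is that the construction of $C_k$ and the revealing events do not depend on those pages' relative order beyond what is exchangeable. If this exchangeability is delicate, I would fall back on the fact that Theorem~\ref{appendix:theorem_om_same_distribution_as_equitable} is stated for any $\omega$ reached by \textsc{OM}'s own dynamics, which is exactly how \citet{brodal2015onlinemin} apply it. Once this is in place, expected cost depends only on the marginals, and the corollary follows.
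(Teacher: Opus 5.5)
Your proposal is correct and is essentially the paper's argument: the paper obtains the corollary in one line by composing Theorem~\ref{appendix:theorem_om_same_distribution_as_equitable} with Lemma~\ref{appendix:lemma_equitable_same_cost_on_lazy_adversary_startegies}, and your step-by-step matching of marginals followed by linearity of expectation simply spells that composition out. Like the paper, it tacitly uses that \textsc{Equitable} pays exactly the mass of configurations missing the requested page on a lazy request. Your ``main obstacle'' is also less delicate than you fear: the request sequence is oblivious, so each $\omega_i$ is determined by the requests alone and never conditions on the random priorities, and the hypothesis of Theorem~\ref{appendix:theorem_om_same_distribution_as_equitable} therefore holds at every $\omega_i$ automatically.
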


Besides, the proof of Theorem \ref{appendix:theorem_equitable_optimality} of \citet{achlioptas2000competitive} further implies the following corollary, with the potential function $\phi(\omega)$ as defined in Definition~\ref{def:om_potential_function}.
\begin{corollary} [Corollary \ref{corollary:om_l0_potential_change} in the main text.]
    For any $\omega$, when \textsc{OM} serves a request to $p \in L_0$, the potential change satisfies $\Delta \phi = \phi(\omega^p) - \phi(\omega) \le H_k - 1$.
\end{corollary}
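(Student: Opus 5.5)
The plan is to read the bound off the amortized inequality in the proof of Theorem~\ref{appendix:theorem_equitable_optimality}, and to back it up with a direct estimate of $\phi(\omega^p)$. First, a request to $p \in L_0$ is always a miss for \textsc{OM}. By Corollary~\ref{corollary_revealed}, no valid configuration contains a page of $L_0$. \textsc{OM}'s cache $C_k(\omega)$ is valid, since it is built inside the layer prefixes by the priority-greedy process. So $p \notin C_k(\omega)$, and \textsc{OM} pays exactly $1$ on this request.

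Second, I would transfer Achlioptas et al.'s amortized inequality to \textsc{OM}. By Theorem~\ref{appendix:theorem_om_same_distribution_as_equitable}, \textsc{OM}'s configuration has the same distribution as \textsc{Equitable}'s for every $\omega$. Hence \textsc{OM}'s expected cost on any request, and the expected cost of any lazy adversary strategy, coincide with \textsc{Equitable}'s. Therefore $\phi$ equals the potential used by \citet{achlioptas2000competitive}, which is well defined by Corollary~\ref{appendix:corollary_om_same_cost_on_lazy_adversary_strategies}. The key step of their $H_k$-competitiveness proof is that on a request to $L_0$ the expected cost plus the potential change is at most $H_k$, i.e.\ $\mathbb{E}[\text{cost}] + \phi(\omega^p) - \phi(\omega) \le H_k$. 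Substituting $\mathbb{E}[\text{cost}] = 1$ gives $\Delta\phi \le H_k - 1$.

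Since the exact form of their inequality is the delicate point, I would also check it directly. By~\eqref{eq_update_rule}, after the request the layers are $(L_1,\dots,L_{k-2}, L_{k-1}\cup L_k,\{p\})$, so $\omega^p$ has at most $k-1$ unrevealed layers. I would prove by induction on the number $U$ of unrevealed layers that
\begin{equation*}
\phi(\omega) \le H_{U+1} - 1 .
\end{equation*}
By Corollary~\ref{appendix:corollary_om_same_cost_on_lazy_adversary_strategies}, I may evaluate $\phi$ on any convenient lazy adversary strategy, or on any mixture of them. The natural choice is to request a uniformly random unrevealed page of the lowest relevant layer group, so that \textsc{OM}'s miss probability equals the fraction of that group absent from its uniformly distributed cache. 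Each such step reduces $U$ by one.

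The main obstacle is showing that this miss probability is at most $1/(U+1)$. This is essentially the upper-bound counterpart of Lemma~\ref{lemma:om_potential} and uses the same Equitable sampling structure: $U$ unrevealed layers yield, in effect, $U+1$ symmetric candidates of which the cache holds all but one in expectation. Summing $1/(U+1)$ over $U = k-1, \dots, 1$ then gives $\phi(\omega^p) \le \sum_{i=2}^{k} 1/i = H_k - 1$. Combined with $\phi(\omega) \ge 0$, which holds because $\phi$ is an expected cost, this yields $\Delta\phi \le H_k - 1$ as claimed.
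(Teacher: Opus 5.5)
Your first route is the paper's route. The paper gives no independent argument; it only states that the corollary is implied by the $H_k$-competitiveness proof of \citet{achlioptas2000competitive} (Theorem~\ref{appendix:theorem_equitable_optimality}), transferred to \textsc{OM} via Theorem~\ref{appendix:theorem_om_same_distribution_as_equitable}. Your observation that an $L_0$-request costs \textsc{OM} exactly $1$, so that ``cost $+\,\Delta\phi\le H_k$'' becomes $\Delta\phi\le H_k-1$, is the right way to read the bound off.

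The ``direct check'', however, fails and cannot serve as backup. The level bound $\phi(\omega)\le H_{U+1}-1$ is false, and so is the per-step claim that a lazy adversary request misses with probability at most $1/(U+1)$. Take $k=2$ and distinct requests $a,b,c$. This gives $L_1=\{a,b\}$, $L_2=\{c\}$ and $\phi(\omega)=1/2$. Now request $d\in L_0$. Then $L_1=\{a,b,c\}$, $L_2=\{d\}$ and $U(\omega^d)=1$. \textsc{OM}'s cache is $\{d\}$ plus a uniformly random page of $L_1$, so the single lazy adversary request misses with probability $2/3$. Hence $\phi(\omega^d)=2/3>1/2=H_2-1$, and further $L_0$-requests push $\phi$ toward $1$. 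Repeated $L_0$-requests inflate the unrevealed layers, and the miss probability depends on their sizes, not only on their number $U$. The base case of Lemma~\ref{lemma:om_potential} itself uses $\Pr[y\notin C(\omega)]\ge(|N(\omega)|-1)/|N(\omega)|$. That lemma is a \emph{lower} bound on miss probability, and there is no matching upper bound of $1/(U+1)$.

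The deeper problem is your last step, ``combined with $\phi(\omega)\ge 0$''. The statement is genuinely about the increment. In the example, $\Delta\phi=1/6\le H_2-1$ holds only because $\phi(\omega)$ is already large, so $\phi(\omega)$ cannot be discarded. A self-contained proof has to bound $\phi(\omega^p)-\phi(\omega)$ jointly, for instance by relating lazy adversary strategies from $\omega^p$ to ones from $\omega$. That is exactly the step you import from \citet{achlioptas2000competitive}. So your proposal stands or falls with that citation, just as the paper's does.
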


Building on the above, we establish a new lemma characterizing the potential change when serving lazy adversary requests.

\subsection{Potential Change for Lazy Adversary Requests} \label{appendix:lemma_om_potential_change}
Lemma \ref{lemma:om_potential}. \textit{
    For any $\omega$, when \textsc{OM} serves a lazy adversary request to $p \in N(\omega)$, the potential change satisfies $\Delta \phi = \phi(\omega^p) - \phi(\omega) \le -1/(U(\omega)+1)$, where $U(\omega)=k-|R(\omega)|$ is the number of unrevealed layers.
}
\begin{proof}
Let $C(\omega)$ denote the cache configuration of \textsc{OM} under work function $\omega$. The potential function $\phi(\omega)$ is defined in Definition \ref{def:om_potential_function}. The following equation also follows from Corollary \ref{appendix:corollary_om_same_cost_on_lazy_adversary_strategies}.
\begin{equation} \label{eq:appendix_om_lazy_adversary_strategy_property}
    \forall x \in N(\omega), \quad \phi(\omega) = \Pr[x\notin C(\omega)] + \phi(\omega^x).
\end{equation}

By \eqref{eq:appendix_om_lazy_adversary_strategy_property}, we have
\begin{align} \label{appendix:eq_om_delta_phi}
    \Delta \phi = \phi(\omega^p)-\phi(\omega)=-\Pr[p \notin C(\omega)].
\end{align}

Below, we proceed by induction on $U(\omega)$.

\begin{enumerate}
    \item For the base case $U(\omega) = 1$, only one page in $N(\omega)$ exists in $C(\omega)$. Thus, there exists $y \in N(\omega)$ such that
    \begin{align}
        \Pr[y \notin C(\omega)] \geq \frac{|N(\omega)|-1}{|N(\omega)|} \geq \frac{U(\omega)}{U(\omega)+1} \geq \frac{1}{2}, \nonumber
    \end{align}
    where $|N(\omega)| \geq U(\omega)+1 = 2$, as otherwise all pages in $N(\omega)$ are revealed. Moreover $\phi(\omega^p) = \phi(\omega^y)=0$, as $U(\omega^p) = U(\omega^y)=0$. Combining with \eqref{appendix:eq_om_delta_phi},
    \begin{align}
        \Delta \phi = \phi(\omega^p) - \phi(\omega) = \phi(\omega^y) - \phi(\omega) = -\Pr[y \notin C(\omega)]  \leq -1/2. \nonumber
    \end{align}

    \item Now we consider the case $U(\omega) > 1$, assuming that the lemma holds for any $\omega'$ with $U(\omega') = U(\omega)-1$.

    Let $M_{total}(\omega)$ denote the total miss probability mass over unrevealed pages. Since all pages in $R(\omega)$ reside in the cache, exactly $U(\omega)$ pages in $N(\omega)$ exist in $C(\omega)$, and
    \begin{align}
        M_{total}(\omega) = \sum_{x \in N(\omega)} \Pr[x \notin C(\omega)] = |N(\omega)| - U(\omega). \nonumber
    \end{align}
    After serving $p$,
    \begin{align}
        M_{total}(\omega^p) = \sum_{x \in N(\omega^p)} \Pr[x \notin C(\omega^p)] = |N(\omega^p)| - U(\omega^p). \nonumber
    \end{align}
    Since $p$ leaves $N(\omega)$ and the number of unrevealed layers decreases by $1$, $p$ itself does not change $M_{total}$. Moreover, $p$ then enters the cache with certainty, i.e., $\Pr[p \notin C(\omega^p)]=0$, so its original miss probability $\Pr[p\notin C(\omega)]$ is redistributed among the remaining unrevealed pages in $N(\omega^p)$.
    
    Moreover, if $p \in L_1$ and there are multiple pages in $L_1$, the remaining pages in $L_1$ leave $S(\omega)$ upon $p$'s request. Denote this set by $L = N(\omega) \textbackslash N(\omega^p) \textbackslash \{p\}$. Their departure reduces the number of unrevealed pages without changing the number of unrevealed layers, decreasing $M_{total}$ by $|L|$. Since their total original miss probability mass satisfies $\sum_{x \in L} \Pr[x \notin C(\omega)] \leq |L|$, the total miss probability mass of remaining pages in $N(\omega^p)$ can only decrease.
    
    Therefore, there exists $y \in N(\omega^p)$ such that
    \begin{align}\label{appendix:eq_om_proof_delta_y}
        \Delta_y = \Pr[y \notin C(\omega^p)] - \Pr[y \notin C(\omega)] &\leq \frac{\Pr[p \notin C(\omega)]}{|N(\omega^p)|}.
    \end{align}
    Since $y \in N(\omega^p) \subset N(\omega)$, applying \eqref{eq:appendix_om_lazy_adversary_strategy_property} to $\omega$ and $\omega^p$ gives
    \begin{equation}
        \phi(\omega) = \Pr[y\notin C(\omega)] + \phi(\omega^y) \quad \text{and} \quad \phi(\omega^p) = \Pr[y\notin C(\omega^p)] + \phi(\omega^{\langle p,y \rangle}). \nonumber
    \end{equation}
    Combining these two equations, we have
    \begin{align}
        \Delta \phi = \phi(\omega^p)-\phi(\omega) &= \Delta_y + \phi(\omega^{\langle p, y \rangle})-\phi(\omega^p) \nonumber \\
        &\leq \frac{\Pr[p \notin C(\omega)]}{|N(\omega^p)|} + \phi(\omega^{\langle p, y \rangle})-\phi(\omega^p) \quad \text{by \eqref{appendix:eq_om_proof_delta_y}} \nonumber \\
        &=-\frac{\Delta \phi}{|N(\omega^p)|} + \phi(\omega^{\langle p, y \rangle})-\phi(\omega^p) \quad \text{by \eqref{appendix:eq_om_delta_phi}}. \nonumber
    \end{align}
    This gives
    \begin{align}
        \Delta \phi \leq (\phi(\omega^{\langle p, y \rangle})-\phi(\omega^p)) \cdot \frac{|N(\omega^p)|}{1 + |N(\omega^p)|}. \nonumber
    \end{align}
    Since $U(\omega^p) = U(\omega) - 1$ and $y \in N(\omega^p)$, by the inductive hypothesis,
    \begin{align}
        \phi(\omega^{\langle p, y \rangle}) - \phi(\omega^p) \;\leq\; -\frac{1}{U(\omega^p)+1} = -\frac{1}{U(\omega)}. \nonumber
    \end{align}
    Finally, by $|N(\omega^p)| \geq U(\omega^p) + 1 = U(\omega)$,
    \begin{align}
        \Delta \phi &\leq -\frac{1}{U(\omega)} \cdot \frac{|N(\omega^p)|}{1 + |N(\omega^p)|} \leq -\frac{1}{U(\omega)} \cdot \frac{U(\omega)}{U(\omega)+1} = -\frac{1}{U(\omega)+1}. \nonumber
    \end{align}
    This completes the proof.
\end{enumerate}
\end{proof}

\section{\textsc{RPB-OM}} \label{appendix_rpb_onlinemine_algo}

Algorithm~\ref{algo_rpb_onlinemin} describes \textsc{RPB-OM}, which plugs \textsc{OM}'s priorities into \textsc{RPB-OnOPT}.

\begin{algorithm}[!h]
\caption{\textsc{RPB-OM}}
\label{algo_rpb_onlinemin}
\begin{algorithmic}[1]
    \STATE $\omega \coloneqq$ the current work function
    \STATE $C \coloneqq$ the current cache configuration
    \STATE $B \coloneqq$ the prediction budget
    \STATE $\tau \coloneqq$ the $O(1)$ budget value reset at an $L_0$-miss
    \STATE $Y \coloneqq$ the eviction-effectiveness metric
    \STATE $V \coloneqq$ the eviction candidate set
    \WHILE{receiving a request to page $p$}
        \IF {$p \notin C$}
            \IF{$p \in L_0$}
                \STATE Evict a page $x \in C$ following the predictions
                \STATE $B \leftarrow \tau$
            \ELSIF{$p \in L_i, i > 0$}
                \STATE Apply \textsc{OM}'s rule to select eviction candidates: $V \leftarrow C \cap \bigcup_{l=1}^{z} L_l$, where $z \;=\; \min\Bigl\{\, j \in \{i,..,k\} \bigm| |C \cap \bigcup_{l=1}^{j} L_l| = j \Bigr\}$
                \IF {$U(\omega) \leq (Y+2)/e - 2$}
                    \STATE $B \leftarrow B + 1$
                \ENDIF
                \IF {$B > 0$}
                    \STATE Evict a page $x \in V$ following the predictions
                    \STATE $B \leftarrow B - 1$
                \ELSE
                    \STATE Evict a page $x \in V$ with the lowest \textsc{OM} priority
                \ENDIF
            \ENDIF
            \STATE Cache page $p$, update $C$
            \STATE $Y \leftarrow U(\omega)$
        \ENDIF
    \ENDWHILE
\end{algorithmic}
\end{algorithm}

\subsection{Smoothness of \textsc{RPB-OM}} \label{appendix_rpb_onlinemin_smoothness} 

To prove smoothness, we define a new active residual potential to replace the original \(\phi(\omega)\) in the potential of \textsc{RPB-OM}. This allows us to focus on the impact of prediction errors.

\paragraph{(1) The definition of active residual potential.}
Fix a request sequence and a realization of the priority function \(\pi\). Each \(L_0\)-miss creates one residual event \(e\). Let $\omega_e$ be the work function immediately before event \(e\), and let \(C(\omega_e)\) be
the cache configuration of \textsc{RPB-OM} under $\omega_e$. Since the request is to a page $p \in L_0$, the eviction-candidate set is the whole cache configuration, i.e., $V_e=C(\omega_e)$.

By definition, \textsc{RPB-OM} evicts
\[
    q_e=\arg\max_{x\in V_e}\tilde{\ell}_e(x),
\]
where \(\tilde{\ell}_e(x)\) is the predicted next-arrival time of page \(x\) at
event \(e\). Let \(\ell_e(x)\) be the true next-arrival time of \(x\) after event
\(e\).

Define the local inversion of event \(e\) by
\[
    \mathrm{inv}_e
    =
    \left|
    \left\{
    x\in V_e\setminus\{q_e\}:\ell_e(q_e)<\ell_e(x)
    \right\}
    \right|.
\]
The event \(e\) initially tracks a residual set
\[
    Z_e(\omega_e^p)
    =
    \{q_e\} \cup \left\{
    x\in V_e\setminus\{q_e\}:\ell_e(q_e)<\ell_e(x)
    \right\}.
\]
At a work function \(\omega\), \(Z_e(\omega)\) consists of the pages in this initial set that have not yet been accounted for. Here, a page is accounted for once it is requested, or once it leaves the unrevealed layers due to lazy-layer updates.

After being accounted for, the page leaves the residual set. Let
\[
    z_e(\omega^p_e)=|Z_e(\omega^p_e)|.
\]
The event $e$ is active if \(z_e(\omega)>0\), and we denote the set of active events by $\mathcal{A}(\omega)$.

Consider some event \(e'\) with an initial residual set $Z_{e'}$. A page is removed from \(Z_{e'}\) once it is accounted for, namely once it is requested or once it leaves the unrevealed layers due to lazy-layer updates. Once $Z_{e'}$ becomes empty, the event $e'$ becomes inactive. Hence, each miss caused by a lazy adversary request is covered by at least one active residual event.

For a lazy adversary request to $p$ at work function $\omega$, define
\[
    X_p(\omega) = \{ e \in \mathcal{A}(\omega): p \in Z_e(\omega) \}
\]
as the set of active residual events that cover this request.

Finally, we define the active residual potential as 
\[
    R^{act}(\omega) = \sum_{e \in \mathcal{A}(\omega)} H_{z_e(\omega)}.
\]

\paragraph{(2) A new potential $\Psi(\omega)$.} Let \(C(\omega)\) be the cache
configuration of \textsc{RPB-OM} under \(\omega\), and let \(C^*(\omega)\) be the cache configuration of \textsc{OM} under the same priority function $\pi$. The new potential of
\textsc{RPB-OM} is
\[
    \Psi(\omega)
    =
    R^{\mathrm{act}}(\omega)+D(\omega)+B(\omega),
\]
where $D(\omega)$ and $B(\omega)$ use the same definitions as in Section~\ref{sec:robustness_near_optimal}, representing the cache difference between \textsc{RPB-OM} and \textsc{OM}, and the remaining prediction budget, respectively.

\paragraph{(3) Potential change upon lazy requests.}

We first prove Lemma \ref{lemma:residual_progress_dominated_by_om} below, which shows that when serving a lazy adversary request, the change in $R^{act}$ is upper bounded by the corresponding change in \textsc{OM}'s potential $\phi$.

\begin{lemma}\label{lemma:residual_progress_dominated_by_om}
For a lazy adversary request to page \(p\) at work function \(\omega\), we have
\[
    R^{\mathrm{act}}(\omega^p)-R^{\mathrm{act}}(\omega)
    \le
    \phi(\omega^p)-\phi(\omega),
\]
where \(\phi\) is \textsc{OM}'s potential as defined in Definition \ref{def:om_potential_function}.

\end{lemma}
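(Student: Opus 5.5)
We need to show that the change of $R^{act}$ on a lazy adversary request to $p$ is at most $\phi(\omega^p)-\phi(\omega)$. By \eqref{appendix:eq_om_delta_phi}, the right-hand side equals $-\Pr[p\notin C^*(\omega)]$, where $C^*(\omega)$ is \textsc{OM}'s configuration.

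So it suffices to prove
\[
R^{act}(\omega)-R^{act}(\omega^p)\ \ge\ \Pr[p\notin C^*(\omega)].
\]
The left-hand side is a sum over events of $H_{z_e(\omega)}-H_{z_e(\omega^p)}$. Each term is nonnegative, because residual sets only shrink: pages are removed when accounted for and never added. So we may discard all events except those in $X_p(\omega)$ whenever that set is nonempty.

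For an event $e\in X_p(\omega)$, the request to $p$ removes $p$ from $Z_e$. Hence $z_e$ drops by at least one, and the decrease is at least $1/z_e(\omega)\ge 1/|Z_e(\omega)|$. Note that $p$ is unrevealed and gets accounted for, and pages leaving via an $L_1$ collapse only help.

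\textbf{Case $X_p(\omega)=\emptyset$.} We must show the right-hand side is then $0$, or otherwise bound it. This is where the covering statement enters: every miss caused by a lazy adversary request is covered by an active event. I would interpret the lemma as applying to requests on which the relevant miss occurs. Since the statement is about \textsc{OM}'s miss probability, I would instead argue directly that if $p$ is not in any active residual set, then $p$ was not evicted in a prediction-relevant way since the last $L_0$-miss. Combined with the structure of \textsc{OM}, this would force $\Pr[p\notin C^*]$ to be small. This is fragile, so my first attempt is different.

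\textbf{Preferred route: induction mirroring Lemma~\ref{lemma:om_potential}.} The plan is to compare potentials along lazy adversary strategies.
\begin{itemize}
\item[(i)] Corollary~\ref{appendix:corollary_om_same_cost_on_lazy_adversary_strategies} says $\phi(\omega)$ is the same expected \textsc{OM} cost for every lazy adversary strategy.
\item[(ii)] The residual sets of the event created at the latest $L_0$-miss are subsets of the unrevealed pages, of size at most $U(\omega)+1$.
\item[(iii)] Show that $\phi(\omega)-\phi(\omega^p)=\Pr[p\notin C^*(\omega)]\le 1/(\text{number of unrevealed pages still competing})$, using the uniform-priority symmetry of \textsc{OM} (Theorem~\ref{appendix:theorem_om_same_distribution_as_equitable}) among pages of the same layer structure.
\item[(iv)] Compare that quantity with $1/z_e(\omega)$, since the residual set is contained in the candidate set.
\end{itemize}

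\textbf{Main obstacle.} Step (iv) is the crux: I must relate the \textsc{OM} miss probability of $p$, which depends on the full layer structure, to $1/z_e$, which is a count of residual pages. The intended tool is the total miss-mass identity $M_{total}(\omega)=|N(\omega)|-U(\omega)$ together with the redistribution argument from the proof of Lemma~\ref{lemma:om_potential}. These should bound $\Pr[p\notin C^*(\omega)]$ by the harmonic decrement associated with shrinking the relevant set by one element.

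The fallback is to sum over a whole lazy adversary strategy. Over such a strategy, $R^{act}$ telescopes to at least $H_{z_e}$, while $\phi$ drops by exactly $\phi(\omega)$. Then use path-independence (Corollary~\ref{appendix:corollary_om_same_cost_on_lazy_adversary_strategies}) to localize the inequality back to a single request.
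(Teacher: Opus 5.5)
\textbf{Verdict: there is a genuine gap, and the statement itself appears to be false as written.}

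\textbf{What matches the paper.} Your reduction is the paper's. By \eqref{appendix:eq_om_delta_phi} the right-hand side equals $-\Pr[p\notin C^*(\omega)]$. Each $e\in X_p(\omega)$ loses $p$ and so contributes at most $-1/z_e(\omega)$, and every other event contributes at most $0$.

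\textbf{The gap.} What remains is $\Pr[p\notin C^*(\omega)]\le\sum_{e\in X_p(\omega)}1/z_e(\omega)$. You label this the crux but never prove it. The paper gets it in three moves: it asserts that every \textsc{OM} miss on a lazy adversary request is covered by an active event; it argues that, conditioned on $Z_e(\omega)$, the priority order inside $Z_e(\omega)$ is uniform, so $\Pr[p=\min_\pi Z_e(\omega)]=1/z_e(\omega)$; and it union-bounds over $X_p(\omega)$. None of your substitutes delivers this:
\begin{itemize}
\item Step (iii) is false as stated. If the only unrevealed layer is $L_1$ and it has three pages, \textsc{OM} misses each of them with probability $2/3$.
\item The miss-mass identity controls only the total over $N(\omega)$, not a single page.
\item An inequality summed over a whole lazy adversary strategy cannot be localized to a single request.
\end{itemize}

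\textbf{Why no route closes it.} Your unease about $X_p(\omega)=\emptyset$ is well founded: the inequality fails there.

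Take $k=3$, layers $L_1=\{a\}$, $L_2=\{b\}$, $L_3=\{c\}$, no active event (e.g.\ right after warm-up), and perfect predictions.
\begin{itemize}
\item A request to $d\in L_0$ yields layers $\{a\},\{b,c\},\{d\}$. \textsc{RPB-OM} evicts $c$, whose next request is farthest, so $Z_e=\{c\}$. \textsc{OM} evicts the lowest-priority page of $\{a,b,c\}$.
\item Now request $a$. The layers become $\{b,c\},\{d\},\{a\}$, and $c$ is neither requested nor removed.
\item Hence $\Delta R^{act}=0$, yet $\Delta\phi=-\Pr[a\notin C^*(\omega)]=-1/3$.
\end{itemize}

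Restricting to requests on which \textsc{RPB-OM} misses (the only case used downstream) does not help. From the same start:
\begin{itemize}
\item Request $x$, with \textsc{RPB-OM} evicting $b$ on a bad prediction although $b$ returns before $a$ and $c$. This gives $Z_1=\{a,b,c\}$.
\item Request $y$, with \textsc{RPB-OM} evicting $x$, which never returns. This gives $Z_2=\{x\}$.
\item Request $b$. Beforehand the layers are $\{a\},\{b,c,x\},\{y\}$, so \textsc{OM} holds the two highest-priority pages of $\{a,b,c,x\}$ and misses $b$ with probability $1/2$. \textsc{RPB-OM} misses, and no page leaves the support.
\item Hence $\Delta R^{act}=H_2-H_3=-1/3>-1/2=\Delta\phi$.
\end{itemize}

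The paper's covering-and-symmetry step conflates two things: the residual sets, which are defined by \textsc{RPB-OM}'s evictions, and \textsc{OM}'s independent random configuration. So as stated the lemma cannot be proved, and the statement, not only your proposal, needs repair.
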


\begin{proof}

We first bound the miss probability of \textsc{OM}.

Let $C^*(\omega)$ be the cache configuration of \textsc{OM} under $\omega$. Since \textsc{OM} maintains the same cache-configuration distribution as \textsc{Equitable}, the only randomness relevant to whether $p \in C^*(\omega)$ is the relative priority order among the pages in the residual sets.

For each active residual event $e$, the set $Z_e(\omega)$ is determined by
the event that created $e$ and by the subsequent work-function evolution. That is to say, its update is independent of the algorithm. Hence, conditioned on $Z_e(\omega)$, the relative priority order of pages inside $Z_e(\omega)$ is still uniform.

The residual event itself makes one page in $Z_e(\omega)$ absent from the cache. By the priority symmetry of \textsc{OM}, for every \(e\in X_p(\omega)\),
\[
    \Pr[p=\min_\pi(Z_e(\omega))]
    =
    \frac{1}{z_e(\omega)},
\]
where \(\min\limits_\pi(Z_e(\omega))\) denotes the page with the minimal priority assigned by the \textsc{OM} in \(Z_e(\omega)\).

Each miss caused by a lazy adversary request is covered by some active residual events whose residual set contains the requested page. Therefore, the miss probability of \textsc{OM} on $p$ is
\begin{align} \label{appendix:eq_x_p_bound}
    x_p(\omega) \le \sum_{e \in X_p(\omega)}\Pr[p=\min_\pi(Z_e(\omega))]
    =
    \sum_{e\in X_p(\omega)}
    \frac{1}{z_e(\omega)}.
\end{align}

Now consider the change of \(R^{\mathrm{act}}\).

Moreover, if $e\in X_p(\omega)$, then $p \in Z_e(\omega)$. After serving the
request to $p$, the page $p$ is revealed and is removed from every such
residual candidate set. Therefore,
\[
    z_e(\omega^p)\le z_e(\omega)-1.
\]
Hence,
\[
\begin{aligned}
    H_{z_e(\omega^p)}-H_{z_e(\omega)}
    \le
    H_{z_e(\omega)-1}-H_{z_e(\omega)}
    =
    -\frac{1}{z_e(\omega)}.
\end{aligned}
\]

Since no new residual event is created by the current request, and all other active events contribute at most zero to the change, summing over all active events gives
\[
\begin{aligned}
    R^{\mathrm{act}}(\omega^p)-R^{\mathrm{act}}(\omega)
    &\le \sum_{e \in X_p(\omega)}H_{z_e(\omega^p)} - H_{z_e(\omega)} \\
    &\le
    -\sum_{e\in X_p(\omega)}
    \frac{1}{z_e(\omega)}  \\
    &\le
    -x_p(\omega) \quad \text{by \eqref{appendix:eq_x_p_bound}.}
\end{aligned}
\]

Finally, by $\phi(\omega^p)-\phi(\omega) = -x_p(\omega)$, we have
\[
    R^{\mathrm{act}}(\omega^p)-R^{\mathrm{act}}(\omega)
    \le
    \phi(\omega^p)-\phi(\omega).
\]
This completes the proof.
\end{proof}

\begin{lemma} \label{lemma:rpb_om_lazy_request_potential_change}
For a lazy request to \(p\in S(\omega)\) when the work function is $\omega$, we have
\begin{equation}\label{appendix:eq_rpb-om_smo_lazy_req}
    \Delta cost(\textsc{RPB-OM}) + \Delta\Psi = \Delta cost(\textsc{RPB-OM}) + \Psi(\omega^p) - \Psi(\omega) \le 0.
\end{equation}
\end{lemma}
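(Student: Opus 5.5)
The plan is to check the inequality request by request, splitting on the type of lazy request and on the branch \textsc{RPB-OM} takes. Each of the three components of $\Psi=R^{\mathrm{act}}+D+B$ will be bounded separately. The tools are Lemma~\ref{lemma:residual_progress_dominated_by_om}, which gives $\Delta R^{\mathrm{act}}\le\Delta\phi=-x$ where $x=x_p(\omega)$ is \textsc{OM}'s miss probability on $p$, and Lemma~\ref{lemma:onopt_miss_D_change}, which gives $\Delta D\le -1+x$ or $\Delta D\le x$ depending on the eviction rule.

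\textbf{Easy cases.}
\begin{itemize}
\item \emph{Request to a revealed page $p\in R(\omega)$.} Both algorithms hit, the layers and the residual sets are unchanged, and no budget line is executed. Hence every term is zero.
\item \emph{Lazy adversary request, \textsc{RPB-OM} hits.} We have $\Delta B=0$ and $\Delta\mathrm{cost}=0$. Also $\Delta R^{\mathrm{act}}\le -x\le 0$. If \textsc{OM} misses, it loads $p$, which \textsc{RPB-OM} already holds, and evicts one page; so $\Delta D\le 0$.
\item \emph{\textsc{RPB-OM} misses, gate fails, $B=0$.} It evicts by \textsc{OM}'s priority. Then $1+(-x)+(-1+x)+0=0$.
\item \emph{\textsc{RPB-OM} misses, gate fails, $B>0$.} It evicts by prediction. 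Then $1-x+x-1=0$.
\end{itemize}
These cases mirror the lazy-request cases of Theorem~\ref{theorem_rpb_om_robustness}, with $R^{\mathrm{act}}$ in place of $\phi$.

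\textbf{Gate-pass miss: the main obstacle.} Here the budget is incremented and then spent immediately, so $\Delta B=0$. The eviction is prediction-driven, so only $\Delta D\le x$ is available. The crude bound is therefore $1-x+x=1$, which is off by one for a per-request statement. The missing unit cannot come from $B$ or $D$, so it must come from $R^{\mathrm{act}}$. I will exploit the slack in Lemma~\ref{lemma:residual_progress_dominated_by_om}, whose proof actually gives
\[
\Delta R^{\mathrm{act}}\le-\sum_{e\in X_p(\omega)}\bigl(H_{z_e(\omega)}-H_{z_e(\omega^p)}\bigr).
\]
Two facts should make this large.

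\begin{itemize}
\item \emph{Covering.} \textsc{RPB-OM} misses on the lazy adversary request $p$, so $p$ was evicted at some earlier miss. I will argue that $p$ lies in an active residual set $Z_e(\omega)$, either as the evicted page $q_e$ or as an inverted page of $e$.
\item \emph{Gate-pass condition.} The condition $U(\omega)\le (Y+2)/e-2$ says that since \textsc{RPB-OM}'s previous miss at least $\ln\frac{Y+2}{U(\omega)+2}\ge1$ harmonic mass of unrevealed layers has been cleared, all by requests on which \textsc{RPB-OM} hit.
\end{itemize}

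\textbf{Step one: transfer the harmonic mass.} I will try to show that the pages accounted for during that stretch, namely requested pages and pages removed from unrevealed layers by lazy-layer updates, belong to the same residual event $e$ covering $p$. The idea is to make the residual accounting lazy: rather than removing an accounted page from $Z_e$ at once, the removal is deferred to the next \textsc{RPB-OM} miss covered by $e$. This is a modified bookkeeping of $R^{\mathrm{act}}$ that leaves it unchanged during hits. The harmonic drop $H_{z_e}-H_{z_e'}$ realized at the current request then telescopes to at least $\ln\frac{Y+2}{U(\omega)+2}\ge 1$, by the same integral estimate used in Theorem~\ref{theorem_rpb_om_robustness}.

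\textbf{Step two: finish the gate-pass case.} Combining this drop with the $-x$ guaranteed at the current request (via the \textsc{OM} miss-probability bound \eqref{appendix:eq_x_p_bound}) gives $\Delta R^{\mathrm{act}}\le -1-x$. Therefore
\[
\Delta\mathrm{cost}+\Delta\Psi\le 1+(-1-x)+x+0=0.
\]

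\textbf{Hardest point and fallback.} The hard part is verifying that deferring removals keeps Lemma~\ref{lemma:residual_progress_dominated_by_om} valid at hit steps. At hits the deferred $R^{\mathrm{act}}$ does not decrease, whereas the lemma's argument used $\Delta R^{\mathrm{act}}\le -x$. If this fails, the fallback is to keep the original accounting and instead strengthen the covering argument: show that at a gate-pass miss, $p$ is covered by at least two active events, or by one event with $z_e(\omega)=1$. In either situation the drop $\sum_{e\in X_p(\omega)}1/z_e(\omega)$ is at least $1+x$, which closes the inequality.
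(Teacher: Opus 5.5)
Your revealed-page case, hit case and both gate-fail cases match the paper. The gap is the gate-pass miss. You try to make that single request pay for itself, and neither of your routes can do this.

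The deferred bookkeeping changes $R^{\mathrm{act}}$, hence $\Psi$. So it would prove a statement about a different potential, and Lemma~\ref{lemma:rpb_om_l0_prediction_potential_increase} would need rechecking. More seriously, Step one claims that the pages accounted for during the hit stretch belong to the event covering $p$. This is unsupported and typically false. Residual sets are frozen at $L_0$-misses. For example, when $p=q_e$, every page of $V_e$ requested since event $e$ has true next arrival before $q_e$'s, so it is excluded from $Z_e$ by definition. Even if the claim held, the telescoped drop $H_{z}-H_{z-m}$ (with $z=z_e$, $m=Y-U(\omega)$) is not bounded below by $\ln\frac{Y+2}{U(\omega)+2}$. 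The reason is that $z_e$ is not tied to $Y$, since unrevealed layers can hold many pages.

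The fallback's arithmetic also fails. Two covering events can give $\sum_e 1/z_e(\omega)$ as small as $2/k$. One event with $z_e(\omega)=1$ gives exactly $1<1+x$, because $x\ge 1/(U(\omega)+1)>0$ by Lemma~\ref{lemma:om_potential}. With only the per-request bounds $\Delta R^{\mathrm{act}}\le-\sum_{e\in X_p(\omega)}1/z_e(\omega)$, $\Delta D\le x$ and $\Delta B=0$, consider a gate-pass miss where $p$ is covered by a single event of size $z_e(\omega)\ge 2$ and the predicted eviction does not lower $D$. This leaves a positive residue, and the gate condition does not exclude it.

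The missing idea is amortization against the preceding hits. The paper does this by rerunning the case analysis of Theorem~\ref{theorem_rpb_om_robustness} with $\phi$ replaced by $R^{\mathrm{act}}$. Lemma~\ref{lemma:residual_progress_dominated_by_om} applies at every lazy adversary request, hit or miss. Combined with Lemma~\ref{lemma:om_potential}, each hit since \textsc{RPB-OM}'s previous miss satisfies $\Delta R^{\mathrm{act}}\le\Delta\phi\le-1/(u+1)$, where $u$ is the current number of unrevealed layers. In your hit case you derived $\Delta R^{\mathrm{act}}\le -x$ but then used only $\le 0$, discarding exactly this slack.

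These hits are the lazy adversary requests that lowered $u$ from $Y$ to $U(\omega)$. So over the stretch, with $\Delta D'\le 0$ and $\Delta B'=0$, you get $\Delta\mathrm{cost}'+\Delta\Psi'\le-\sum_{i=U(\omega)+2}^{Y+1}1/i\le-\ln\frac{Y+2}{U(\omega)+2}\le-1$. The gate-pass miss itself contributes at most $1-x+x+0=1$. Hence the block consisting of the stretch and the current miss satisfies the inequality. These blocks are disjoint, which is all the summation in Theorem~\ref{theorem_rpb_om_smoothness} needs.

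So in the gate-pass case, establish the inequality for that block rather than literally for the single request. This is what the paper's argument actually proves. No modification of $R^{\mathrm{act}}$ and no extra covering property is needed.
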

\begin{proof}
If the lazy request is to a revealed page, both \textsc{RPB-OM} and \textsc{OM} hit, and no unrevealed page is revealed. Hence, $\Delta R^{act} = \Delta D = \Delta B = 0$, and \eqref{appendix:eq_rpb-om_smo_lazy_req} holds.

Otherwise, the lazy request is to an unrevealed page, i.e., a lazy adversary request. If \textsc{RPB-OM} hits, then $\Delta R^{act} \leq 0$, $\Delta D \leq 0$ and $\Delta B = 0$, so \eqref{appendix:eq_rpb-om_smo_lazy_req} holds. For the case where \textsc{RPB-OM} misses, we bound $\Delta R^{act}$. By Lemma \ref{lemma:residual_progress_dominated_by_om}, we have
\[
    R^{\mathrm{act}}(\omega^p)-R^{\mathrm{act}}(\omega)
    \le
    \phi(\omega^p)-\phi(\omega).
\]
Therefore, $\Delta R^{act} \leq \Delta \phi$. Using the same argument as in the proof of Theorem \ref{theorem_rpb_om_robustness}, with $\Delta \phi$ replaced by $\Delta R^{act}$, we obtain
\begin{align}
    \Delta cost(\textsc{RPB-OM}) + \Delta\Psi \le 0.
\end{align}
This completes the proof.
\end{proof}

\paragraph{(4) Potential change upon $L_0$-misses.}

\begin{lemma} \label{lemma:rpb_om_l0_prediction_potential_increase}
Consider a request to \(p\in L_0\) when the work function is $\omega$. We have
\[
    \Delta\Psi = \Psi(\omega^p)-\Psi(\omega)\le H_{\mathrm{inv}}+O(1).
\]
\end{lemma}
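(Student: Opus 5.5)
The bound is proved by splitting $\Delta\Psi$ into its three components, $\Delta R^{\mathrm{act}}$, $\Delta D$ and $\Delta B$, and bounding each separately. Here $\mathrm{inv}$ denotes $\mathrm{inv}_e$ for the event $e$ created by the current $L_0$-miss.

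\emph{Budget term.} At an $L_0$-miss, \textsc{RPB-OM} resets the budget to $\tau$, so $\Delta B \le \tau = O(1)$.

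\emph{Difference term.} Both \textsc{RPB-OM} and \textsc{OM} miss on $p \in L_0$. Each inserts $p$ and removes exactly one page, so $|C\setminus C^*|$ grows by at most one, exactly as in the $L_0$ case of Theorem~\ref{theorem_rpb_om_robustness}. Hence $\Delta D \le 1$ pointwise for each fixed priority realization, and also in expectation.

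\emph{Residual term: the new event.} The request creates exactly one new event $e$. Its initial residual set is $Z_e(\omega^p) = \{q_e\}\cup\{x : \ell_e(q_e)<\ell_e(x)\}$, so $z_e(\omega^p) = 1+\mathrm{inv}_e$. It therefore contributes $H_{1+\mathrm{inv}} \le H_{\mathrm{inv}}+1$ to $R^{\mathrm{act}}$.

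\emph{Residual term: existing events.} It remains to check that the events already active do not increase their contribution. Under the update rule \eqref{eq_update_rule}, a request to $L_0$ only merges $L_{k-1}\cup L_k$ and appends $\{p\}$; no layer is split. By definition, a residual set $Z_{e'}$ only loses pages, when they are requested or leave the unrevealed layers, and never gains any. So $z_{e'}$ is nonincreasing and $H_{z_{e'}}$ cannot grow.

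This step is the main obstacle. Revealed pages in $L_k$ may become unrevealed after the merge, but residual sets are fixed at creation and shrink monotonically, so re-unrevealed pages are not re-added. I would state this explicitly from the definition of "accounted for". If a page is removed, the corresponding harmonic term decreases, which only helps. Events that become inactive drop out with nonnegative prior contribution, which is also harmless.

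\emph{Conclusion.} Summing the three pieces gives
\[
\Delta\Psi \le H_{\mathrm{inv}} + 1 + 1 + \tau = H_{\mathrm{inv}} + O(1).
\]
The $H_k$ term from Corollary~\ref{corollary:om_l0_potential_change} does not appear, because $\phi$ has been replaced by $R^{\mathrm{act}}$.
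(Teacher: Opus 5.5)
Your proposal is correct and follows the paper's proof essentially step for step. It uses the same decomposition $\Delta\Psi=\Delta R^{\mathrm{act}}+\Delta D+\Delta B$, with old events nonincreasing because residual sets only shrink, the new event contributing $H_{\mathrm{inv}+1}\le H_{\mathrm{inv}}+1$, $\Delta D\le 1$ since both algorithms miss, and $\Delta B\le\tau$ from the reset, for a total of $H_{\mathrm{inv}}+2+\tau$; the only cosmetic difference is that you read $z_e(\omega^p)=\mathrm{inv}+1$ directly off the definition of the initial residual set, while the paper adds an informal justification via the true next-arrival times of the non-inverted candidates.
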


\begin{proof}
We decompose $\Delta \Psi = \Delta R^{act} + \Delta D + \Delta B$.

First, we bound \(\Delta R^{act}\). We split it into the
contribution of old events and the contribution of the event newly created by the current prediction-driven eviction. 

Consider an event \(e'\) created before the request. By
definition, the residual set \(Z_{e'}(\omega)\) consists only of pages from the eviction candidate set at the time \(e'\) was created. The current request is to \(p\in L_0\), and the
\(L_0\)-miss update does not add any new page to the original eviction candidate set of \(e'\). Moreover, a page can only be removed from \(Z_{e'}\) when it is \textit{accounted for}, that is, when it is requested or when it leaves the unrevealed layers due to lazy-layer updates. Hence,
\[
    Z_{e'}(\omega^p)\subseteq Z_{e'}(\omega) \quad \text{and} \quad z_{e'}(\omega^p)\le z_{e'}(\omega).
\]
Therefore, the total contribution of old events to $\Delta R^{act}$ is at most $0$.

Now consider the newly created event. The current prediction-driven eviction evicts $q$ from $V = C(\omega)$, and let
\[
I(\omega) = \{x\in V\setminus\{q\}:\ell(q)<\ell(x)\},
\]
be the set of pages with larger next-arrival times than  $q$. Then, \(|I(\omega)|\) equals the local inversion $\mathrm{inv}$.

All pages in \(V\setminus(\{q\}\cup I(\omega))\) have true next-arrival time earlier
than \(q\). Thus, before \(q\) returns, these pages will have either been requested or removed from the unrevealed layers. Consequently, after the first return of \(q\), the pages in the original eviction candidate set $V$ that can still remain
unaccounted are contained in \(I(\omega)\). Hence, the initial residual scale of the new event is
\[
    z(\omega^p)=|Z(\omega^p)|=|\{q\}\cup I(\omega)|= \mathrm{inv}+1.
\]
Therefore the potential added by the new event is $H_{z(\omega^p)} = H_{\mathrm{inv}+1}.$ Combining the fact that old events do not increase this potential, we obtain
\[
    \Delta R^{act}
    =
    R^{\mathrm{act}}(\omega^p)-R^{\mathrm{act}}(\omega)
    \le
    H_{\mathrm{inv}+1}.
\]

Second, since both \textsc{RPB-OM} and \textsc{OM} miss upon the request to $p \in L_0$, $\Delta D\le 1$. Moreover, $\Delta B \le \tau$ in this case (since $B$ is reset to $\tau$).

Combining these inequalities gives
\begin{align}
    \Delta \Psi \nonumber
    &=
    \Delta R^{act}+\Delta D+\Delta B \nonumber \\
    &\le
    H_{\mathrm{inv}+1}+1+\tau \nonumber \\
    &\le
    H_{\mathrm{inv}}+2+\tau \nonumber \\
    &=
    H_{\mathrm{inv}}+O(1), \nonumber
\end{align}
where $H_0 = 0$. This completes the proof.
\end{proof}

\paragraph{(5) The proof of smoothness.} ~\\
\textbf{Theorem \ref{theorem_rpb_om_smoothness}.} \textit{\textsc{RPB-OM} is $O(1 + \log(\eta_1/cost(\textsc{OPT})))$-smooth, where $\eta_1$ denotes the total $\ell_1$ error in the predicted next-arrival times.}

\begin{proof}
Let \(\mathcal E\) be the set of events created upon \(L_0\)-misses. We first derive a global amortized bound.

By Lemma~\ref{lemma:rpb_om_l0_prediction_potential_increase}, each event \(e\in\mathcal E\) increases the potential by at most
\[
    H_{\mathrm{inv}_e}+O(1).
\]
The actual miss cost of this request contributes an additional \(1\), which is
absorbed into the \(O(1)\) term.

By Lemma~\ref{lemma:rpb_om_lazy_request_potential_change}, whenever
\textsc{RPB-OM} misses on a lazy adversary request, the expected potential
decreases by at least \(1\). Hence the miss cost on such requests is paid by the
potential decrease. If \textsc{RPB-OM} hits, the request incurs no cost and the
potential does not increase. Therefore, the lazy
adversary requests contribute no positive amortized cost.

Since the potential is nonnegative and initially zero, summing up gives
\[
     cost(\textsc{RPB-OM})
    \le
    O(|\mathcal E|)
    +
    \sum_{e\in\mathcal E} H_{\mathrm{inv}_e}.
\]
Moreover, we have
\[
    cost(\textsc{OPT}) \geq |\mathcal E|.
\]
Thus
\[
    cost(\textsc{RPB-OM})
    \le
    O(1) \cdot cost(\textsc{OPT})
    +
    \sum_{e\in\mathcal E} H_{\mathrm{inv}_e}.
\]

By Lemma 4.1 of \citet{rohatgi2020near}, the total inversions created by prediction-driven \(L_0\)-misses satisfy
\[
    \sum_{e\in\mathcal E}\mathrm{inv}_e\le 2\eta_1.
\]

Finally, by the concavity of $H_x$ and Jensen's inequality, we have
\[
    \frac{cost(\textsc{RPB-OM})}{cost(\textsc{OPT})} \leq O\Big(1 + \log \Big(\frac{\eta_1}{cost(\textsc{OPT} )}\Big)\Big),
\]
\end{proof}

\section{Additional Experimental Results} \label{appendix_additional_results}

For brevity, we use the following abbreviations: BO for BlindOracle, BO\&L for BlindOracle\&LRU, OM for OnlineMin, LM for LMarker, PM for PredictiveMarker, OOM for OnOPT-OM, and ROM-$x$ for \textsc{RPB-OM} with $\tau=x$. We highlight in bold the best performance among all algorithms (excluding \textsc{OPT}) for each trace.

\subsection{Detailed Results with the PLECO and POPU Predictors}
Table \ref{tab:pleco-hitrate} and Table \ref{tab:pleco-costratio} report per-trace results on SPEC CPU 2006 with the PLECO predictor. We also evaluate \textsc{RPB-OM} at additional values of $\tau$. Experimental results show that \textsc{RPB-OM} achieves the best average performance while remaining robust when \textsc{BlindOracle} performs worse than classical heuristics. 

\begin{table}[H]
\centering
\caption{Hit Ratio (\%) of algorithms with the PLECO predictor across SPEC CPU 2006 datasets. We highlight in bold the best performance among all algorithms (excluding OPT) for each trace.}
\label{tab:pleco-hitrate}
\resizebox{\textwidth}{!}{%
\begin{tabular}{lrrrrrrrrrrrrr|rr}
\toprule
Algorithm & \textbf{astar} & \textbf{bwaves} & \textbf{bzip} & \textbf{cactus} & \textbf{gems} & \textbf{lbm} & \textbf{les3d} & \textbf{libq} & \textbf{mcf} & \textbf{milc} & \textbf{omnet} & \textbf{sph3} & \textbf{xalanc} & Mean & Std \\
\midrule
OPT & 37.39 & 4.90 & 80.81 & 33.69 & 12.21 & 24.83 & 30.88 & 5.30 & 44.56 & 1.38 & 42.43 & 74.73 & 56.89 & 34.62 & 25.60 \\
LRU & 4.01 & 0.00 & \textbf{63.81} & 0.00 & 2.88 & 0.00 & 9.49 & 0.00 & 27.09 & 0.01 & \textbf{20.44} & 12.74 & \textbf{45.08} & 14.27 & 20.17 \\
Marker & 4.82 & 0.00 & 63.57 & 1.19 & 4.04 & 0.02 & 9.41 & 0.00 & 25.64 & 0.02 & 20.32 & 42.25 & 43.31 & 16.51 & 21.13 \\
OM & 8.68 & 0.02 & 61.15 & 6.11 & 3.90 & 1.83 & 9.66 & 0.00 & 19.07 & 0.01 & 15.84 & 48.72 & 36.90 & 16.30 & 20.12 \\
BO & \textbf{23.16} & 0.07 & 51.54 & 3.87 & 0.35 & 0.47 & 6.95 & \textbf{5.30} & 9.09 & \textbf{1.07} & 9.84 & 66.44 & 28.75 & 15.92 & 21.24 \\
BO\&L & \textbf{23.16} & 0.07 & 63.29 & 3.87 & 2.72 & 0.47 & 9.91 & \textbf{5.30} & 27.08 & \textbf{1.07} & 19.92 & 66.42 & 43.56 & 20.53 & 23.55 \\
PM & 5.52 & 0.00 & 62.01 & 2.99 & \textbf{4.69} & 0.12 & 9.49 & 0.00 & 26.00 & 0.02 & 20.23 & 62.19 & 42.07 & 18.10 & 23.20 \\
LM & 5.54 & 0.00 & 62.22 & 2.99 & \textbf{4.69} & 0.12 & 9.45 & 0.00 & 26.01 & 0.02 & 20.32 & 61.61 & 42.48 & 18.11 & 23.18 \\
F\&R & 5.99 & \textbf{0.17} & 61.35 & 3.03 & 4.17 & 0.63 & 10.54 & \textbf{5.30} & 25.28 & 0.71 & 19.99 & 52.72 & 43.06 & 17.92 & 21.34 \\
OOM & 15.29 & 0.12 & 62.13 & 15.41 & 1.22 & 10.84 & 16.77 & \textbf{5.30} & 35.35 & 0.75 & 15.80 & 71.85 & 40.58 & 22.42 & 23.34 \\
\midrule
ROM-$1$ & 16.71 & 0.12 & 61.58 & 17.27 & 1.21 & 10.84 & \textbf{16.82} & \textbf{5.30} & 37.27 & 0.77 & 15.84 & 72.19 & 39.77 & \textbf{22.75} & 23.29 \\
ROM-$2$ & 17.07 & 0.12 & 61.11 & 17.43 & 1.23 & 10.83 & 16.81 & \textbf{5.30} & 37.39 & 0.77 & 15.85 & \textbf{72.24} & 38.99 & 22.70 & 23.18 \\
ROM-$4$ & 17.19 & 0.12 & 60.77 & \textbf{17.46} & 1.22 & 10.83 & 16.81 & \textbf{5.30} & \textbf{37.40} & 0.77 & 15.78 & \textbf{72.24} & 38.40 & 22.64 & 23.10 \\
ROM-$8$ & 17.23 & 0.12 & 60.75 & \textbf{17.46} & 1.22 & 10.83 & 16.81 & \textbf{5.30} & \textbf{37.40} & 0.77 & 15.78 & \textbf{72.24} & 38.18 & 22.62 & 23.08 \\
ROM-$16$ & 17.23 & 0.12 & 60.75 & \textbf{17.46} & 1.22 & 10.83 & 16.81 & \textbf{5.30} & \textbf{37.40} & 0.77 & 15.78 & \textbf{72.24} & 38.18 & 22.62 & 23.08 \\
\bottomrule
\end{tabular}}
\end{table}

\begin{table}[H]
\centering
\caption{Cost Ratio of all algorithms with the PLECO predictor across SPEC CPU 2006 datasets. We highlight in bold the best performance among all algorithms (excluding OPT) for each trace.}
\label{tab:pleco-costratio}
\resizebox{\textwidth}{!}{%
\begin{tabular}{lrrrrrrrrrrrrr|rr}
\toprule
Algorithm & \textbf{astar} & \textbf{bwaves} & \textbf{bzip} & \textbf{cactus} & \textbf{gems} & \textbf{lbm} & \textbf{les3d} & \textbf{libq} & \textbf{mcf} & \textbf{milc} & \textbf{omnet} & \textbf{sph3} & \textbf{xalanc} & Mean & Std \\
\midrule
OPT & 1.000 & 1.000 & 1.000 & 1.000 & 1.000 & 1.000 & 1.000 & 1.000 & 1.000 & 1.000 & 1.000 & 1.000 & 1.000 & 1.000 & 0.000 \\
LRU & 1.533 & 1.051 & \textbf{1.886} & 1.508 & 1.106 & 1.330 & 1.309 & 1.056 & 1.315 & 1.014 & \textbf{1.382} & 3.453 & \textbf{1.274} & 1.478 & 0.640 \\
Marker & 1.520 & 1.051 & 1.899 & 1.490 & 1.093 & 1.330 & 1.311 & 1.056 & 1.341 & 1.014 & 1.384 & 2.286 & 1.315 & 1.392 & 0.361 \\
OM & 1.459 & 1.051 & 2.025 & 1.416 & 1.095 & 1.306 & 1.307 & 1.056 & 1.460 & 1.014 & 1.462 & 2.029 & 1.464 & 1.396 & 0.329 \\
BO & \textbf{1.227} & 1.051 & 2.525 & 1.450 & 1.135 & 1.324 & 1.346 & \textbf{1.000} & 1.640 & \textbf{1.003} & 1.566 & 1.328 & 1.653 & 1.404 & 0.405 \\
BO\&L & \textbf{1.227} & 1.051 & 1.913 & 1.450 & 1.108 & 1.324 & 1.303 & \textbf{1.000} & 1.315 & \textbf{1.003} & 1.391 & 1.329 & 1.309 & 1.286 & 0.239 \\
PM & 1.509 & 1.051 & 1.980 & 1.463 & \textbf{1.086} & 1.329 & 1.309 & 1.056 & 1.335 & 1.014 & 1.386 & 1.496 & 1.344 & 1.335 & 0.261 \\
LM & 1.509 & 1.051 & 1.969 & 1.463 & \textbf{1.086} & 1.329 & 1.310 & 1.056 & 1.335 & 1.014 & 1.384 & 1.519 & 1.334 & 1.335 & 0.260 \\
F\&R & 1.502 & \textbf{1.050} & 2.014 & 1.462 & 1.092 & 1.322 & 1.294 & \textbf{1.000} & 1.348 & 1.007 & 1.390 & 1.871 & 1.321 & 1.359 & 0.310 \\
OOM & 1.353 & \textbf{1.050} & 1.974 & 1.276 & 1.125 & \textbf{1.186} & 1.204 & \textbf{1.000} & 1.166 & 1.006 & 1.462 & 1.114 & 1.378 & 1.253 & 0.259 \\
\midrule
ROM-$1$ & 1.330 & \textbf{1.050} & 2.002 & 1.248 & 1.125 & \textbf{1.186} & \textbf{1.203} & \textbf{1.000} & 1.131 & 1.006 & 1.462 & 1.101 & 1.397 & \textbf{1.249} & 0.268 \\
ROM-$2$ & 1.325 & \textbf{1.050} & 2.027 & \textbf{1.245} & 1.125 & \textbf{1.186} & 1.204 & \textbf{1.000} & \textbf{1.129} & 1.006 & 1.462 & \textbf{1.099} & 1.415 & 1.252 & 0.274 \\
ROM-$4$ & 1.323 & \textbf{1.050} & 2.045 & \textbf{1.245} & 1.125 & \textbf{1.186} & \textbf{1.203} & \textbf{1.000} & \textbf{1.129} & 1.006 & 1.463 & \textbf{1.099} & 1.429 & 1.254 & 0.279 \\
ROM-$8$ & 1.322 & \textbf{1.050} & 2.045 & \textbf{1.245} & 1.125 & \textbf{1.186} & \textbf{1.203} & \textbf{1.000} & \textbf{1.129} & 1.006 & 1.463 & \textbf{1.099} & 1.434 & 1.254 & 0.280 \\
ROM-$16$ & 1.322 & \textbf{1.050} & 2.045 & \textbf{1.245} & 1.125 & \textbf{1.186} & \textbf{1.203} & \textbf{1.000} & \textbf{1.129} & 1.006 & 1.463 & \textbf{1.099} & 1.434 & 1.254 & 0.280 \\
\bottomrule
\end{tabular}}
\end{table}

Similarly, Table~\ref{tab:popu-hitrate} and Table \ref{tab:popu-costratio} report per-trace results on the SPEC CPU 2006 benchmark using the \textsc{POPU} predictor.

\begin{table}[H]
\centering
\caption{Hit Ratio (\%) of algorithms with the POPU predictor across SPEC CPU 2006 datasets. We highlight in bold the best performance among all algorithms (excluding OPT) for each trace.}
\label{tab:popu-hitrate}
\resizebox{\textwidth}{!}{%
\begin{tabular}{lrrrrrrrrrrrrr|rr}
\toprule
Algorithm & \textbf{astar} & \textbf{bwaves} & \textbf{bzip} & \textbf{cactus} & \textbf{gems} & \textbf{lbm} & \textbf{les3d} & \textbf{libq} & \textbf{mcf} & \textbf{milc} & \textbf{omnet} & \textbf{sph3} & \textbf{xalanc} & Mean & Std \\
\midrule
OPT & 37.39 & 4.90 & 80.81 & 33.69 & 12.21 & 24.83 & 30.88 & 5.30 & 44.56 & 1.38 & 42.43 & 74.73 & 56.89 & 34.62 & 25.60 \\
LRU & 4.01 & 0.00 & 63.81 & 0.00 & 2.88 & 0.00 & 9.49 & 0.00 & 27.09 & 0.01 & 20.44 & 12.74 & \textbf{45.08} & 14.27 & 20.17 \\
Marker & 4.72 & 0.00 & 63.25 & 1.23 & 4.06 & 0.03 & 9.38 & 0.00 & 25.67 & 0.02 & 20.38 & 42.23 & 43.34 & 16.49 & 21.08 \\
OM & 8.76 & 0.02 & 60.41 & 5.83 & 3.80 & 1.83 & 9.78 & 0.00 & 19.38 & 0.01 & 15.77 & 48.96 & 36.96 & 16.27 & 20.04 \\
BO & \textbf{33.18} & 0.11 & 63.24 & 13.98 & 0.38 & 0.47 & 10.84 & \textbf{5.30} & 29.75 & \textbf{1.07} & 16.81 & \textbf{71.96} & 35.61 & 21.75 & 23.99 \\
BO\&L & \textbf{33.18} & 0.11 & 65.25 & 13.98 & 2.73 & 0.47 & 10.42 & \textbf{5.30} & 32.11 & \textbf{1.07} & 19.74 & 71.95 & 43.97 & 23.10 & 24.65 \\
PM & 6.81 & 0.00 & 65.14 & 3.38 & 4.14 & 0.09 & 11.02 & 0.00 & 31.46 & 0.02 & \textbf{22.19} & 59.50 & 45.00 & 19.13 & 23.68 \\
LM & 6.83 & 0.00 & 64.99 & 3.38 & 4.14 & 0.09 & 11.02 & 0.00 & 31.45 & 0.02 & 22.09 & 57.26 & 44.88 & 18.93 & 23.33 \\
F\&R & 10.19 & 0.18 & 63.49 & 4.28 & \textbf{4.16} & 0.63 & 11.78 & 1.32 & 30.87 & 0.71 & 20.95 & 58.43 & 43.72 & 19.29 & 22.72 \\
OOM & 19.91 & \textbf{2.40} & 65.43 & 25.12 & 0.93 & 3.44 & \textbf{17.95} & 3.76 & 43.75 & 0.69 & 20.10 & 61.39 & 44.03 & 23.76 & 23.00 \\
\midrule
ROM-$1$ & 20.25 & \textbf{2.40} & 65.72 & 27.27 & 0.96 & 3.36 & 17.64 & 4.30 & \textbf{43.77} & 0.68 & 20.61 & 65.62 & 43.83 & 24.34 & 23.61 \\
ROM-$2$ & 20.23 & \textbf{2.40} & 65.70 & \textbf{27.39} & 0.96 & 3.27 & 17.59 & 4.57 & \textbf{43.77} & 0.68 & 20.65 & 67.31 & 43.83 & 24.49 & 23.84 \\
ROM-$4$ & 20.23 & \textbf{2.40} & 65.82 & \textbf{27.39} & 0.97 & 3.24 & 17.58 & 4.77 & \textbf{43.77} & 0.68 & 20.68 & 67.95 & 43.61 & \textbf{24.55} & 23.93 \\
ROM-$8$ & 20.23 & \textbf{2.40} & \textbf{65.85} & \textbf{27.39} & 0.97 & 3.24 & 17.58 & 4.77 & \textbf{43.77} & 0.68 & 20.67 & 68.02 & 43.59 & \textbf{24.55} & 23.94 \\
ROM-$16$ & 20.23 & \textbf{2.40} & \textbf{65.85} & \textbf{27.39} & 0.97 & 3.24 & 17.58 & 4.77 & \textbf{43.77} & 0.68 & 20.66 & 68.02 & 43.60 & \textbf{24.55} & 23.94 \\
\bottomrule
\end{tabular}}
\end{table}

\begin{table}[H]
\centering
\caption{Cost Ratio of algorithms with the POPU predictor across SPEC CPU 2006 datasets. We highlight in bold the best performance among all algorithms (excluding OPT) for each trace.}
\label{tab:popu-costratio}
\resizebox{\textwidth}{!}{%
\begin{tabular}{lrrrrrrrrrrrrr|rr}
\toprule
Algorithm & \textbf{astar} & \textbf{bwaves} & \textbf{bzip} & \textbf{cactus} & \textbf{gems} & \textbf{lbm} & \textbf{les3d} & \textbf{libq} & \textbf{mcf} & \textbf{milc} & \textbf{omnet} & \textbf{sph3} & \textbf{xalanc} & Mean & Std \\
\midrule
OPT & 1.000 & 1.000 & 1.000 & 1.000 & 1.000 & 1.000 & 1.000 & 1.000 & 1.000 & 1.000 & 1.000 & 1.000 & 1.000 & 1.000 & 0.000 \\
LRU & 1.533 & 1.051 & 1.886 & 1.508 & 1.106 & 1.330 & 1.309 & 1.056 & 1.315 & 1.014 & 1.382 & 3.453 & \textbf{1.274} & 1.478 & 0.640 \\
Marker & 1.522 & 1.051 & 1.915 & 1.490 & 1.093 & 1.330 & 1.311 & 1.056 & 1.341 & 1.014 & 1.383 & 2.286 & 1.314 & 1.393 & 0.363 \\
OM & 1.457 & 1.051 & 2.063 & 1.420 & 1.096 & 1.306 & 1.305 & 1.056 & 1.454 & 1.014 & 1.463 & 2.020 & 1.462 & 1.397 & 0.334 \\
BO & \textbf{1.067} & 1.050 & 1.915 & 1.297 & 1.135 & 1.324 & 1.290 & \textbf{1.000} & 1.267 & \textbf{1.003} & 1.445 & \textbf{1.110} & 1.493 & 1.261 & 0.254 \\
BO\&L & \textbf{1.067} & 1.050 & 1.811 & 1.297 & 1.108 & 1.324 & 1.296 & \textbf{1.000} & 1.224 & \textbf{1.003} & 1.394 & \textbf{1.110} & 1.300 & 1.230 & 0.220 \\
PM & 1.489 & 1.051 & 1.817 & 1.457 & \textbf{1.092} & 1.329 & 1.287 & 1.056 & 1.236 & 1.014 & \textbf{1.351} & 1.603 & 1.276 & 1.312 & 0.237 \\
LM & 1.488 & 1.051 & 1.825 & 1.457 & \textbf{1.092} & 1.329 & 1.287 & 1.056 & 1.236 & 1.014 & 1.353 & 1.692 & 1.278 & 1.320 & 0.248 \\
F\&R & 1.434 & 1.050 & 1.903 & 1.444 & \textbf{1.092} & 1.322 & 1.276 & 1.042 & 1.247 & 1.007 & 1.373 & 1.645 & 1.306 & 1.319 & 0.256 \\
OOM & 1.279 & \textbf{1.026} & 1.801 & 1.129 & 1.128 & 1.285 & \textbf{1.187} & 1.016 & 1.015 & 1.007 & 1.388 & 1.528 & 1.298 & 1.237 & 0.234 \\
\midrule
ROM-$1$ & 1.274 & \textbf{1.026} & 1.786 & 1.097 & 1.128 & 1.286 & 1.192 & 1.010 & \textbf{1.014} & 1.007 & 1.379 & 1.361 & 1.303 & 1.220 & 0.219 \\
ROM-$2$ & 1.274 & \textbf{1.026} & 1.787 & \textbf{1.095} & 1.128 & 1.287 & 1.192 & 1.008 & \textbf{1.014} & 1.007 & 1.378 & 1.294 & 1.303 & 1.215 & 0.216 \\
ROM-$4$ & 1.274 & \textbf{1.026} & 1.781 & \textbf{1.095} & 1.128 & 1.287 & 1.192 & 1.006 & \textbf{1.014} & 1.007 & 1.378 & 1.268 & 1.308 & 1.213 & 0.215 \\
ROM-$8$ & 1.274 & \textbf{1.026} & \textbf{1.779} & \textbf{1.095} & 1.128 & 1.287 & 1.192 & 1.006 & \textbf{1.014} & 1.007 & 1.378 & 1.266 & 1.308 & \textbf{1.212} & 0.214 \\
ROM-$16$ & 1.274 & \textbf{1.026} & 1.780 & \textbf{1.095} & 1.128 & 1.287 & 1.192 & 1.006 & \textbf{1.014} & 1.007 & 1.378 & 1.266 & 1.308 & \textbf{1.212} & 0.214 \\

\bottomrule
\end{tabular}}
\end{table}

\section{Sensitivity Analysis of $\tau$} \label{appendix:sensitivity_analysis}

The cache size $k$ is $16$ when evaluating on the SPEC CPU 2006 benchmark, as the associativity is set to $16$ which is standard in the literature. This limits the range of $\tau$ and its sensitivity we can evaluate. Therefore, we conduct additional experiments on the commonly used BrightKite \cite{brightkite2011} and CitiBike \cite{citibike} datasets with $k = 1000$ to better evaluate the sensitivity of $\tau$. We vary $\tau$ from $0$ to $1000$ and report the hit ratio and cost ratio in Tables \ref{tab:tau-sensitivity} and \ref{tab:tau-sensitivity-citi}.

\begin{table}[H]
\caption{Sensitivity analysis of $\tau$ on the BrightKite dataset ($k=1000$). Hit ratio (HR) and cost ratio (CR) of \textsc{RPB-OM} under varying $\tau$ and log-normal noise parameter $\sigma$.}
\label{tab:tau-sensitivity}
\vspace*{-2mm}
\begin{center}
\begin{footnotesize}
\begin{tabular}{c|ccccc|ccccc}
\toprule
& \multicolumn{5}{c|}{Hit Ratio} & \multicolumn{5}{c}{Cost Ratio $\downarrow$} \\
$\tau$ & $\sigma\!=\!5$ & $\sigma\!=\!10$ & $\sigma\!=\!20$ & $\sigma\!=\!30$ & $\sigma\!=\!50$ & $\sigma\!=\!5$ & $\sigma\!=\!10$ & $\sigma\!=\!20$ & $\sigma\!=\!30$ & $\sigma\!=\!50$ \\
\midrule
1    & 0.819 & 0.798 & 0.785 & 0.781 & 0.778 & 1.118 & 1.246 & 1.327 & 1.353 & 1.372 \\
2    & \textbf{0.820} & 0.800 & 0.788 & 0.785 & 0.781 & 1.115 & 1.236 & 1.309 & 1.330 & 1.350 \\
5    & \textbf{0.820} & \textbf{0.801} & 0.789 & 0.785 & 0.782 & \textbf{1.113} & 1.229 & 1.303 & 1.328 & 1.347 \\
10   & \textbf{0.820} & \textbf{0.801} & \textbf{0.790} & \textbf{0.786} & \textbf{0.783} & 1.114 & 1.228 & 1.300 & \textbf{1.323} & 1.343 \\
\midrule
20   & \textbf{0.820} & \textbf{0.801} & 0.789 & 0.785 & \textbf{0.783} & 1.115 & 1.228 & 1.302 & 1.325 & \textbf{1.340} \\
50   & \textbf{0.820} & \textbf{0.801} & 0.789 & 0.785 & \textbf{0.783} & 1.114 & 1.231 & 1.302 & 1.327 & 1.342 \\
100  & \textbf{0.820} & \textbf{0.801} & 0.789 & 0.784 & \textbf{0.783} & \textbf{1.113} & \textbf{1.227} & 1.303 & 1.331 & 1.343 \\
200  & \textbf{0.820} & \textbf{0.801} & 0.789 & 0.785 & 0.782 & \textbf{1.113} & 1.230 & 1.301 & 1.328 & 1.345 \\
500  & \textbf{0.820} & \textbf{0.801} & \textbf{0.790} & 0.785 & \textbf{0.783} & 1.114 & 1.230 & \textbf{1.299} & 1.329 & 1.343 \\
1000 & \textbf{0.820} & \textbf{0.801} & 0.789 & 0.785 & 0.782 & 1.115 & 1.228 & 1.302 & 1.327 & 1.344 \\
\bottomrule
\end{tabular}
\end{footnotesize}
\end{center}
\end{table}

\begin{table}[H]
\caption{Sensitivity analysis of $\tau$ on the CitiBike dataset ($k=1000$). Hit ratio (HR) and cost ratio (CR) of \textsc{RPB-OM} under varying $\tau$ and log-normal noise parameter $\sigma$.}
\label{tab:tau-sensitivity-citi}
\vspace*{-2mm}
\begin{center}
\begin{footnotesize}
\begin{tabular}{c|ccccc|ccccc}
\toprule
& \multicolumn{5}{c|}{Hit Ratio} & \multicolumn{5}{c}{Cost Ratio $\downarrow$} \\
$\tau$ & $\sigma\!=\!5$ & $\sigma\!=\!10$ & $\sigma\!=\!20$ & $\sigma\!=\!30$ & $\sigma\!=\!50$ & $\sigma\!=\!5$ & $\sigma\!=\!10$ & $\sigma\!=\!20$ & $\sigma\!=\!30$ & $\sigma\!=\!50$ \\
\midrule
1    & 0.561 & 0.450 & 0.384 & 0.362 & 0.350 & 1.251 & 1.570 & 1.756 & 1.818 & 1.854 \\
2    & 0.572 & 0.477 & 0.413 & 0.393 & 0.377 & 1.220 & 1.490 & 1.675 & 1.733 & 1.777 \\
5    & 0.573 & 0.487 & 0.432 & 0.413 & 0.397 & 1.217 & 1.463 & 1.621 & 1.674 & 1.719 \\
10   & 0.574 & 0.488 & 0.433 & \textbf{0.415} & 0.400 & 1.216 & 1.460 & 1.617 & \textbf{1.668} & 1.712 \\
\midrule
20   & 0.573 & 0.488 & \textbf{0.434} & \textbf{0.415} & 0.400 & 1.218 & 1.460 & \textbf{1.615} & 1.669 & 1.711 \\
50   & 0.573 & 0.488 & 0.433 & 0.414 & 0.400 & 1.217 & 1.459 & 1.618 & 1.672 & 1.710 \\
100  & \textbf{0.575} & \textbf{0.489} & \textbf{0.434} & \textbf{0.415} & 0.400 & \textbf{1.213} & \textbf{1.457} & \textbf{1.615} & \textbf{1.668} & 1.713 \\
200  & 0.574 & \textbf{0.489} & 0.433 & 0.414 & 0.401 & 1.216 & 1.458 & 1.618 & 1.672 & 1.708 \\
500  & 0.574 & 0.488 & 0.432 & 0.414 & 0.401 & 1.216 & 1.460 & 1.620 & 1.671 & 1.709 \\
1000 & 0.574 & 0.487 & 0.432 & 0.414 & \textbf{0.402} & 1.216 & 1.463 & 1.619 & 1.672 & \textbf{1.707} \\
\bottomrule
\end{tabular}
\end{footnotesize}
\end{center}
\end{table}

We observe consistent trends across both datasets: both the hit ratio and cost ratio improve as $\tau$ increases up to around $5$--$10$, after which the performance largely saturates. For example, on BrightKite at $\sigma = 5$, the cost ratio drops from $1.118$ ($\tau = 1$) to $1.113$ ($\tau = 5$); on CitiBike, it drops from $1.251$ ($\tau = 1$) to $1.217$ ($\tau = 5$), with negligible further improvement beyond that in both cases. This suggests that even a small value of $\tau$ (e.g., $2$--$10$) is sufficient to achieve near-optimal performance, and that \textsc{RPB-OM} is not sensitive to the precise choice of $\tau$ in practice.

\section{Hit-Credit Variant of \textsc{RPB-OM}} \label{appendix_hit_credit}

We present \textsc{RPB-OM-HC} (Hit-Credit), an alternative rule for earning prediction budget within the \textsc{RPB-OnOPT} framework. The motivation is that each cache hit indicates that previous evictions kept useful pages in cache, so the algorithm can earn budget directly from cache hits.

Specifically, on each cache hit, \textsc{RPB-OM-HC} accumulates a fractional credit of $1/(U(\omega) + 1)$. When the accumulated credit reaches~$1$, the algorithm earns one unit of budget. Unlike the gate-pass condition $U(\omega) \leq (Y+2)/e - 2$ used in \textsc{RPB-OM}, the hit-credit is not reset at $L_0$-misses; it persists across intervals and reflects the cumulative effectiveness of past evictions. Algorithm~\ref{algo_rpb_om_hc} describes \textsc{RPB-OM-HC}.

\begin{algorithm}[!h]
\caption{\textsc{RPB-OM-HC} (Hit-Credit variant)}
\label{algo_rpb_om_hc}
\begin{algorithmic}[1]
    \STATE $\omega \coloneqq$ the current work function
    \STATE $C \coloneqq$ the current cache configuration
    \STATE $B \coloneqq$ the prediction budget
    \STATE $\tau \coloneqq$ the $O(1)$ budget value reset at an $L_0$-miss
    \STATE $H \coloneqq 0$ \COMMENT{accumulated hit-credit}
    \STATE $V \coloneqq$ the eviction candidate set
    \WHILE{receiving a request to page $p$}
        \IF {$p \notin C$}
            \IF{$p \in L_0$}
                \STATE Evict a page $x \in C$ following the predictions
                \STATE $B \leftarrow \tau$
            \ELSIF{$p \in L_i, i > 0$}
                \STATE Apply \textsc{OM}'s rule to select eviction candidates: $V \leftarrow C \cap \bigcup_{l=1}^{z} L_l$, where $z \;=\; \min\Bigl\{\, j \in \{i,..,k\} \bigm| |C \cap \bigcup_{l=1}^{j} L_l| = j \Bigr\}$
                \IF {$H \geq 1$}
                    \STATE $B \leftarrow B + 1$
                    \STATE $H \leftarrow H - 1$
                \ENDIF
                \IF {$B > 0$}
                    \STATE Evict a page $x \in V$ following the predictions
                    \STATE $B \leftarrow B - 1$
                \ELSE
                    \STATE Evict a page $x \in V$ with the lowest \textsc{OM} priority
                \ENDIF
            \ENDIF
            \STATE Cache page $p$, update $C$
        \ELSE
            \STATE $H \leftarrow H + 1/(U(\omega) + 1)$
        \ENDIF
    \ENDWHILE
\end{algorithmic}
\end{algorithm}

\textsc{RPB-OM-HC} retains the same theoretical guarantees as \textsc{RPB-OM}. In the robustness proof of Theorem~\ref{theorem_rpb_om_robustness}, the inequality $\Delta cost + \Delta\Phi \leq 0$ at every lazy adversary request amortizes each unit of earned budget against \textsc{OM}'s accumulated expected cost. Under the hit-credit rule, each cache hit accumulates $1/(U(\omega)+1)$, which is exactly the per-step lower bound on \textsc{OM}'s expected cost established in Lemma~\ref{lemma:om_potential}. This earns budget against \textsc{OM}'s expected cost in the same spirit as \textsc{RPB-OM}'s gate-pass rule, so the same inequality continues to hold. The same argument extends to the smoothness proof in Theorem~\ref{theorem_rpb_om_smoothness}. Hence \textsc{RPB-OM-HC} is also $1$-consistent, $(H_k + O(1))$-robust, and $O(1 + \log(\eta_1/cost(\textsc{OPT})))$-smooth.

Table~\ref{tab:hc-comparison} compares \textsc{RPB-OM-HC} against \textsc{RPB-OM} across values of $\tau$ on the SPEC CPU 2006 benchmark. We abbreviate \textsc{RPB-OM} with $\tau = x$ as ROM-$x$, and \textsc{RPB-OM-HC} with $\tau = x$ as HC-$x$.

\begin{table}[H]
\centering
\caption{Mean Cost Ratio of \textsc{RPB-OM} and \textsc{RPB-OM-HC} across 13 SPEC CPU 2006 datasets.}
\label{tab:hc-comparison}
\begin{small}
\begin{tabular}{c|rr|rr|rr|rr}
\toprule
& \multicolumn{4}{c|}{PLECO} & \multicolumn{4}{c}{POPU} \\
& \multicolumn{2}{c|}{Hit Ratio (\%)} & \multicolumn{2}{c|}{Cost Ratio $\downarrow$} & \multicolumn{2}{c|}{Hit Ratio (\%)} & \multicolumn{2}{c}{Cost Ratio $\downarrow$} \\
$\tau$ & ROM & HC & ROM & HC & ROM & HC & ROM & HC \\
\midrule
1  & 22.75 & 22.69 & 1.249 & 1.252 & 24.34 & 24.54 & 1.220 & 1.214 \\
2  & 22.70 & 22.67 & 1.252 & 1.253 & 24.49 & 24.55 & 1.215 & 1.213 \\
4  & 22.64 & 22.63 & 1.254 & 1.254 & 24.55 & 24.56 & 1.213 & 1.212 \\
8  & 22.62 & 22.62 & 1.254 & 1.254 & 24.55 & 24.55 & 1.212 & 1.212 \\
16 & 22.62 & 22.62 & 1.254 & 1.254 & 24.55 & 24.55 & 1.212 & 1.212 \\
\bottomrule
\end{tabular}
\end{small}
\end{table}

With the POPU predictor, \textsc{RPB-OM-HC} outperforms \textsc{RPB-OM} at small $\tau$: at $\tau = 1$, HC achieves a mean cost ratio of $1.214$, comparable to ROM at $\tau = 2$. This indicates that the hit-credit rule earns roughly one additional unit of budget on average from accumulated cache hits. As $\tau$ grows, the $L_0$-reset budget alone suffices and HC's advantage diminishes.

With the PLECO predictor, the hit-credit rule does not improve over \textsc{RPB-OM}, consistent with PLECO's weaker prediction quality, since additional budget does not help when predictions are inaccurate.

\end{document}